\DeclareMathOperator{\arctanh}{arctanh}
\DeclareMathOperator{\Id}{Id}
\DeclareMathOperator{\ud}{d}
\newcommand{\Z}{\mathbb{Z}}
\newcommand{\R}{\mathbb{R}}
\newcommand{\dP}{\mathrm{d}P}
\newcommand{\lequiv}{\stackrel{\text{t.d.}}{\equiv}}
\renewcommand{\vec}[1]{\mathbf{#1}}
\renewcommand{\tilde}{\widetilde}
\renewcommand{\imath}{\mathrm{i}}
\newcommand\lagrangeprime[1]{^{%
\ifcase#1 \or\prime\or\prime\prime\or\prime\prime\prime\else\mathrm{\romannumeral #1}\fi}}
\numberwithin{equation}{section}
\renewcommand{\pdv}[2]{\begingroup 
\@tempswafalse\toks@={}\count@=\z@ 
\@for\next:=#2\do 
{\expandafter\check@var\next\@nil
 \advance\count@\der@exp 
 \if@tempswa 
   \toks@=\expandafter{\the\toks@\,}%
 \else 
   \@tempswatrue 
 \fi 
 \toks@=\expandafter{\the\expandafter\toks@\expandafter\partial\der@var}}%
\frac{\partial\ifnum\count@=\@ne\else^{\number\count@}\fi#1}{\the\toks@}%
\endgroup} 
\def\check@var{\@ifstar{\mult@var}{\one@var}} 
\def\mult@var#1#2\@nil{\def\der@var{#2^{#1}}\def\der@exp{#1}} 
\def\one@var#1\@nil{\def\der@var{#1}\chardef\der@exp\@ne} 
\theoremstyle{plain}
\newtheorem{theorem}{Theorem}
\newtheorem{corollary}[theorem]{Corollary}
\newtheorem{lemma}[theorem]{Lemma}
\theoremstyle{definition}
\theoremstyle{remark}
\newtheorem{remark}{Remark}
\theoremstyle{remark}
\newtheorem{example}{Example}
\theoremstyle{definition}
\newtheorem*{problem*}{Problem}
\theoremstyle{definition}
\newtheorem*{conjecture*}{Conjecture}
\begin{document}
\title[Lagrangians and integrability for add. 4th-order difference eqs]{Lagrangians and integrability for additive fourth-order difference equations}

\author{Giorgio Gubbiotti}
\address{School  of  Mathematics  and  Statistics  F07,  The  University  of  Sydney,  NSW  2006, Australia}
\email{giorgio.gubbiotti@sydney.edu.au}
\subjclass[2010]{37K10; 39A10; 70S05}

\date{\today}

\begin{abstract}
    We use a recently found method to characterise all the
    invertible fourth-order difference equations linear in the
    extremal values based on the existence of a discrete Lagrangian.
    We also give some result on the integrability properties of
    the obtained family and we put it in relation with known
    classifications.
    Finally, we discuss the continuum limits of the integrable cases.
\end{abstract}

\maketitle

\section{Introduction}

Discrete equations attracted the interest of many scientists
during the past decades for several reason, 
which span from philosophical to purely practical.
For instance, several modern theory of physics led to hypothesis 
that the nature of space-time itself at very small scales, the so-called
Planck length and Planck time, is discrete.
From this assumption it follows  that discrete
systems are actually at the very foundation of physical sciences,
see \cite{Hagar2014book} for a complete discussion and perspective
on this subject.
On the other hand, discrete systems often appears in applied
sciences as tools to investigate numerically equations whose
closed form solution is not available.
In particular, discrete equations are related to finite
difference methods for solving ordinary and partial differential
equations \cite{NumericalRecipes3rd}.
All these considerations greatly stimulated the theoretical study 
of discrete systems from different points of view and perspective,
see \cite{HietarintaJoshiNijhoff2016,Elaydi2005}.

In this paper we will deal \emph{fourth-order difference equations}, that
is, functional equations for an unknown sequence $\left\{ x_{n} \right\}$
where the $x_{n+2}$ element is expressible in term of
the previous $x_{n+i}$, $i=-2,\dots,1$.
That is a fourth-order difference equation is a relation of the form:
\begin{equation}
    x_{n+2} = F\left( x_{n+1},x_{n},x_{n-1},x_{n-2} \right).
    \label{eq:rec4}
\end{equation}
Such kind of functional equations are also called \emph{recurrence relations}
of order four.
A fourth-order difference equations is called \emph{invertible} if it is
possible to solve equation \eqref{eq:rec4} in a unique way with respect
to $x_{n-2}$.
\begin{equation}
    x_{n-2} = \widetilde{F}\left( x_{n+2},x_{n+1},x_{n},x_{n-1} \right).
    \label{eq:rec4inv}
\end{equation}

In this paper we study fourth-order difference equations
which are \emph{variational}, that it they arise as extremal values
of a \emph{variational principle}.
Variational principles are one of the most powerful tools in Mathematical Physics
since Euler and Lagrange,.
The branch of mathematics studying variational principles is called
\emph{calculus of variations} and it played a fundamental r\^ole in the
development of  of theoretical mechanics \cite{Whittaker,Gold2002,LandauMech}.
Beside this, the variational principles are ubiquitous in mathematics.
To name a few, variational principles are of great help in 
in the solution of isoperimetrical problems and in the study of minimal
surfaces.
For instance, Field medalist J. Douglass won his prize for his
seminal work on minimal surfaces \cite{Douglas1931}.
For a more complete outlook on the calculus of variations, its scopes
and its applications we refer to the standard textbook on the
subject \cite{GelfandFomin1963book}.

In a recent paper \cite{Gubbiotti_dcov} we solved the 
\emph{inverse problem of calculus of variations} for difference
equations of arbitrary order $2k$ with $k>1$.
That is, in \cite{Gubbiotti_dcov} we gave a list of conditions,
expressed by a system of linear partial differential equations,
which allow us to determine whether or not a given difference
equations of arbitrary order $2k$ with $k>1$ arises from a 
variational principle.

In this paper we will use the conditions derived in \cite{Gubbiotti_dcov}
to construct the most general \emph{additive fourth-order difference equation}
admitting a Lagrangian.
An additive fourth-order difference equation is a difference equation of
the form \eqref{eq:rec4} such that it is linear also in $x_{n-2}$, that is
it has the following form:
\begin{equation}
    x_{n+2} = f\left( x_{n+1},x_{n},x_{n-1} \right) x_{n-2}+h\left( x_{n+1},x_{n},x_{n-1} \right)
    \label{eq:rec4add}
\end{equation}
Additive difference equations are trivially invertible, with inverse given
by:
\begin{equation}
    x_{n-2} = \frac{ x_{n+2}-h\left( x_{n+1},x_{n},x_{n-1} \right)}{%
    f\left( x_{n+1},x_{n},x_{n-1} \right)}.
    \label{eq:rec4addinv}
\end{equation}
The interest in this kind of equations lies in the fact that they are
the most natural generalisation of additive second-order differential
equations:
\begin{equation}
    x_{n+1} + x_{n-1} = f\left( x_{n} \right),
    \label{eq:add2}
\end{equation}
a well-known class of difference equations, including very famous
examples like the McMillan equation \cite{McMillan1971} and the additive
QRT maps \cite{QRT1988,QRT1989}.
More recently several examples of equations in the form \eqref{eq:rec4}
appeared in \cite{JoshiViallet2017,GJTV_class,GJTV_sanya}.
Especially in \cite{GJTV_class} was given a classification
of fourth-order difference equations based on the existence of two
independent invariant within a given class.
It turned out that all the examples presented in \cite{GJTV_class}
fall in the class \eqref{eq:rec4add}, and the variational structure
played an important r\^ole in understanding the regularity properties
of these examples.

In this paper, we will present the most general equations of the form \eqref{eq:rec4add} 
admitting a Lagrangian. 
Then, we will discuss the regularity properties of a subclass,
in order to compare our results with previously known results \cite{GJTV_class}.
We will produce a very general family of \emph{Liouville integrable}
fourth-order difference equations.
Using \emph{linear transformations} this family of equations will be 
split in five inequivalent \emph{canonical forms}.
We will identify these canonical forms with known equations,
reinterpreting the results of \cite{GJTV_class} from
the point of view of variational structures.
We underline that for all the canonical forms the existence of a Lagrangian 
is the key element in proving Liouville integrability.

The plan of the paper is following: in Section \ref{sec:dlagr} we introduce
the basis of the discrete calculus of variations and integrability for 
scalar difference equations \cite{Logan1973,Veselov1991,Bruschietal1991}.
In particular we will present in Theorem \ref{thm:elexist} a particular 
case of the general results given in \cite{Gubbiotti_dcov}
which is a necessary and sufficient condition for the existence
of a Lagrangian for fourth-order invertible difference equations.
Moreover, we will discuss the crucial relationship between Lagrangian
structures and integrability 
\cite{Maeda1987,Veselov1991,Bruschietal1991,TranvanderKampQusipel2016}.
In Section \ref{sec:main} we present our main result in Theorem \ref{thm:structure}.
Theorem \ref{thm:structure} completely characterises
fourth-order additive difference equations \eqref{eq:rec4add} admitting 
a Lagrangian.
Then, we present and algorithmic test to find the Lagrangian of
an additive fourth-order different equation derived from Theorem 
\ref{thm:structure} and some examples.
In section \ref{sec:int} we present a subclass of variational
equations depending on seven parameters which possess two invariants.
We discuss how to split this general family to five canonical forms
depending on three essential parameters each and prove their Liouville integrability
using the Lagrangian structure.
Our results are summarised in Theorem \ref{thm:intmultinlin}.
Then in section \ref{sec:contlim} we present the continuum limits
of the Liouville integrable canonical equations, their Lagrangian
and invariants (first integrals).
Finally in Section \ref{sec:concl} we give some conclusions
and outlook.

\section{Background material}
\label{sec:dlagr}

In this section we introduce the basic notions of Lagrangians
for difference equations of even order $2k$, that is a 
functional equation for an unknown sequence $\left\{ x_{n} \right\}$
of the following form:
\begin{equation}
    x_{n+k} = F\left( x_{n+k-1},x_{n+k-2},\dots,x_{n-k} \right), k\geq 1,
    \label{eq:recgen}
\end{equation}
and their integrability properties.
The Lagrangian formulation for difference equation was discussed
already in \cite{Logan1973}, while later accounts of this theory 
can be found in \cite{Veselov1991,Bruschietal1991} and more recently
in \cite{TranvanderKampQusipel2016,HydonMansfield2004}.
In \cite{Gubbiotti_dcov} was introduced an algorithmic method to prove whether or
not a given even-order difference equation \eqref{eq:recgen} with $k>1$.
We are going to present the main result of that paper
in Theorem \ref{thm:elexist} in our case of interest, that is the
case of invertible fourth-order difference equations.
We will then discuss the notion of Liouville integrability for
difference equations \cite{Bruschietal1991,Maeda1987,Veselov1991,CapelSahadevan2001}.
As this is not the unique definition of integrability for discrete
systems we remark that a broader discussion on integrability for discrete systems
can be found in \cite{HietarintaBook,GubbiottiASIDE16,HietarintaJoshiNijhoff2016}.

We refer the interested reader to the cited papers and books 
and references therein for a complete overview on the topics.

\subsection{Discrete Lagrangians}


A discrete action of order $k$ is a linear functional of the form:
\begin{equation}
    S\left[ x_{n} \right] = \sum_{n\in\Z} L_{n}\left( x_{n+k},x_{n+k-1},\dots,x_{n} \right).
    \label{eq:daction}
\end{equation}
The summand function 
\begin{equation}
    L_{n} = L_{n}\left( x_{n+k},x_{n+k-1},\dots,x_{n} \right)
    \label{eq:lagrgen}
\end{equation}
is called a \emph{discrete Lagrangian}.
We define an \emph{admissible variation} to be the sequence
\begin{equation}
    x_{n}\left( \varepsilon \right) = x_{n} + \varepsilon h_{n},
    \label{eq:var}
\end{equation}
where $x_{n}$ is an extremal point of the discrete
action \eqref{eq:daction} and $h_{n}$ are well-behaved
sequences as $\abs{n}\to\infty$.
The condition of having an extremal point is then given
by Rolle's theorem and is that:
\begin{equation}
    \left.\dv{S\left[ x_{n}\left( \varepsilon \right) \right]}{\varepsilon}\right|_{\varepsilon=0}
    \equiv 0.
    \label{eq:stat}
\end{equation}
Working out the condition \eqref{eq:stat}
we obtain that the extremal points of the discrete action
\eqref{eq:daction} must satisfy the following difference
equation of order $2k$:
\begin{equation}
    \sum_{l=0}^{k} \pdv{L_{n-l}}{x_{n}}\left( x_{n+k-l},x_{n+k-1-l},\dots,x_{n-l} \right) = 0.
    \label{eq:elgen}
\end{equation}
This equation is known as the 
\emph{discrete Euler--Lagrange equation}.

\begin{remark}
    We underline that in formula \eqref{eq:lagrgen}
    we allow the discrete Lagrangian to depend explicitly on $n$.
    Indeed, \emph{an autonomous difference equation \eqref{eq:recgen} can arise
    even from non-autonomous Lagrangian}.
    A simple example of this occurrence is the following:
    \begin{equation}
        L_{n} = \lambda^{-n} \left( x_{n}x_{n+1}+\frac{\kappa}{2}x_{n}^{2} \right).
        \label{eq:Ldampedho}
    \end{equation}
    The Euler--Lagrange equation \eqref{eq:elgen} of \eqref{eq:Ldampedho}
    is:
    \begin{equation}
        x_{n+1} + \kappa x_{n} +\lambda x_{n-1} = 0.
        \label{eq:dampedho}
    \end{equation}
    \label{rem:nonautL}
    Clearly equation \eqref{eq:dampedho} is autonomous, while the
    discrete Lagrangian \eqref{eq:Ldampedho} is not.
    In general, given an autonomous Lagrangian $L$ the non-autonomous
    Lagrangian 
    \begin{equation}
        L_{n} = \lambda^{-n} L\left( x_{n+k},\dots,x_{n} \right).
        \label{eq:Lnagen}
    \end{equation}
    always give raise to autonomous Euler--Lagrange equations.
    \label{rem:nautlagr}
\end{remark}

The left hand side of the discrete Euler--Lagrange equations \eqref{eq:elgen}
is sometimes called the \emph{variational derivative} of the
action \eqref{eq:daction} and denoted by $\delta S/\delta x_{n}$.
A discrete Lagrangian is called 
\emph{normal} if
\begin{equation}
    \pdv{L_{n}}{x_{n},x_{n+k}} \neq 0.
    \label{eq:lagrnormal}
\end{equation}
The discrete Euler--Lagrange equation of a normal discrete Lagrangian are of order
$2k$ whereas a non-normal discrete Lagrangian can give 
rise to discrete Euler--Lagrange equations of order at most $2k-2$
\cite{Bruschietal1991,TranvanderKampQusipel2016,Gubbiotti_dcov}.
For this reason non-normal discrete Lagrangians are degenerate and, from now on, 
we will always consider to deal with normal discrete Lagrangians.

If two discrete Lagrangians $L_{n,1}$ and $L_{n,2}$ differ by a total 
difference, i.e. there exists a function $f_{n}=f_{n}\left( x_{n+k-1},\dots,x_{n} \right)$
such that:
\begin{equation}
    L_{n,2} = L_{n,1} + \left( T_{n}-\Id \right)f_{n}\left( x_{n+k-1},\dots,x_{n} \right),
    \label{eq:lagrdiff}
\end{equation}
then they define the same discrete Euler--Lagrange equations.
This result can either be proved directly, or put into the wider
context of variational complexes.
We refer to \cite{Gubbiotti_dcov} and \cite{HydonMansfield2004,Kupershmidt1985book}
respectively for a discussion of this approaches. 
So, we can introduce the notion of equivalence on discrete Lagrangians 
as follows: two discrete Lagrangians $L_{n,1}$ and $L_{n,2}$ are called
equivalent, denoted by $\lequiv$, if they differ for a
total difference, i.e.:
\begin{equation}
    L_{1} \lequiv L_{2} \iff 
    L_{n,2} = L_{n,1} + \left( T_{n} -\Id\right) f\left( x_{n+k-1},\dots,x_{n} \right).
    \label{eq:lequiv}
\end{equation}
The relation $\lequiv$ is an equivalence relation.
That is, it possesses the following properties:
\label{prop:lequiv}
\begin{description}
    \item[Reflexivity] $L_{n}\lequiv L_{n}$.
    \item[Symmetry] If $L_{n,1}\lequiv L_{n,2}$ then $L_{n,2}\lequiv L_{n,1}$.
    \item[Transitivity] If $L_{n,1}\lequiv L_{n,2}$ and $L_{n,2}\lequiv L_{n,3}$
        then $L_{n,1}\lequiv L_{n,3}$.
\end{description}
From the above observation we obtain that equivalent Lagrangians
give raise to the same Euler--Lagrange equations \eqref{eq:elgen}.

The existence of the equivalence relation \eqref{eq:lequiv} can be interpreted
by saying that Lagrangians are not functions, but rather they are 
\emph{equivalence classes of functions}.
This fact is useful in practical application, as often choosing
properly the representative helps in simplifying the computations.

Finally, say that a discrete Lagrangian \eqref{eq:lagrgen} is a discrete Lagrangian 
for the difference equation \eqref{eq:recgen} if its discrete Euler--Lagrange equations
\eqref{eq:elgen} coincide with \eqref{eq:recgen}.
We will say that a difference equation admitting
a Lagrangian is \emph{variational}.

We now state a theorem which gives us a necessary and sufficient condition
for the existence of discrete Lagrangian in the case $k=2$.

\begin{theorem}[Gubbiotti \cite{Gubbiotti_dcov}]
    Let us assume we are given an invertible fourth-order difference 
    equation represented by a pair of equations of the form
    \eqref{eq:rec4} and \eqref{eq:rec4inv}.
    Then such pair of equations admits a Lagrangian \eqref{eq:lagrgen}
    if and only if the following partial difference equations are
    satisfied:
    \begin{subequations}
        \begin{align}
            \pdv{}{ x_{n-2}}
            \left\{
                \left(\pdv{F}{x_{n-2}}\right)^{-1}\vec{A}^{+}
                \left[\pdv{L_{n-2}}{x_{n}}\left( x_{n},x_{n-1},x_{n-2} \right)\right]
            \right\}
            =0,
            \label{eq:lagrcond}
            \\
            \pdv{}{ x_{n+2}}
            \left\{
                \left(\pdv{\widetilde{F}}{x_{n+2}}\right)^{-1}\vec{A}^{-}
                \left[\pdv{L_{n}}{x_{n}}\left( x_{n+2},x_{n+1},x_{n} \right)\right]
            \right\}
            =0,
            \label{eq:lagrcondinv}
        \end{align}
        \label{eq:lagrcondfb}
    \end{subequations}
    where:
    \begin{subequations}
        \begin{align}
        \vec{A}^{+} &= \pdv{F}{x_{n-2}}\pdv{}{x_{n-1}}- 
        \pdv{F}{x_{n-1}}\pdv{}{x_{n-2}},
        \label{eq:annihil}
        \\
        \vec{A}^{-} &= \pdv{\widetilde{F}}{x_{n+2}}\pdv{}{x_{n+1}}- 
        \pdv{\widetilde{F}}{x_{n+1}}\pdv{}{x_{n+2}},
        \label{eq:annihil2}
        \end{align}
        \label{eq:annihilop}
    \end{subequations}
    are two linear differential operators called 
    \emph{forward annihilation operator} and \emph{backward annihilation operator} 
    respectively.
    \label{thm:elexist}
\end{theorem}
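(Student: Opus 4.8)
The plan is to work directly with the discrete Euler--Lagrange equation \eqref{eq:elgen} specialised to $k=2$,
\begin{equation*}
    \pdv{L_{n}}{x_{n}}\left( x_{n+2},x_{n+1},x_{n} \right)
    +\pdv{L_{n-1}}{x_{n}}\left( x_{n+1},x_{n},x_{n-1} \right)
    +\pdv{L_{n-2}}{x_{n}}\left( x_{n},x_{n-1},x_{n-2} \right)=0,
\end{equation*}
whose three summands I abbreviate by $E_{n}$, $F_{n}$ and $G_{n}$; these are the three slot-derivatives of one and the same Lagrangian, evaluated at shifted arguments. Requiring that this equation coincide with \eqref{eq:rec4} amounts to substituting $x_{n+2}=F$ and demanding that
\begin{equation*}
    E_{n}\left( F,x_{n+1},x_{n} \right)+F_{n}\left( x_{n+1},x_{n},x_{n-1} \right)+G_{n}\left( x_{n},x_{n-1},x_{n-2} \right)=0
\end{equation*}
hold identically in $\left( x_{n+1},x_{n},x_{n-1},x_{n-2} \right)$; I call this the master identity. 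The whole proof consists in translating this identity, together with its counterpart for the inverse map \eqref{eq:rec4inv}, into the two closed conditions \eqref{eq:lagrcondfb}.

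For necessity I would apply the forward annihilation operator \eqref{eq:annihil} to the master identity. The point of $\vec{A}^{+}$ is that it kills any function depending on $x_{n-1},x_{n-2}$ only through $F$: a direct computation gives $\vec{A}^{+}\left[ E_{n}\left( F,x_{n+1},x_{n} \right) \right]=0$. Since $F_{n}$ is free of $x_{n-2}$, the middle term contributes only $\pdv{F}{x_{n-2}}\,\pdv{F_{n}}{x_{n-1}}$, so the master identity collapses to
\begin{equation*}
    \pdv{F}{x_{n-2}}\pdv{F_{n}}{x_{n-1}}+\vec{A}^{+}\left[ G_{n} \right]=0,
\end{equation*}
that is $\pdv{F_{n}}{x_{n-1}}=-\left( \pdv{F}{x_{n-2}} \right)^{-1}\vec{A}^{+}\left[ G_{n} \right]$. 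Because the left-hand side manifestly does not depend on $x_{n-2}$, differentiating in $x_{n-2}$ yields exactly \eqref{eq:lagrcond}. Running the same argument on the inverse equation \eqref{eq:rec4inv}, with $\widetilde{F}$ and the backward operator \eqref{eq:annihil2} in place of $F$ and $\vec{A}^{+}$ --- now it is the term $G_{n}$ that is annihilated and $E_{n}$ that survives --- produces \eqref{eq:lagrcondinv}. Here the normality condition \eqref{eq:lagrnormal}, i.e. $\pdv{E_{n}}{x_{n+2}}\neq 0$, is what guarantees that the Euler--Lagrange equation genuinely has order four and can be solved for $x_{n+2}$.

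For sufficiency I would reverse the construction, now exploiting that $E_{n},F_{n},G_{n}$ are the three slot-derivatives of one function $L$. Assuming \eqref{eq:lagrcond}, the quantity $-\left( \pdv{F}{x_{n-2}} \right)^{-1}\vec{A}^{+}\left[ G_{n} \right]$ depends only on $\left( x_{n+1},x_{n},x_{n-1} \right)$ and can be integrated in $x_{n-1}$ to recover $F_{n}$ up to an additive function of $\left( x_{n+1},x_{n} \right)$, while condition \eqref{eq:lagrcondinv} plays the symmetric role for $E_{n}$. The task is then to show that these reconstructed slot-derivatives are mutually compatible --- that the relevant mixed second derivatives of the putative $L$ agree --- so that they integrate to a single, globally defined Lagrangian whose Euler--Lagrange equation reproduces the master identity, and hence \eqref{eq:rec4}--\eqref{eq:rec4inv}, the surviving ambiguity being exactly the total-difference equivalence \eqref{eq:lequiv}. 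I expect this compatibility step to be the main obstacle: each of \eqref{eq:lagrcond} and \eqref{eq:lagrcondinv} is produced by a single application of an annihilation operator and so constrains only one slot-derivative at a time, so the real content is that the forward and backward data, which a priori could come from three unrelated functions, are forced by the common-Lagrangian structure to fit together into one $L$.
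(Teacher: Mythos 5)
First, a remark on the comparison itself: this paper does not prove Theorem \ref{thm:elexist} at all --- it is imported from \cite{Gubbiotti_dcov} (note the attribution in the theorem header), and the surrounding text only explains the annihilation property of $\vec{A}^{\pm}$ in Remark \ref{rem:annihilop}. So your proposal can only be judged on its own merits, not against an internal proof. On those merits, your necessity argument is correct and is exactly the mechanism the paper alludes to: substituting $x_{n+2}=F$ into the Euler--Lagrange equation \eqref{eq:elgen} with $k=2$, the operator $\vec{A}^{+}$ kills the composed term $E_{n}\left(F,x_{n+1},x_{n}\right)$, maps the middle term $F_{n}$ to $\pdv{F}{x_{n-2}}\,\pdv{F_{n}}{x_{n-1}}$, and leaves $\vec{A}^{+}\left[G_{n}\right]$; solving for $\pdv{F_{n}}{x_{n-1}}$ and using that it cannot depend on $x_{n-2}$ gives \eqref{eq:lagrcond}, and the mirror argument on the inverse equation gives \eqref{eq:lagrcondinv}. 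The roles of invertibility (to divide by $\pdv{F}{x_{n-2}}$) and of normality are also placed correctly.

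The genuine gap is in the sufficiency half, and it is not merely the ``compatibility step'' you defer: the statement you set out to prove there is false as you formulate it. Conditions \eqref{eq:lagrcondfb} see only the two extreme slot-derivatives of $L$, namely $\pdv{L_{n-2}}{x_{n}}\left( x_{n},x_{n-1},x_{n-2} \right)$ (first slot) and $\pdv{L_{n}}{x_{n}}\left( x_{n+2},x_{n+1},x_{n} \right)$ (last slot); they are completely blind to how $L$ depends on its middle slot. Consequently, if $L_{n}$ is a genuine normal Lagrangian for \eqref{eq:rec4} and $\Psi$ is any smooth non-constant function, then $L_{n}+\Psi\left( x_{n+1} \right)$ is again normal and again satisfies both conditions, yet its Euler--Lagrange equation differs from the original one by the extra term $\Psi'\left( x_{n} \right)$, so it is \emph{not} a Lagrangian for \eqref{eq:rec4}. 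Hence a normal solution of \eqref{eq:lagrcondfb} need not be a Lagrangian for the given equation, and the ``if'' direction can only mean something more structured: whenever the system \eqref{eq:lagrcondfb} is solvable, the forward and backward data can be \emph{completed} into a Lagrangian --- by integrating to recover the middle-slot derivative, fixing the kernel freedom of $\vec{A}^{\pm}$ (which is large, since $\vec{A}^{+}$ annihilates every function of $\left( F,x_{n+1},x_{n} \right)$, so the conditions are strictly weaker than your master identity), and verifying the full Euler--Lagrange identity rather than its annihilated images. That reconstruction-and-closure argument is the actual content of the cited result; it is also how the present paper uses the theorem in practice, since in the proof of Theorem \ref{thm:structure} the author, after exploiting \eqref{eq:lagrcondfb}, still imposes the Euler--Lagrange equation explicitly (see \eqref{eq:cc2} onwards). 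Your proposal names this obstacle but does not supply the argument, so the sufficiency direction remains unproven.
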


\begin{remark} 
    The forward annihilation operator defined by equation \eqref{eq:annihil}
    have this name because \emph{for every} functions of the form
    \begin{equation}
        G=G\left(F\left( x_{n+1},x_{n},x_{n-1},x_{n-2} \right) ,x_{n+1},x_{n} \right).
        \label{eq:Gdeff}
    \end{equation}
    we have $\vec{A}^{+}\left( G \right) \equiv 0$.
    In the same way the backward annihilation operator \eqref{eq:annihil2}
    has this name because \emph{for every} functions of the form
    \begin{equation}
        \tilde{G}=
        \tilde{G}\left(x_{n} ,x_{n-1},\tilde{F}\left( x_{n+2},x_{n+1},x_{n},x_{n-1} \right) \right).
        \label{eq:Gdefb}
    \end{equation}
    we have $\vec{A}^{-}\left( G \right) \equiv 0$.
    In \cite{Gubbiotti_dcov} it was proved that the operators 
    \eqref{eq:annihilop} are the most general linear differential 
    operators with such properties.
    Every other linear differential operators with the same properties 
    are their multiple.
    In \cite{Gubbiotti_dcov} it was also noted that the annihilation
    operators \eqref{eq:annihilop} are the one-dimensional analog of 
    the operators $Y^{l}$ and $Z^{-l}$, for $l\in\Z$, defined in 
    \cite{LeviYamilov2011,Garifullin2012,GarifullinYamilov2012}.
    These operators annihilates all the dependent shifts of
    a quad equation, while $\vec{A}^{\pm}$ annihilates the dependent
    shifts of a scalar difference equation.
    Moreover, these operators have application also in the theory of Darboux
    integrable partial difference equations \cite{AdlerStartsev1999}.
    In \cite{GarifullinYamilov2012,GarifullinYamilov2015,GSY_DarbouxI}
    they were they where used to find the first integrals of some classes
    of partial difference equations.
    \label{eq:remsym}
    \label{rem:annihilop}
\end{remark}

\subsection{Integrability of difference equations}
\label{sse:int}

Integrability both for continuous and discrete systems can be defined
in different ways, see \cite{whatisintegrability1991,HietarintaBook}
for a complete discussion of the continuous and the discrete case.
After this section, when we will talk about integrability we will
mean \emph{Lioville integrability for variational difference equations}.

To better appreciate the true meaning and power of
Liouville integrability we start from the case of difference
equations of arbitrary order $N$:
\begin{equation}
    x_{n+N} = Q\left( x_{n+N-1},x_{n+N-2},\dots,x_{n} \right),
    \label{eq:ordN}
\end{equation}
where $N$ is not necessarily even and no assumption on variational
structures is made.

Let us assume that there exists $N-1$ functionally independent functions
\begin{equation}
    I_{l} = I_{l} \left( x_{n+N-1},x_{n+N-2},\dots,x_{n} \right),
    \quad
    l=1,\dots,N-1,
    \label{eq:Il}
\end{equation}
called \emph{invariants}, such that
\begin{equation}
    I_{l} \left( x_{n+N},x_{n+N-1},\dots,x_{n+1} \right) 
    =
    I_{l} \left( x_{n+N-1},x_{n+N-2},\dots,x_{n} \right) 
    \label{eq:invdef}
\end{equation}
on the solutions of equation \eqref{eq:ordN}.
Then in principle it is possible
to reduce the difference equation \eqref{eq:ordN} to first order one
by solving the relations:
\begin{equation}
    I_{j} = \kappa_{j},
    \label{eq:intred}
\end{equation}
where $\kappa_{j}$ are the value of the invariants on a set of initial
data.
In such case we say that the difference equation \eqref{eq:ordN}
\emph{integrable}.

\begin{remark}
    We underline that in this subsection both the difference equation 
    \eqref{eq:ordN} and the invariants \eqref{eq:Il} are 
    \emph{assumed to be autonomous}.
    Then, upon translation it is possible to consider any
    \emph{arbitrary choice of $N$ consecutive indices in their definition}.
    We will denote such choice by
    $\Lambda_{N}=\left\{ n_{0}+N-1,\dots,n_{0} \right\}$.
    Observe that $\abs{\Lambda_{N}}=N$.
    \label{rem:indices}
\end{remark}


This definition of integrability is very general.
If some additional structure is present, then the number of invariants
needed for integrability can be significantly lowered.
A special, but relevant case is the one of Poisson difference equations.
Consider the space of functions $f=f( \left\{ x_{n+j} |j\in\Lambda_{N} \right\} )$.
A bilinear operation $\left\{\,,\,\right\}$ on such space satisfying the following
conditions is called a \emph{Poisson bracket} \cite{CapelSahadevan2001, Olver1986}:
\begin{description}
    \item[Skew-symmetry] $\left\{f,g  \right\}=-\left\{ g,f \right\}$.
    \item[Jacobi ideantity] 
        $\left\{ \left\{ f,g \right\},h \right\}+\left\{ \left\{ g,h \right\},f  \right\}+
        \left\{ \left\{ h,f \right\},g \right\}=0$.
\end{description}
A Poisson bracket is completely specified by its action on
the basic functions $x_{n+j}$ for $j\in\Lambda_{N}$, and extended
to generic functions $f$ and $g$ by:
\begin{equation}
    \left\{f,g  \right\}
    =
    \sum_{i,j\in\Lambda_{N}}\pdv{f}{x_{n+i}}\pdv{g}{x_{n+j}} \left\{ x_{n+i},x_{n+j} \right\}.
    \label{eq:deter}
\end{equation}
Two functions $f$ and $g$ such that $\left\{ f,g \right\}=0$ are said to
be \emph{in involution}.
The dimension of the image of a Poisson bracket is called its \emph{rank}.
From skew-symmetry it follows that the rank is an even number, $2r\leq N$.

A difference equation is said to be a Poisson difference equation if
it preserves a Poisson bracket, that is if
\begin{equation}
    \left\{ x_{n+i}',x_{n+j}' \right\} = \left\{ x_{n+i},x_{n+j} \right\}',
    \quad i,j\in\Lambda_{N},
    \label{eq:Poissonpreserve}
\end{equation}
where
\begin{equation}
    x_{n+i}' = x_{n+i+1},\, i\in\Lambda_{N}\setminus\left\{n_{0}+N-1\right\},
    \,
    x_{n+n_{0}+N-1}'= Q\left( \left\{ x_{n+j} |j\in\Lambda_{N} \right\} \right),
    \label{eq:poissonmap}
\end{equation}
and $f\left( \left\{ x_{n+j} |j\in\Lambda_{N} \right\} \right)'=
f\left( \left\{ x_{n+j}' |j\in\Lambda_{N} \right\} \right)$.

Then we have the following characterisation of integrability for
Poisson difference equations:

\begin{theorem}[Discrete Liouville-Poisson theorem \cite{Veselov1991,Bruschietal1991,Maeda1987}]
    If an order $N$ difference equation \eqref{eq:ordN} 
    \emph{preserves a Poisson bracket} of rank $2r$ and possesses 
    \emph{$N-r$ functionally independent invariants}
    in \emph{involution with respect to this Poisson structure},
    then there exists a set of canonical coordinates terms of
    which the difference equations is linear.
    \label{thm:liouvilleintegrability}
\end{theorem}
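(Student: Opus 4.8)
The plan is to reduce the statement to the classical Liouville--Arnold construction applied leaf-by-leaf to the Poisson structure, and then to exploit the symplectic nature of the map \eqref{eq:poissonmap} to show that it acts as a translation in the resulting canonical coordinates.

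First I would decompose the phase space using the kernel of the Poisson bracket. Since the bracket has rank $2r$, the antisymmetric matrix of structure functions $\{x_{n+i},x_{n+j}\}$ has an $(N-2r)$-dimensional kernel, corresponding to $N-2r$ functionally independent \emph{Casimir functions} $C_1,\dots,C_{N-2r}$, characterised by $\{C_a,f\}=0$ for every $f$. Now $N-r$ is exactly the maximal number of functions that can be placed in involution on a Poisson manifold of rank $2r$: each symplectic leaf admits at most $r$ independent functions in involution, while the $N-2r$ Casimirs commute with everything. Hence the given maximal involutive family splits, generically, into $N-2r$ Casimirs and $r$ further invariants $I_1,\dots,I_r$ that are functionally independent modulo the Casimirs and mutually in involution. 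Because the $C_a$ belong to the preserved invariant family, the map $\phi$ of \eqref{eq:poissonmap} maps each leaf $\mathcal L=\{C_a=\gamma_a\}$ to itself. On such a generic leaf, of dimension $2r$, the Poisson bracket restricts to a nondegenerate (symplectic) structure, and $\phi$ restricts to a symplectic map since it preserves both the Casimirs and the bracket.

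Next I would carry out the action--angle construction on $\mathcal L$. The invariants $I_1,\dots,I_r$ are functionally independent and in involution on the $2r$-dimensional symplectic leaf, so their common level sets $\{I_j=\kappa_j\}$ are $r$-dimensional Lagrangian submanifolds. The associated Hamiltonian vector fields $X_{I_1},\dots,X_{I_r}$ are tangent to these level sets, are linearly independent, and commute pairwise because $\{I_i,I_j\}=0$; they therefore generate a locally free abelian group action whose orbits are precisely the level sets. This is exactly the hypothesis of the Liouville--Arnold theorem, and it furnishes canonical coordinates $(I_j,\theta_j)$ in which the symplectic form is the standard one and the $\theta_j$ are the angle parameters conjugate to the actions.

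The decisive step, and the one I expect to be the main obstacle, is to show that $\phi$ is linear in these coordinates. Here I would use that $\phi$ is symplectic on $\mathcal L$ and preserves each $I_j$: for a symplectic map one has $\phi_\ast X_{I_j}=X_{I_j\circ\phi^{-1}}=X_{I_j}$, so $\phi$ commutes with the flow of every $X_{I_j}$, hence with the entire abelian group action built above. A diffeomorphism of a level set that commutes with a transitive abelian group action must itself be a group translation; consequently $\phi$ leaves the actions fixed and shifts the angles by amounts depending only on the invariants, $\theta_j\mapsto\theta_j+\omega_j(I,C)$. In the canonical coordinates $(I_j,\theta_j)$ the map \eqref{eq:poissonmap} is therefore affine, which is the asserted linearisation. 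The delicate points are the genericity assumptions needed to split off the Casimirs and to guarantee independence of the $X_{I_j}$, together with the global topology of the level sets (tori versus cylinders); these affect the range of the angle variables but not the conclusion that the map is a translation, which holds on each leaf.
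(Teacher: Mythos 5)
The paper does not prove this theorem at all --- it is stated as a known result imported from the cited references (Veselov, Bruschi \emph{et al.}, Maeda) --- and your argument is precisely the classical discrete Liouville--Arnold proof given in those sources: split off $N-2r$ Casimir combinations of the invariants, restrict to the symplectic leaves they cut out (which the map preserves, being Poisson and invariant-preserving), construct action--angle coordinates from the remaining $r$ commuting invariants, and conclude that the map, commuting with the abelian flow generated by the $X_{I_j}$ via $\phi_{\ast}X_{I_j}=X_{I_j\circ\phi^{-1}}=X_{I_j}$, acts as a translation in the angles. Your treatment of the one delicate step --- that among the $N-r$ involutive invariants at least $N-2r$ independent combinations must be Casimirs, because an involutive family on a rank-$2r$ Poisson structure can generate at most $r$ independent Hamiltonian vector fields, so the leaf foliation is itself invariant --- is correct, and the genericity and leaf-topology caveats you flag (tori versus cylinders, local freeness of the action) are exactly the hypotheses under which the cited proofs operate.
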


A difference equation satisfying the hypotheses of Theorem \ref{thm:liouvilleintegrability}
is said to be \emph{Lioville-Poisson integrable}.
If the difference equation has even order $N=2k$ and the Poisson
bracket has \emph{full rank} $k$ it turns out that we are in a
special in which exactly \emph{half} of the invariants are needed
to be in the hypotheses of Theorem \ref{thm:liouvilleintegrability}. 
In such special case we say that the difference equation is
\emph{Liouville integrable}.

Liouville-Poisson integrability requires a ``good'' Poisson bracket.
In \cite{ByrnesHaggarQuispel1999} it was proved that there exists a 
Poisson bracket for any $N$-dimensional volume-preserving map possessing 
$N-2$ invariants.
The obtained Poisson bracket will not be, in general, of maximal rank and
at least an additional invariant is needed to apply Theorem \ref{thm:liouvilleintegrability}.

In this picture variational difference equations play a special
r\^ole, as for them it is \emph{always possible} to find a full rank 
Poisson bracket.
This is possible through a construction called 
\emph{discrete Ostrogradsky transformation} \cite{TranvanderKampQusipel2016}
and it is the content of the following theorem:

\begin{theorem}[Bruschi \emph{et al.} \cite{Bruschietal1991}]
    Assume we are given a variational difference equation of order $2k$
    \eqref{eq:recgen}
    arising from an \emph{autonomous} normal Lagrangian \eqref{eq:lagrgen}.
    Then the change of coordinates with new variables 
    $\left( \vec{q},\vec{p} \right)=\left( q_{1},\dots,q_{N},p_{1},\dots,p_{N} \right)$
     defined through the formula:
    \begin{subequations}
        \begin{align}
            q_i&=x_{n+i-1}, \quad i=1,\dots,k,
            \label{C4E:OstraTranA} 
            \\
            p_i&=T^{-1}\sum_{j=0}^{k-i}T^{-j}\pdv{L}{x_{n+i+j}}, \quad
            i=1,\dots,k,
            \label{C4E:OstraTranB}
        \end{align}    
        \label{eq:C4EOstraTran}
    \end{subequations}
    is well defined and invertible.
    Moreover, the matrix:
    \begin{equation}
        \mathcal{J} = \frac{\partial \left( x_{n+k-1},\dots,x_{n-k} \right)}{\partial \left( \vec{q},\vec{p} \right)}
        \begin{pmatrix}
            \mathbb{O}_{N} & -\mathbb{I}_{N}
            \\
            \mathbb{I}_{N} & \mathbb{O}_{N}
        \end{pmatrix}
        \frac{\partial \left( x_{n+k-1},\dots,x_{n-k} \right)}{\partial \left( \vec{q},\vec{p} \right)}^{T},
        \label{eq:Poissoncan}
    \end{equation}
    is skew-symmetric of rank $2k$ and defines 
    the following full rank Poisson bracket:
    \begin{equation}
        \left\{ x_{n+i},x_{n+j} \right\} = \mathcal{J}_{k-i,k-j},
        \quad
        i,j\in\left\{ k-1,\dots,-k \right\}.
        \label{eq:poissondefcan}
    \end{equation}
    for the variational difference equation \eqref{eq:recgen}.
    \label{thm:ostro}
\end{theorem}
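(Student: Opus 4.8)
The plan is to prove the three assertions of the statement in turn: that \eqref{eq:C4EOstraTran} is a genuine (invertible) change of coordinates, that the matrix \eqref{eq:Poissoncan} is skew-symmetric of full rank $2k$, and that the bracket \eqref{eq:poissondefcan} obeys the Poisson axioms. I would begin with invertibility, since this is the only step that actually uses the normality hypothesis \eqref{eq:lagrnormal}, and everything else follows formally from it. The positions \eqref{C4E:OstraTranA} reproduce the upper half of the phase variables verbatim, $q_i=x_{n+i-1}$, so that $x_n,\dots,x_{n+k-1}$ are read off immediately; it remains to recover the lower half $x_{n-1},\dots,x_{n-k}$ from the momenta \eqref{C4E:OstraTranB}. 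Rewriting the latter as $p_i=\sum_{m=i}^{k}T^{-(m-i+1)}\,\partial L/\partial x_{n+m}$, I would note that, since $\partial L/\partial x_{n+m}$ depends only on $x_n,\dots,x_{n+k}$, the summand with the deepest negative shift is the one with $m=k$, namely $T^{-(k-i+1)}\partial L/\partial x_{n+k}$; hence $p_i$ involves $x_{n-(k-i+1)}$ as its lowest-indexed argument and no variable below it.

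This triangular dependence is the crux. Ordering the target coordinates as $(q_1,\dots,q_k,p_1,\dots,p_k)$ and the source variables as the upper block $x_{n+k-1},\dots,x_n$ followed by the lower block $x_{n-1},\dots,x_{n-k}$, the Jacobian of \eqref{eq:C4EOstraTran} is block lower triangular: the derivatives of \eqref{C4E:OstraTranA} give a permutation block in the upper variables and vanish on the lower ones, while the momentum block is triangular, pairing each $p_i$ with its lowest argument $x_{n-(k-i+1)}$. The resulting diagonal entries are the shifts $T^{-(k-i+1)}\bigl(\partial^2 L/\partial x_n\,\partial x_{n+k}\bigr)$, each nonzero precisely by the normality condition \eqref{eq:lagrnormal}, so the Jacobian determinant is a nonvanishing product of such shifts. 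Solving the system one momentum at a time — $x_{n-1}$ from $p_k$, then $x_{n-2}$ from $p_{k-1}$, and so on down to $x_{n-k}$ from $p_1$ — exhibits the inverse map explicitly, and the matrix $M=\partial(x_{n+k-1},\dots,x_{n-k})/\partial(\vec q,\vec p)$ of \eqref{eq:Poissoncan} is thus well defined and invertible.

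With $M$ invertible the remaining two assertions are short. Writing $J_0$ for the constant unit symplectic matrix appearing in \eqref{eq:Poissoncan}, so that $\mathcal J=MJ_0M^{T}$, skew-symmetry of $\mathcal J$ is immediate from $J_0^{T}=-J_0$, and $\det\mathcal J=(\det M)^2\det J_0=(\det M)^2\neq0$ forces $\mathcal J$ to have full rank $2k$. For the Poisson axioms it suffices, in view of the bilinear extension \eqref{eq:deter}, to verify them on the generators $x_{n+i}$. The skew-symmetry of \eqref{eq:poissondefcan} is inherited directly from that of $\mathcal J$. For the Jacobi identity I would exploit that \eqref{eq:Poissoncan} is exactly the tensorial transformation law carrying the constant canonical bivector $J_0$, written in the $(\vec q,\vec p)$ coordinates, over to the $x$-coordinates; equivalently, a direct check shows that in the new variables the bracket assumes the canonical form $\{q_i,q_j\}=\{p_i,p_j\}=0$ and $\{q_i,p_j\}=\delta_{ij}$. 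Since the Jacobi identity is a coordinate-independent condition on the bivector and manifestly holds for this constant canonical structure, it holds for $\mathcal J$ as well.

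The main obstacle is thus the invertibility step: the full-rank and Poisson properties are formal consequences once $M$ is known to be invertible, whereas invertibility itself hinges on correctly identifying the triangular structure of the momenta \eqref{C4E:OstraTranB} and tracing the nonvanishing of every diagonal entry back to the single normality hypothesis \eqref{eq:lagrnormal}. A point to treat with care is that \eqref{eq:C4EOstraTran} is a priori only a local change of coordinates; the explicit triangular back-substitution just described shows that it is invertible on the domain where the relevant shifts of $\partial^2 L/\partial x_n\,\partial x_{n+k}$ do not vanish, which is the regime tacitly assumed in the statement.
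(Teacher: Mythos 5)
The paper itself does not prove Theorem \ref{thm:ostro} — it imports it verbatim from \cite{Bruschietal1991} — so your proposal can only be measured against the standard argument behind that citation. The parts you do prove are correct and follow that standard route: rewriting the momenta as $p_i=\sum_{m=i}^{k}T^{-(m-i+1)}\,\partial L/\partial x_{n+m}$, observing that only the $m=k$ summand reaches down to $x_{n-(k-i+1)}$, so that the momentum block of the Jacobian is triangular with diagonal entries $T^{-(k-i+1)}\bigl(\partial^{2}L/\partial x_{n}\partial x_{n+k}\bigr)$, nonvanishing precisely by normality \eqref{eq:lagrnormal}; and once $M=\partial\left( x_{n+k-1},\dots,x_{n-k} \right)/\partial\left( \vec{q},\vec{p} \right)$ is invertible, skew-symmetry, $\det\mathcal{J}=(\det M)^{2}\neq0$, and the Jacobi identity for \eqref{eq:poissondefcan} all follow formally, since $\mathcal{J}$ is the pushforward of the constant canonical bivector. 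Your caveat about locality is also well placed.

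There is, however, a genuine gap: you never show that the bracket is a Poisson bracket \emph{for the difference equation} \eqref{eq:recgen}, i.e.\ that the evolution map preserves it in the sense of \eqref{eq:Poissonpreserve}. This is the dynamical content of the theorem and the only reason the paper needs it: Theorem \ref{thm:liouvilleintegrability} requires an equation that \emph{preserves} a Poisson bracket, and Theorem \ref{thm:ostro} is invoked exactly to guarantee that hypothesis for every variational equation. Your argument cannot reach this claim as written, because $MJ_0M^{T}$ defines a Poisson tensor for \emph{any} invertible change of coordinates, independently of any dynamics; symptomatically, your proof uses neither the Euler--Lagrange equation \eqref{eq:elgen} nor the hypothesis that the Lagrangian is \emph{autonomous}, both of which are essential (for a non-autonomous Lagrangian such as \eqref{eq:Lnagen} the momentum map becomes $n$-dependent and the bracket is not preserved, consistently with the volume expansion \eqref{eq:evolvol} discussed in Section \ref{sec:concl}). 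The missing step is the discrete generating-function computation: in the coordinates \eqref{eq:C4EOstraTran} the map reads $q_i'=q_{i+1}$ for $i<k$, with $p_i'=p_{i+1}+\partial L/\partial x_{n+i}$ and $p_k'=\partial L/\partial x_{n+k}$, while on solutions of \eqref{eq:elgen} one has $p_{1}=-\partial L/\partial x_{n}$; combining these gives $\sum_{i}\ud p_i'\wedge\ud q_i'-\sum_{i}\ud p_i\wedge\ud q_i=\ud\left( \ud L \right)=0$, so the map is canonical and the full-rank structure \eqref{eq:poissondefcan} is genuinely preserved. Without this step the theorem's conclusion, as it is used in the paper, is not established.
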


From Theorem \ref{thm:ostro} it follows that integrability for
variational difference equation is proven in Liouville sense if
we are able to produce $k$ functionally independent invariants
in involution with respect to the Poisson bracket \eqref{eq:poissondefcan}.

\begin{remark}
    We remark that in the case of fourth-order difference equations
    the \emph{na\"ive} definition of integrability and Liouville-Poisson
    integrability actually coincide.
    Indeed in the case of fourth-order difference equations a Poisson bracket 
    can have either rank two or four. If it has rank four then, it is full
    rank and we are in the case of Liouville integrability.
    If the Poisson bracket is degenerate with rank two to prove Liouville-Poisson
    integrability according to Theorem \ref{thm:liouvilleintegrability}
    we need $4-1=3$ invariants.
    However, for fourth-order difference equations we need three invatiants
    also to claim integrability in the \emph{na\"ive} sense.
    For this reason we see that in the case of fourth-order differential-difference
    equation the variational structure is much more helpful in proving
    integrability than in the general case, as it really lowers the number
    of invariants needed.
    \label{rem:4thorderLP}
\end{remark}

\section{Main results}
\label{sec:main}

In this section we state and prove our main result on the structure
of additive variational fourth-order difference equations.
We then discuss a general procedure to test if an additive fourth-order
difference equation and present some examples.

\subsection{General results}
It is easy to prove that any additive second-order
difference equation \eqref{eq:add2} is variational
with the following Lagrangian:
\begin{equation}
    L = x_{n}x_{n+1} - \int^{x_{n}} f\left( \xi \right)\ud \xi.
    \label{eq:Ladd2}
\end{equation}
A natural generalisation of \eqref{eq:add2} could be the following
one:
\begin{equation}
    x_{n+2} + x_{n-2} = f\left( x_{n+1},x_{n},x_{n-1} \right).
    \label{eq:addpur2}
\end{equation}
However confronting equation \eqref{eq:addpur2} with the known
examples of variational fourth-order differential equations from
\cite{GJTV_class}, it is clear that this functional form is too narrow.
Indeed, the variational equations given in \cite{GJTV_class} are of 
the following form:
\begin{equation}
    g\left( x_{n+1} \right)x_{n+2} + g\left( x_{n-1} \right) x_{n-2}
    = f\left(x_{n+1},x_{n},x_{n-1}\right).
    \label{eq:rec4addgenGG}
\end{equation}
This gives us the motivation to consider general additive fourth-order 
difference equations as given in \eqref{eq:rec4add}.
We state and prove the following theorem:
\begin{theorem} 
    An additive fourth-order difference 
    equation \eqref{eq:rec4add}
    is Lagrangian if and only it has the following form:
    \begin{equation}
        \begin{aligned}
            g\left( x_{n+1} \right)x_{n+2} &+ \lambda^{2}g\left( x_{n-1} \right) x_{n-2}+
            \lambda g'\left( x_{n} \right) x_{n+1}x_{n-1}
            \\
            &+\pdv{V}{x_{n}}\left( x_{n+1},x_{n} \right)+\lambda\pdv{V}{x_{n}}\left( x_{n},x_{n-1} \right)=0,
        \end{aligned}
        \label{eq:rec4addlagr}
    \end{equation}
    that is:
    \begin{subequations}
        \begin{align}
            f\left( x_{n+1},x_{n},x_{n-1} \right)&=-\lambda^{2}\frac{g\left( x_{n-1} \right)}{g\left( x_{n+1} \right)}
            \label{eq:fform}
            \\
            g\left( x_{n+1} \right)h\left( x_{n+1},x_{n},x_{n-1} \right)&
            \begin{aligned}[t]
                &=
        -\lambda g'\left( x_{n} \right) x_{n+1}x_{n-1}
        \\
        &-\pdv{V}{x_{n}}\left( x_{n+1},x_{n} \right)-\lambda\pdv{V}{x_{n}}\left( x_{n},x_{n-1} \right)=0.
            \end{aligned}
            \label{eq:hform}
        \end{align}
        \label{eq:fhform}
    \end{subequations}
    In that cases the Lagrangian, up to total difference and multiplication
    by a constant is given by:
    \begin{equation}
        L_{n} =\lambda^{-n}\left[ g\left( x_{n+1} \right) x_{n}x_{n+2} + V\left( x_{n+1},x_{n} \right)\right].
        \label{eq:rec4addL}
    \end{equation}
    \label{thm:structure}
\end{theorem}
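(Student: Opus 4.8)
The plan is to prove both implications through Theorem~\ref{thm:elexist}, imposing the additive ansatz \eqref{eq:rec4add} and reading \eqref{eq:lagrcond}--\eqref{eq:lagrcondinv} as constraints on the (a priori unknown) Lagrangian. I would dispatch the \emph{if} direction first by direct verification. Writing $L_n=\lambda^{-n}\left[g(x_{n+1})x_nx_{n+2}+V(x_{n+1},x_n)\right]$ and forming the three shifted copies $\partial L_n/\partial x_n$, $\partial L_{n-1}/\partial x_n$, $\partial L_{n-2}/\partial x_n$ demanded by the Euler--Lagrange equation \eqref{eq:elgen} with $k=2$, the prefactor $\lambda^{-n}$ collects, after multiplication by $\lambda^{n}$, into the powers $1,\lambda,\lambda^{2}$ that decorate \eqref{eq:rec4addlagr}; the cross term $\lambda g'(x_n)x_{n+1}x_{n-1}$ arises precisely from differentiating the $g(x_n)x_{n-1}x_{n+1}$ piece of $L_{n-1}$. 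This is the autonomous-from-non-autonomous mechanism of Remark~\ref{rem:nautlagr}, and solving \eqref{eq:rec4addlagr} for $x_{n+2}$ gives the identifications \eqref{eq:fform}--\eqref{eq:hform}. I expect no difficulty here.

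The substance is the \emph{only if} direction. I would substitute $F=f\,x_{n-2}+h$, with $f,h$ functions of $(x_{n+1},x_n,x_{n-1})$, into the forward operator \eqref{eq:annihil}, using $\partial F/\partial x_{n-2}=f$ and $\partial F/\partial x_{n-1}=x_{n-2}\,\partial f/\partial x_{n-1}+\partial h/\partial x_{n-1}$, and correspondingly insert \eqref{eq:rec4addinv} into the backward operator \eqref{eq:annihil2}. Denoting the unknown $\partial L_{n-2}/\partial x_n$ as a function of $(x_n,x_{n-1},x_{n-2})$, condition \eqref{eq:lagrcond} becomes a linear partial differential equation whose coefficients depend on $x_{n+1}$ only through $f$ and $h$, whereas the unknown itself does not depend on $x_{n+1}$.

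The key step, and the main obstacle, is to exploit this separation: differentiating the resulting equation with respect to $x_{n+1}$ eliminates the unknown from part of the expression and forces structural constraints on $f$ and $h$ alone. I expect this to yield first that $f$ is independent of $x_n$, and then that $\log f$ splits additively into a term in $x_{n-1}$ and a term in $x_{n+1}$. Matching this against the backward condition \eqref{eq:lagrcondinv}, which plays the symmetric role for $\partial L_n/\partial x_n$ and $\widetilde{F}$, is what forces the \emph{same} function to occupy both slots, producing $f=-\lambda^{2}g(x_{n-1})/g(x_{n+1})$ as in \eqref{eq:fform}, with the constant $-\lambda^{2}$ emerging as the consistency constant between the forward and backward relations. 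Substituting back then pins down $h$ up to a single potential $V(x_{n+1},x_n)$, giving \eqref{eq:hform}. I anticipate the delicate point to be precisely this forward/backward matching, ensuring a single $g$ and a single $\lambda$ appear rather than two unrelated functions.

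Finally I would reconstruct $L_n$ by integrating $\partial L_{n-2}/\partial x_n$ (equivalently $\partial L_n/\partial x_n$) in $x_n$ and removing the residual freedom. Since equivalent Lagrangians differ by a total difference and by an overall constant, as recorded in \eqref{eq:lequiv}, I can discard total-difference terms and normalise to the representative \eqref{eq:rec4addL}. It remains to confirm that the normality hypothesis \eqref{eq:lagrnormal} survives, namely $\partial^{2}L_n/\partial x_n\partial x_{n+2}=\lambda^{-n}g(x_{n+1})\neq 0$, which guarantees the equation remains genuinely fourth-order and closes the characterisation.
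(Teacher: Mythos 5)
Your overall plan coincides with the paper's: the \emph{if} direction by direct computation of the Euler--Lagrange equation of \eqref{eq:rec4addL} (which is indeed how the paper dispatches it), and the converse via Theorem \ref{thm:elexist}, eliminating the unknown derivative of $L_{n-2}$ from \eqref{eq:lagrcond} by repeated differentiation in $x_{n+1}$. Up to that point you match the paper, which in this way obtains a PDE in $f$ and $h$ alone, equation \eqref{eq:cc0fin}. But your anticipated first conclusion is wrong in an instructive way: that elimination does \emph{not} yield that $f$ is independent of $x_{n}$; it yields only the multiplicative separation $f=G_{+}\left( x_{n+1} \right)G\left( x_{n} \right)G_{-}\left( x_{n-1} \right)$ of \eqref{eq:gsol1}, and the middle factor is removed only much later, through the Euler--Lagrange equation itself: one first shows that the relevant piece of the Lagrangian is linear, $l_{1,2,n}\left( x_{n+2} \right)=C_{1,n}x_{n+2}+C_{2,n}$ as in \eqref{eq:l12s}, and then the operator \eqref{eq:diffop} forces $G'\equiv0$ via \eqref{eq:cc2dd2}. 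Declaring $x_{n}$-independence of $f$ at the outset discards admissible intermediate solutions without justification. You also omit the bifurcation attached to the factorised compatibility condition \eqref{eq:cc0trisbis}: besides the branch fixing $l_{n-2}$, there is the branch fixing $h$ as in \eqref{eq:Ffixed}, which must be shown (as the paper does) to collapse to $G\equiv0$ and hence be impossible.

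The more serious gap concerns $\lambda$ and the $n$-dependence. In the converse direction the Lagrangian must a priori be allowed arbitrary dependence on $n$, and the heart of the paper's argument is tracking exactly this: the unknown sequences $C_{1,n}$, $C_{2,n}$ enter through \eqref{eq:l12s}; the coefficient of $x_{n+2}$ in the Euler--Lagrange equation with \eqref{eq:rec4addinv} substituted gives $G_{-}G_{+}=-C_{1,n-2}K_{1}/C_{1,n}$ and hence both $G_{+}=K_{1}/\left(qG_{-}\right)$ and $C_{1,n}=qC_{1,n-2}$, equation \eqref{eq:GpC1neq}; and only the requirement that the \emph{equation} itself be autonomous forces $C_{1,n-1}^{2}-qC_{1,n-2}^{2}=0$, equation \eqref{eq:fincond}, whence $q=\lambda^{2}>0$ and $C_{1,n}=A\lambda^{n}$, equation \eqref{eq:C1sol}. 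Your sketch attributes the constant $-\lambda^{2}$ and the coincidence of the two function slots to a matching between the forward condition \eqref{eq:lagrcond} and the backward condition \eqref{eq:lagrcondinv}; that is not the operative mechanism (the paper scarcely uses the backward condition beyond writing it down), and, more importantly, your proposal nowhere explains how the geometric weight $\lambda^{-n}$ in \eqref{eq:rec4addL} would be \emph{derived} rather than assumed --- without the autonomy argument you cannot even conclude that $q$ is positive, i.e.\ a square, which is what makes the parametrisation $q=\lambda^{2}$ legitimate. As written, the only-if direction is a plausible outline whose two pivotal steps --- killing the $x_{n}$-dependence of $f$, and extracting the geometric $n$-dependence together with the positivity of $q$ --- are missing.
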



\begin{proof}
    The only if part is trivial, as using formula \eqref{eq:elgen} it is possible
    to show that the Euler--Lagrange equation corresponding to \eqref{eq:rec4addL}
    is given by equation \eqref{eq:rec4addlagr}.
    Therefore we will concentrate on the proof of the if part.

    To prove the if part we use Theorem \ref{thm:elexist} on an
    additive fourth-order difference equation \eqref{eq:rec4add}
    and its inverse \eqref{eq:rec4addinv}.
    According to Theorem \ref{thm:elexist} a Lagrangian
    $L_{n}$ must satisfy equations \eqref{eq:annihil} and
    \eqref{eq:annihil2}. In the case additive fourth-order difference 
    equation \eqref{eq:rec4add} we have:
    \begin{subequations}
        \begin{align}
            F\left( x_{n+1},x_{n},x_{n-1},x_{n-2} \right) &=
            f\left( x_{n+1},x_{n},x_{n-1} \right) x_{n-2}+h\left( x_{n+1},x_{n},x_{n-1} \right),
            \label{eq:Flin}
            \\
            \widetilde{F}\left( x_{n+2},x_{n+1},x_{n},x_{n-1} \right) &=
            \frac{ x_{n+2}-h\left( x_{n+1},x_{n},x_{n-1} \right)}{%
            f\left( x_{n+1},x_{n},x_{n-1} \right)},
            \label{eq:Ftlin}
        \end{align}
        \label{eq:FFtlin}
    \end{subequations}
    so that, writing explicitly equations \eqref{eq:annihil} and 
    \eqref{eq:annihil2}, we have:
    \begin{subequations}
        \begin{gather}
            \label{eq:cc0bis}
            \begin{gathered}
            {\frac {\partial f}{\partial x_{{n-1}}}} 
            \left( x_{{n+1}},x_{{n}},x_{{n-1}} \right)
            {\frac {\partial ^{2}L_{{n-2}}}{\partial x_{{n-2}}\partial x_{{n}}}} 
            \left( x_{{n}},x_{{n-1}},x_{{n-2}} \right) 
            \\
                -f \left( x_{{n+1}},x_{{n}},x_{{n-1}} \right) 
                {\frac {\partial ^{3}L_{{n-2}}}{\partial x_{{n-1}}\partial x_{{n-2}}\partial x_{{n}}}} 
                \left( x_{{n}},x_{{n-1}},x_{{n-2}} \right) 
            \\
            + \left[  
                {\frac {\partial f}{\partial x_{{n-1}}}} 
                \left( x_{{n+1}},x_{{n}},x_{{n-1}} \right) x_{{n-2}}
                +{\frac {\partial h}{\partial x_{{n-1}}}} 
            \left( x_{{n+1}},x_{{n}},x_{{n-1}} \right)  \right]
                \times
                \\
                {\frac {\partial ^{3} L_{{n-2}}}{\partial x_{{n-2}}^{2}\partial x_{{n}}}} 
                \left( x_{{n}},x_{{n-1}},x_{{n-2}}\right)=0,
                \\
            \end{gathered}
            \\
            \begin{gathered}
                \left[  
                    \left( x_{{n+2}}-f \left( x_{{n+1}},x_{{n}},x_{{n-1}}\right)  \right) 
                    {\frac {\partial f}{\partial x_{{n+1}}}}
                \left( x_{{n+1}},x_{{n}},x_{{n-1}} \right) \right.
                    \\
                    \left.
                    +f \left( x_{{n+1}},x_{{n}},x_{{n-1}}\right) 
                    {\frac {\partial h}{\partial x_{{n+1}}}} 
                \left( x_{{n+1}},x_{{n}},x_{{n-1}} \right)  \right]
                {\frac {\partial ^{3}L_{{n}}}{\partial x_{{n+2}}^{2}\partial x_{{n}}}}  
                \left( x_{{n+2}},x_{{n+1}},x_{{n}}\right)
                \\
                 + \left( {\frac {\partial ^{3}L_{{n}}}{\partial x_{{n+2}}\partial x_{{n+1}}\partial x_{{n}}}} \left( x_{{n+2}},x_{{n+1}},x_{{n}} \right)  \right) 
                 f \left( x_{{n+1}},x_{{n}},x_{{n-1}} \right) 
                 \\
                 + {\frac {\partial f}{\partial x_{{n+1}}}} 
                \left( x_{{n+1}},x_{{n}},x_{{n-1}} \right) 
                {\frac {\partial ^{2} L_{n}}{\partial x_{{n+2}}\partial x_{{n}}}}
                \left( x_{{n+2}},x_{{n+1}},x_{{n}} \right)=0.
            \end{gathered}
            \label{eq:cc0invbis}
        \end{gather}
        \label{eq:cc0bistot}
    \end{subequations}
    
    Let us start from equation \eqref{eq:cc0bis}.
    Since the functions $f$ and $h$ are unknown we cannot
    use the standard solving technique shown in \cite{Gubbiotti_dcov}.
    On the other hand we need to use the fact $f$ and $h$ depend
    on $x_{n+1},x_{n},x_{n-1}$ while $L_{n-2}=L_{n-2}\left( x_{n},x_{n-1},x_{n-2} \right)$.
    We can eliminate $L_{n-2}$ solving with respect to its derivatives
    and then differentiating with respect to $x_{n+1}$.
    To completely eliminate it we need to repeat this process three
    times.
    This yields the following equation (since $f$ and $h$ depend on the
    same variables we drop the explicit dependence on $x_{n+1}$, $x_{n}$ and $x_{n-1}$):
    \begin{equation}
        \begin{gathered}
        {\frac {\partial ^{3} f}{\partial x_{{n+1}}^{2}\partial x_{{n-1}}}}
        {\frac {\partial f}{\partial x_{{n+1}}}}
        {\frac {\partial h}{\partial x_{{n-1}}}}
        - {\frac {\partial ^{3} f}{\partial x_{{n+1}}^{2}\partial x_{{n-1}}}}
            {\frac {\partial ^{2} h}{\partial x_{{n+1}}\partial x_{{n-1}}}}g 
            \\
            + {\frac {\partial ^{3} h}{\partial {x_{{n+1}}}^{2}\partial x_{{n-1}}}} 
            {\frac {\partial ^{2}f}{\partial x_{{n+1}}\partial x_{{n-1}}}} g 
            - {\frac {\partial ^{3} h}{\partial x_{{n+1}}^{2}\partial x_{{n-1}}}}
            {\frac {\partial f}{\partial x_{{n+1}}}}
            {\frac {\partial f}{\partial x_{{n-1}}}} 
            \\
            - {\frac{\partial ^{2} f}{\partial x_{{n+1}}^{2}}}
                {\frac {\partial ^{2} f}{\partial x_{{n+1}}\partial x_{{n-1}}}}
                    {\frac {\partial h}{\partial x_{{n-1}}}}
            +{\frac {\partial ^{2} f}{\partial x_{{n+1}}^{2}}}
            {\frac {\partial ^{2} h}{\partial x_{{n+1}}\partial x_{{n-1}}}}
            {\frac {\partial h}{\partial x_{{n-1}}}}=0.
        \end{gathered}
        \label{eq:cc0fin}
    \end{equation}
    Using the CAS \texttt{Maple 2016}\ to solve equation \eqref{eq:cc0fin}
    we find that the solution is actually independent of $h$ and has
    the following form:
    \begin{equation}
        f\left( x_{n+1},x_{n},x_{n-1} \right) =
        G_{+}\left( x_{n+1} \right) G\left( x_{n} \right) G_{-}\left( x_{n-1} \right).
        \label{eq:gsol1}
    \end{equation}
    Going back to equation \eqref{eq:cc0bis}, if we solve with respect
    to $\partial^{3} L_{n-2} /\partial x_{n-2}^{2}\partial x_{n}$ and differentiating
    with respect to $x_{n+1}$ we find the following simple compatibility condition:
    \begin{equation}
        \begin{gathered}
        G\left( x_{n} \right)
        \left[ G_{+}'(x_{n+1})\pdv{h}{x_{n-1}} (x_{n+1}, x_{n}, x_{n-1})
        -G_{+}(x_{n+1})\pdv{h}{x_{n-1},x_{n+1}} (x_{n+1}, x_{n}, x_{n-1})  \right]
        \times
        \\
        \left[ G_{-}(x_{n-1})\pdv{l_{n-2}}{x_{n-1},x_{n}} (x_{n}, x_{n-1}, x_{n-2})
            -\pdv{l_{n-2}}{x_{n}} (x_{n}, x_{n-1}, x_{n-2}) G_{-}'(x_{n-1}) \right]=0,
        \end{gathered}
        \label{eq:cc0trisbis}
    \end{equation}
    where we defined:
    \begin{equation}
        l_{n-2}\left( x_{n},x_{n-1},x_{n-2} \right)\equiv 
        \pdv{L_{n-2}}{x_{n-2}}\left( x_{n},x_{n-1},x_{n-2} \right).
        \label{eq:lmmdef}
    \end{equation}

    Equation \eqref{eq:cc0trisbis} has three factors which can be annihilated
    separately.
    The first factor gives $G\left( x_{n} \right)=0$, that is $f\equiv0$,
    which is not allowed.
    Therefore from \eqref{eq:cc0trisbis} we can choose to fix either
    $f$ or $l_{n-2}$. We will now address these two possibilities.

    \subsection*{Fix $l_{n-2}$ from \eqref{eq:cc0trisbis}}
    Solving the second factor in \eqref{eq:cc0trisbis} we obtain
    the following value for $l_{n-2}$:
    \begin{equation}
        l_{n-2}\left( x_{n},x_{n-1},x_{n-2} \right) 
        = l_{1,n-2}\left( x_{n},x_{n-2} \right)G_{-}\left( x_{n-1} \right)
        +l_{2,n-2}\left( x_{n-1},x_{n-2} \right).
        \label{eq:ls2}
    \end{equation}
    Inserting \eqref{eq:ls2} into \eqref{eq:cc0bis} we obtain
    the following equation:
    \begin{equation}
        \begin{aligned}
        \left(
        G_{+} \left( x_{{n+1}} \right)  
        G_{-}'\left( x_{{n-1}} \right) 
        G\left( x_{{n}} \right) x_{{n-2}}
        +{\frac {\partial h}{\partial x_{{n-1}}}}\left( x_{{n+1}},x_{{n}},x_{{n-1}} \right)  \right) 
        &\times
        \\
        {\frac {\partial ^{2} l_{1,n-2}}{\partial x_{{n-2}}\partial x_{{n}}}}  
        \left( x_{{n}},x_{{n-2}} \right)&=0. 
        \end{aligned}
        \label{eq:fann}
    \end{equation}
    Again we have two factors we can choose to annihilate.
    The first factors, since no function depends on $x_{n-2}$ is equivalent
    to the following equations:
    \begin{equation} 
        G_{+} \left( x_{{n+1}} \right)  
        G_{-}'\left( x_{{n-1}} \right) 
        G\left( x_{{n}} \right) = 0,
        \quad 
        {\frac {\partial h}{\partial x_{{n-1}}}}\left( x_{{n+1}},x_{{n}},x_{{n-1}} \right)=0.
        \label{eq:fanncoff}
    \end{equation}
    The first equation imply $G_{-}\left( x_{n-1} \right)=\text{constat}$ and
    the second one imply $h=h\left( x_{n+1},x_{n} \right)$.
    This is not allowed as the equation will be independent of $x_{n-1}$.
    Therefore we are forced to annihilate the second factor.
    This implies:
    \begin{equation}
       l_{1,n-2}\left( x_{{n}},x_{{n-2}} \right) 
       = l_{1,1,n-2}\left( x_{n-2} \right)+l_{1,2,n-2}\left( x_{n} \right).
        \label{eq:l1s2}
    \end{equation}
    Inserting this into \eqref{eq:ls2} and using the arbitrariness of
    $l_{2,n-2}$ we can write:
    \begin{equation}
        l_{n-2}\left( x_{n},x_{n-1},x_{n-2} \right) 
        = l_{1,2,n-2}\left( x_{n} \right)G_{-}\left( x_{n-1} \right)
        +\pdv{l_{2,n-2}}{x_{n-2}}\left( x_{n-1},x_{n-2} \right).
        \label{eq:ls2bis}
    \end{equation}

    Using the definition of $l_{n-2}$ \eqref{eq:lmmdef} and the
    fact that discrete Lagrangians are defined only up to total
    difference, from formula \eqref{eq:ls2bis} we obtain the following
    form of the Lagrangian:
    \begin{equation}
        L_{n}\left( x_{n+2},x_{n+1},x_{n} \right) 
        = l_{1,2,n-2}\left( x_{n+2} \right)G_{-}\left( x_{n+1} \right) x_{n}
        +l_{2,n}\left( x_{n+1},x_{n} \right).
        \label{eq:Ls2}
    \end{equation}
    The Euler--Lagrange equation corresponding to \eqref{eq:Ls2}, upon
    substitution of equation \eqref{eq:rec4addinv} are:
    \begin{equation}
        \begin{gathered}
        {\frac {\partial l_{2,n}}{\partial x_{{n}}}} \left( x_{{n}},x_{{n+1}} \right) 
        +G_{-} \left( x_{{n+1}} \right) l_{1,2,n} \left( x_{{n+2}} \right) 
        +{\frac {\partial l_{2,n-1}}{\partial x_{{n}}}}\left( x_{{n-1}},x_{{n}} \right) 
        \\
        +G_{-}'\left( x_{{n}} \right) l_{1,2,n-1}\left( x_{{n+1}} \right) x_{{n-1}}
        =l_{1,2,n-2}' \left( x_{{n}} \right)
        \frac { h \left( x_{{n+1}},x_{{n}},x_{{n-1}}\right)-x_{{n+2}}}{ G_{+}\left( x_{{n+1}} \right) G \left( x_{{n}} \right)}.
        \end{gathered}
        \label{eq:cc2}
    \end{equation}
    Differentiating equation \eqref{eq:cc2} with respect to $x_{n+2}$ twice
    we obtain:
    \begin{equation}
        G_{-} \left( x_{{n+1}} \right) l_{1,2,n}'' \left( x_{{n+2}} \right)=0.
        \label{eq:cc2dd}
    \end{equation}
    Using the usual argument, we obtain that we need to annihilate the second
    factor, which gives:
    \begin{equation}
        l_{1,2,n} \left( x_{{n+2}} \right) = C_{1,n} x_{n+2}+C_{2,n},
        \label{eq:l12s}
    \end{equation}
    where $C_{1,n}$ and $C_{2,n}$ are two functions depending on
    $n$ alone.
    Substituting back in equation \eqref{eq:cc2} and applying the differential
    operator
    \begin{equation}
        \dv{}{x_{n}}\left( \frac{1}{G\left( x_{n} \right)} \dv{}{x_{n+2}} \right)
        \label{eq:diffop}
    \end{equation}
    we obtain:
    \begin{equation}
        C_{1,n-2} \frac{G'\left( x_{n} \right)}{G^{2}\left( x_{n} \right)} =0.
        \label{eq:cc2dd2}
    \end{equation}
    Since $C_{1,n-2}\neq0$ we obtain $G\left( x_{n} \right) =1/K_{1}$ where $K_{1}$
    is a constant.
    Inserting this value into \eqref{eq:cc2} we obtain:
    \begin{equation}
        \begin{gathered}
        {\frac {\partial l_{2,n}}{\partial x_{{n}}}} \left( x_{{n}},x_{{n+1}} \right) 
        +G_{-} \left( x_{{n+1}} \right) \left( C_{1,n} x_{{n+2}}+C_{2,n} \right) 
        +{\frac {\partial l_{2,n-1}}{\partial x_{{n}}}}\left( x_{{n-1}},x_{{n}} \right) 
        \\
        + G_{-}'\left( x_{{n}} \right) \left(C_{1,n-1}x_{{n+1}}+C_{2,n-1}\right) x_{{n-1}}
        =
        C_{1,n-2} K_{1} 
        \frac { h \left( x_{{n+1}},x_{{n}},x_{{n-1}}\right)-x_{{n+2}}}{ G_{+}\left( x_{{n+1}} \right)}.
        \end{gathered}
        \label{eq:cc2tris}
    \end{equation}
    We can take the coefficient with respect to $x_{n+2}$ and we obtain:
    \begin{equation}
        G_{-} \left( x_{{n+1}} \right)  C_{1,n} =- \frac {C_{1,n-2} K_{1}}{ G_{+}\left( x_{{n+1}} \right)}.
        \label{eq:cc2trisxpp}
    \end{equation}
    We can rewrite this equation as:
    \begin{equation}
        G_{-} \left( x_{{n+1}} \right) G_{+}\left( x_{{n+1}} \right)  =- \frac {C_{1,n-2} K_{1}}{C_{1,n}}.
        \label{eq:cc2trisxppbis}
    \end{equation}
    Since $K_{1}$ is a constant, upon differentiation with respect to $x_{n+1}$,
    there exists a constant $q\in\R\setminus{\left\{ 0 \right\}}$ such that:
    \begin{equation}
        G_{+}\left( x_{{n+1}} \right)  = \frac{K_{1}}{qG_{-} \left( x_{{n+1}} \right)},
        \quad\text{and}\quad
         C_{1,n}= q C_{1,n-2}.
        \label{eq:GpC1neq}
    \end{equation}
    Using conditions \eqref{eq:GpC1neq} into 
    \eqref{eq:cc2tris} e obtain:
    \begin{equation}
        \begin{gathered}
        {\frac {\partial l_{2,n}}{\partial x_{{n}}}} \left( x_{{n}},x_{{n+1}} \right) 
        +G_{-} \left( x_{{n+1}} \right) C_{2,n}
        +{\frac {\partial l_{2,n-1}}{\partial x_{{n}}}}\left( x_{{n-1}},x_{{n}} \right) 
        \\
        + G_{-}'\left( x_{{n}} \right) \left(C_{1,n-1}x_{{n+1}}+C_{2,n-1}\right) x_{{n-1}}
        =
        C_{1,n-2} G_{-}\left( x_{{n+1}} \right)
        h \left( x_{{n+1}},x_{{n}},x_{{n-1}}\right) .
        \end{gathered}
        \label{eq:cc2quadrbis}
    \end{equation}
    Differentiating with respect to $x_{n+1}$ and $x_{n-1}$ we obtain a
    PDE for $h$ which can be solved to give:
    \begin{equation}
        \begin{aligned}
        h \left( x_{{n+1}},x_{{n}},x_{{n-1}}\right)
        &=
        \frac{h_{1}\left( x_{n},x_{n-1} \right) +h_{2}\left( x_{n+1},x_{n} \right)}{%
            G_{-}\left( x_{n+1} \right)}
            \\
            &-\frac{C_{1,n-1}}{qC_{1,n-2}} 
            \frac{G_{-}\left( x_{n} \right)x_{n+1}x_{n-1}}{G_{-}\left( x_{n+1} \right)}.
        \end{aligned}
        \label{eq:fgen}
    \end{equation}
    Since $f$ must not depend explicitly on $n$, we must impose that the
    coefficient $F_{n} =C_{1,n-1}/C_{1,n-2}$ is $n$ independent, that is
    it is a total difference. Using again equation \eqref{eq:GpC1neq} we obtain:
    \begin{equation}
        C_{1,n-1}^{2} - q C_{1,n-2}^{2} = 0.
        \label{eq:fincond}
    \end{equation}
    This implies $q>0$, that is $q=\lambda^{2}$ for some $\lambda\in\R\setminus{ \left\{ 0 \right\} }$, 
    and then
    \begin{equation}
        C_{1,n}^{\pm} = A \left(\pm\lambda\right)^{n}
        \label{eq:C1sol}
    \end{equation}
    with $A\in\R$ a constant.
    However, due to the arbitrariness of $\lambda$ we can consider only the
    solution $C_{1,n}^{+}$.
    Indeed, $\lambda$ can be negative and the cases with $C_{1,n}^{-}$ just follow
    from the substitution $\lambda\to-\lambda$.
    Therefore we drop the superscript $+$ in \eqref{eq:C1sol}.
    This reasoning implies that the $f$ in \eqref{eq:fgen} assumes the following
    form:
    \begin{equation}
        h \left( x_{{n+1}},x_{{n}},x_{{n-1}}\right)
        =
        \frac{h_{1}\left( x_{n},x_{n-1} \right) + h_{2}\left( x_{n+1},x_{n} \right)}{%
            G_{-}\left( x_{n+1} \right)}
            -\frac{G_{-}\left( x_{n} \right)x_{n+1}x_{n-1}}{\lambda G_{-}\left( x_{n+1} \right)}.
        \label{eq:fs1}
    \end{equation}

    We can finally insert \eqref{eq:fs1} into \eqref{eq:cc2quadrbis}
    and obtain:
    \begin{equation}
        \begin{gathered}
        {\frac {\partial l_{2,n}}{\partial x_{{n}}}} \left( x_{{n}},x_{{n+1}} \right) 
        +G_{-} \left( x_{{n+1}} \right) C_{2,n}
        +{\frac {\partial l_{2,n-1}}{\partial x_{{n}}}}\left( x_{{n-1}},x_{{n}} \right) 
        \\
        + G_{-}'\left( x_{{n}} \right) C_{2,n-1} x_{{n-1}}
        =
        -A \lambda^{n}
        \left[
        f_{1}\left( x_{n},x_{n-1} \right) + f_{2}\left( x_{n+1},x_{n} \right) \right].
        \end{gathered}
        \label{eq:cc3}
    \end{equation}
    Differentiating with respect to $x_{n+1}$ we obtain a linear PDE
    for $l_{2,n}\left( x_{n+1},x_{n} \right)$.
    Solving such equation we obtain the following form for this function:
    \begin{equation}
        \begin{aligned}
            l_{2,n}\left( x_{n+1},x_{n} \right)
            &=
        l_{2,2,n}\left( x_{n+1} \right) + l_{2,1}\left( x_{n} \right)
        \\
        &- A\lambda^{n} \int^{x_{n}} f_{2}\left( x_{n+1}, \xi \right)\ud \xi
        -C_{2,n} x_{n} G_{-}\left( x_{n+1} \right).
        \end{aligned}
        \label{eq:l2ns1}
    \end{equation}
    From the form of the Lagrangian function, using the property of
    equivalence, we can remove the arbitrary function $l_{2,2,n}\left( x_{n+1} \right)$
    and keep only $l_{2,1,n}\left( x_{n} \right)$.
    So, in \eqref{eq:cc3} this yields:
    \begin{equation}
        \frac{l_{2,1,n}'\left( x_{n} \right)}{A \lambda^{n}}
        +f_{1}\left( x_{n},x_{n-1} \right)
        =\frac{1}{\lambda}\int^{x_{n-1}} \pdv{f_{2}}{x_{n}}\left( x_{n},\xi \right)\ud\xi.
        \label{eq:ccf1}
    \end{equation}
    Differentiation with respect to $x_{n-1}$ yields the following
    equation:
    \begin{equation}
        \pdv{f_{1}}{x_{n-1}}\left( x_{n},x_{n-1} \right)
        =\frac{1}{\lambda}\pdv{f_{2}}{x_{n}}\left( x_{n},x_{n-1} \right).
        \label{eq:closrelpot}
    \end{equation}
    Equation \eqref{eq:closrelpot} stimulates the introduction
    of a \emph{potential} function $V=V\left( x_{n},x_{n-1} \right)$ such
    that:
    \begin{equation}
        f_{1}\left( x_{n},x_{n-1} \right) =
        \frac{1}{\lambda} \pdv{V}{x_{n}}\left( x_{n},x_{n-1} \right),
        \quad
        f_{2}\left( x_{n},x_{n-1} \right) = 
        \pdv{V}{x_{n-1}}\left( x_{n},x_{n-1} \right).
        \label{eq:F1F2pot}
    \end{equation}
    Using such potential we have that equation
    \eqref{eq:closrelpot} is identically satisfied, while
    \eqref{eq:ccf1} reduces
    to $l_{2,1,n}'\left( x_{n} \right)=0$.
    This implies $l_{2,1,n}\left( x_{n} \right)= C_{3,n}$, but this function
    of $n$ can be removed from the Lagrangian as it is a total difference.

    Summing up, we obtained that if an additive fourth-order difference equation 
    \eqref{eq:rec4add} is Lagrangian then it has the following form:
    \begin{equation}
        \begin{aligned}
            G_{-}\left( x_{n+1} \right)x_{n+2} &+ \lambda^{2} G_{-}\left( x_{n-1} \right) x_{n-2}+
            \lambda G_{-}'\left( x_{n} \right) x_{n+1}x_{n-1}
            \\
            &+\pdv{V}{x_{n}}\left( x_{n+1},x_{n} \right)+\lambda\pdv{V}{x_{n}}\left( x_{n},x_{n-1} \right)=0.
        \end{aligned}
        \label{eq:rec4lagrfin}
    \end{equation}
    Letting $g\equiv G_{-}$ equation \eqref{eq:rec4addlagr} follows.
    The constant $K_{1}$ appearing in the Lagrangian can be scaled away
    and we obtain the Lagrangian \eqref{eq:rec4addL}.

    \subsection*{Fix $h$ from \eqref{eq:cc0trisbis}}
    If we fix $h$ from \eqref{eq:cc0trisbis} we obtain:
    \begin{equation}
        h\left( x_{n+1},x_{n},x_{n-1} \right)=
        h_{+}\left( x_{n+1},x_{n} \right)
        +G_{+}\left( x_{n+1} \right) h_{-}\left( x_{n},x_{n-1} \right).
        \label{eq:Ffixed}
    \end{equation}
    After a long calculation which follows the same strategy outlined
    in the case when we fix $l_{n-2}$ from \eqref{eq:cc0trisbis} we find
    that this case implies $G\equiv 0$, and so it is impossible.

    With this we are done with the proof.
\end{proof}

Theorem \ref{thm:structure} characterises completely the additive Lagrangian
fourth-order difference equations \eqref{eq:rec4add}.
An immediate corollary of this theorem is the following one:

\begin{corollary} 
    An additive fourth-order difference 
    equation \eqref{eq:rec4add}
    admits an \emph{autonomous} Lagrangian if and only it has 
    the following form:
    \begin{equation}
        \begin{aligned}
            g\left( x_{n+1} \right)x_{n+2} &+ g\left( x_{n-1} \right) x_{n-2}+
            g'\left( x_{n} \right) x_{n+1}x_{n-1}
            \\
            &+\pdv{V}{x_{n}}\left( x_{n+1},x_{n} \right)+\pdv{V}{x_{n}}\left( x_{n},x_{n-1} \right)=0.
        \end{aligned}
        \label{eq:rec4addlagraut}
    \end{equation}
    In that cases the Lagrangian, up to total difference and multiplication
    by a constant is given by:
    \begin{equation}
        L = g\left( x_{n+1} \right) x_{n}x_{n+2} + V\left( x_{n+1},x_{n} \right).
        \label{eq:rec4addLaut}
    \end{equation}
    \label{cor:structureaut}
\end{corollary}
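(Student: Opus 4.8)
The plan is to read off the corollary as the autonomous specialisation of Theorem \ref{thm:structure}, i.e. the case $\lambda=1$. Recall that Theorem \ref{thm:structure} asserts two things at once: first, that a Lagrangian equation \eqref{eq:rec4add} must take the form \eqref{eq:rec4addlagr}; and second, that every Lagrangian for such an equation is, up to an additive total difference and a multiplicative constant, the single function \eqref{eq:rec4addL}, which carries the explicit prefactor $\lambda^{-n}$. The strategy is therefore to prove the two implications of the corollary by pinning down exactly when this essentially unique Lagrangian can be autonomous, and to observe that this happens precisely for $\lambda=1$.

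For the \emph{if} part I would argue by direct substitution. If the equation has the form \eqref{eq:rec4addlagraut}, then it is nothing but \eqref{eq:rec4addlagr} evaluated at $\lambda=1$; hence Theorem \ref{thm:structure} applies and produces the Lagrangian \eqref{eq:rec4addL} with $\lambda=1$, which is exactly \eqref{eq:rec4addLaut}. Since the prefactor $\lambda^{-n}$ reduces to $1$, this Lagrangian no longer depends explicitly on $n$, so it is autonomous and we are done.

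The \emph{only if} part is where the one genuine point lies. Assuming the equation admits an autonomous Lagrangian, it is in particular variational, so Theorem \ref{thm:structure} forces it into the shape \eqref{eq:rec4addlagr} for some $\lambda$ and tells us that \emph{every} Lagrangian it possesses is equivalent to a constant multiple of \eqref{eq:rec4addL}. The task is to show that the existence of an autonomous representative forces $\lambda=1$. The main obstacle to a one-line argument is that one cannot simply declare that ``$\lambda^{-n}$ must be constant'': a priori the postulated autonomous Lagrangian could differ from \eqref{eq:rec4addL} by a total difference, and one must rule out that such a term absorbs the $n$-dependence.

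To close this gap I would use that the mixed second derivative $\partial^2 L_n/\partial x_{n}\partial x_{n+2}$ is invariant under the equivalence \eqref{eq:lagrdiff}: a total difference $(T_n-\Id)\phi_n$ with $\phi_n=\phi_n(x_{n+1},x_n)$ contributes $\phi_{n+1}(x_{n+2},x_{n+1})-\phi_n(x_{n+1},x_n)$, and neither summand mixes $x_n$ with $x_{n+2}$, so it drops out under $\partial_{x_n}\partial_{x_{n+2}}$. Evaluating this invariant on the canonical form \eqref{eq:rec4addL} yields, up to the overall constant, $\lambda^{-n}g(x_{n+1})$, which by normality \eqref{eq:lagrnormal} is not identically zero. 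If some equivalent Lagrangian is autonomous, this quantity must be independent of $n$, and since $g(x_{n+1})\not\equiv0$ this is possible only if $\lambda=1$. Substituting $\lambda=1$ into \eqref{eq:rec4addlagr} gives \eqref{eq:rec4addlagraut} and into \eqref{eq:rec4addL} gives \eqref{eq:rec4addLaut}, completing the argument.
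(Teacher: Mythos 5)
Your proposal is correct, and its skeleton is the same as the paper's: the corollary is obtained by setting $\lambda=1$ in Theorem \ref{thm:structure}. The paper, however, disposes of the whole statement with the single remark that it ``trivially follows'' from that substitution, whereas you isolate and close the one point that is not entirely trivial: in the only-if direction one must show that the existence of an \emph{autonomous} representative in the equivalence class forces $\lambda=1$, i.e.\ that the factor $\lambda^{-n}$ in \eqref{eq:rec4addL} cannot be hidden inside a total difference. Your device for this --- observing that $\partial^{2}L_{n}/\partial x_{n}\partial x_{n+2}$ is unchanged by adding $\left( T_{n}-\Id \right)f_{n}\left( x_{n+1},x_{n} \right)$ (each of the two summands misses one of the variables $x_{n}$, $x_{n+2}$, so the mixed derivative annihilates both) and scales only by the overall constant, while on the canonical Lagrangian it equals $\lambda^{-n}g\left( x_{n+1} \right)\not\equiv 0$ --- is sound: $n$-independence of $c\,\lambda^{-n}g\left( x_{n+1} \right)$ with $g\not\equiv0$ forces $\lambda^{-n}$ to be constant, hence $\lambda=1$ (and $\lambda=-1$ is correctly excluded, since $(-1)^{-n}$ is not constant). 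In short, your proof and the paper's rest on the same specialisation; yours buys rigour at the cost of length, making explicit the use of the ``up to total difference and multiplication by a constant'' uniqueness clause of Theorem \ref{thm:structure}, which is precisely what legitimises the paper's one-line shortcut.
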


\begin{proof}
    Trivially follows from Theorem \ref{thm:structure} substituting 
    $\lambda=1$ in formul\ae\ \eqref{eq:rec4addlagr} and \eqref{eq:rec4addL}.
\end{proof}

We choose to present corollary \ref{cor:structureaut} as a separate
result, as in section \ref{sec:int} we will discuss the integrability
properties of a subclass with autonomous Lagrangian.

Theorem \ref{thm:structure} gives also a practical test to establish
whether or not a given additive fourth-order equation \eqref{eq:rec4add} 
is Lagrangian without having to apply the full algorithm of \cite{Gubbiotti_dcov}.
That is, given an additive fourth-order difference equation \eqref{eq:rec4add}
the test runs as follows:
\begin{enumerate}
    \item Write the equation clearing the denominators:
        \begin{equation}
            A\left( x_{n+1},x_{n},x_{n-1} \right) x_{n+2}+
            B\left( x_{n+1},x_{n},x_{n-1} \right) x_{n-2}
            +C\left( x_{n+1},x_{n},x_{n-1} \right)=0.
            \label{eq:dencleared}
        \end{equation}
    \item In order to be in the form \eqref{eq:rec4addlagr}
        the functions $A$ and $B$ needs to be of the following form:
        \begin{subequations}
            \begin{align}
            A\left(x_{n+1},x_{n},x_{n-1}  \right) &= K\left( x_{n} \right) g\left( x_{n+1} \right),
            \\
            B\left(x_{n+1},x_{n},x_{n-1}  \right) &= \lambda^{2} K\left( x_{n} \right) g\left( x_{n-1} \right),
            \end{align}
            \label{eq:ABfact}
        \end{subequations}
        for some function $K=K\left( \eta \right)$ and $g=g\left( \xi \right)$ and constant
        $\lambda$.
    \item Using equation \eqref{eq:ABfact} we divide equation 
        \eqref{eq:dencleared} by $K=K\left( x_{n} \right)$ and using the
        definition of $g$ we rewrite equation \eqref{eq:dencleared}
        as:
        \begin{equation}
            \begin{aligned}
            g\left( x_{n+1}\right) x_{n+2} &+ \lambda^{2} g\left( x_{n-1} \right) x_{n-2}
            +\lambda g'\left( x_{n} \right)x_{n+1}x_{n-1}
            \\
            &+R\left( x_{n+1},x_{n},x_{n-1} \right)=0.
            \end{aligned}
            \label{eq:dencleared2}
        \end{equation}
        with a new function $R=R\left( \xi,\eta,\zeta \right)$\footnote{In this section
            and in the next ones we will indicate various placeholder variables
            with Greek letters $\xi$, $\eta$, $\zeta$\dots. We will use these placeholders
            variables when making statements on functions which might have different
        arguments, e.g. the function $g=g\left( \xi \right)$ in equation \eqref{eq:rec4addlagr}.}.
    \item To be in the form \eqref{eq:rec4addlagr} we need to check that:
        \begin{equation}
            \pdv{R}{x_{n+1},x_{n-1}}\left( x_{n+1},x_{n},x_{n-1} \right) \equiv 0.
            \label{eq:Rcond}
        \end{equation}
    \item If the function $R$ in equation \eqref{eq:dencleared2} satisfies
        condition \eqref{eq:Rcond} it implies that we can write:
        \begin{equation}
            \begin{aligned}
                g\left( x_{n+1}\right) x_{n+2}&+\lambda^{2} g\left( x_{n-1} \right) x_{n-2}
            +\lambda g'\left( x_{n} \right)x_{n+1}x_{n-1}
            \\
            &+M\left( x_{n+1},x_{n} \right)+N\left( x_{n},x_{n-1} \right)=0.
            \end{aligned}
            \label{eq:dencleared3}
        \end{equation}
    \item Comparing again with equation \eqref{eq:rec4addlagr} we have that
        the two functions $M=M\left( \xi,\eta \right)$ and 
        $N=N\left(\xi,\eta  \right)$ need to satisfy the following closure
        relation:
        \begin{equation}
            \lambda
            \pdv{M}{\xi}\left( \xi,\eta \right) =
            \pdv{N}{\eta}\left( \xi,\eta \right).
            \label{eq:closureMN}
        \end{equation}
    \item If the closure relation \eqref{eq:closureMN} is satisfied
        then equation \eqref{eq:dencleared} is in the form \eqref{eq:rec4addlagr},
        therefore it is Lagrangian.
        The function $V$ can be computed using from the following integral:
        \begin{equation}
            V\left( x_{n+1},x_{n} \right) = 
            \int_{\Gamma} M\left( x_{n+1},x_{n} \right)\ud x_{n} +
            \lambda^{-1} N\left( x_{n+1},x_{n} \right) \ud x_{n+1},
            \label{eq:Vcomp}
        \end{equation}
        on a properly chosen path $\Gamma\subset \R^{2}$.
\end{enumerate}

\begin{remark}
    In the above discussion we tacitly assumed
    that the functions $M=M\left( \xi,\eta \right)$ and 
    $N=N\left(\xi,\eta  \right)$ were defined on some simply-connected
    domain $D\subset \R^{2}$, e.g. a star-shaped domain.
    In practice we need to check to check this assumption in order
    to carry out the last step of this test.
    If this hypothesis is not satisfied we cannot use formula
    \eqref{eq:Vcomp}, but we need to directly solve the
    overdetermined system of partial differential equations:
    \begin{equation}
        \pdv{V}{x_{n}}=M(x_{n+1},x_{n}), \quad
        \lambda\pdv{V}{x_{n+1}}=N(x_{n+1},x_{n}).
        \label{eq:Vdiff}
    \end{equation}
    A simple example of this occurrence is given by the following
    additive fourth-order difference equation:
    \begin{equation}
       x_{n+1} x_{n+2}+x_{n-1} x_{n+1}+ x_{n-1} x_{n-2}
       -\frac{x_{n+1}}{x_{n}^2+x_{n+1}^2}+\frac{x_{n-1}}{x_{n}^2+x_{n-1}^2}=0.
        \label{eq:ex0}
    \end{equation}
    \label{re:assmp}
    In this case it easy to see, as the denominator are already cleared,
    that $g\left( \xi \right) =\xi$, $\lambda^{2}=1$ and:
    \begin{equation}
        R\left( x_{n+1},x_{n},x_{n-1} \right)=
        -\frac{x_{n+1}}{x_{n}^2+x_{n+1}^2}+\frac{x_{n-1}}{x_{n}^2+x_{n-1}^2}.
        \label{eq:Rex0}
    \end{equation}
    The condition \eqref{eq:Rcond} is satisfied, and we are left with:
    \begin{equation}
        M\left( \xi,\eta \right)=-\frac{\xi}{\xi^{2}+\eta^2},
        \quad
        N\left( \xi,\eta \right)=\frac{\eta}{\xi^{2}+\eta^2}.
        \label{eq:MNex0}
    \end{equation}
    The closure condition gives $\lambda=1$.
    However, since the functions $M$ and $N$ are defined in 
    the multiply-connected
    domain $D=\R^{2}\setminus \left\{ 0 \right\}$, it is not enough.
    Indeed, it is known that it is not possible to construct the function $V$ 
    using formula \eqref{eq:Vcomp} as the line intengral depends 
    on the path \cite{Dubrovin1990}.
    However, the function
    \begin{equation}
        V\left( x_{n+1},x_{n} \right) = \arctan\left( \frac{x_{n+1}}{x_{n}} \right),
        \label{eq:Vex0}
    \end{equation}
    satisfies the partial differential equation \eqref{eq:Vdiff}.
    Therefore the additive fourth-order difference equation
    \eqref{eq:ex0} is variational with the following Lagrangian:
    \begin{equation}
        L= x_{n}x_{n+1}x_{n+2}+ \arctan\left( \frac{x_{n+1}}{x_{n}} \right).
        \label{eq:Lex0}
    \end{equation}
\end{remark}

\subsection{Examples}

We now discuss three explicit examples of the usage of the test we presented.
In particular in example \ref{ex:filter} and \ref{ex:lineq} we show 
how the test derived from Theorem
\ref{thm:structure} can be used to filter out Lagrangian examples
out of parametric families of equations.

\begin{example}
    Consider the following fourth-order difference equation:
    \begin{equation}
        \frac{x_{n+2}}{x_{n-1}^3}+\frac{x_{n-2}}{x_{n+1}^3}
        +\frac{1}{x_{n-1}^2 x_{n+1}^2}\left[
            3 x_{n}^2-\frac{\mu}{(x_{n}^2-1) x_{n-1} x_{n+1}}
    \right]=0. 
        \label{eq:ex1}
    \end{equation}
    Taking the numerator we find:
    \begin{equation}
        A = (x_{n}^2-1) x_{n+1}^{3},\quad
        B = (x_{n}^2-1) x_{n-1}^{3}
        \label{eq:ABex1}
    \end{equation}
    therefore $K=x_{n}^2-1$, $g=\xi^{3}$ and $\lambda^{2}=1$.
    With this definitions we can rewrite equation
    \eqref{eq:ex1} as:
    \begin{equation}
        x_{n+1}^3x_{n+2}+x_{n-1}^3x_{n-2}
        + 3 \lambda x_{n}^{2} x_{n+1} x_{n-1}
        =R\left( x_{n+1},x_{n},x_{n-1} \right),
        \label{eq:ex1b}
    \end{equation}
    with:
    \begin{equation}
        R \left( x_{n+1},x_{n},x_{n-1} \right)= 
        -\frac{\mu}{x_n^2-1}-3 x_{n-1} x_{n+1} (\lambda-1) x_{n}^2.
        \label{eq:Rex1}
    \end{equation}
    The compatibility condition \eqref{eq:Rcond} gives $\lambda=1$,
    and implies:
    \begin{equation}
        M\left( \xi,\eta \right)  = 
        -\frac{1}{2}\frac{\mu}{\xi^2-1},
        \quad
        N\left( \xi,\eta \right) =
        -\frac{1}{2}\frac{\mu}{\eta^2-1}.
        \label{eq:MNex1}
    \end{equation}
    We can think of the functions $M$ and $N$ as defined
    on the star-shaped domain $D = \left( -1,1 \right)\times\left( -1,1 \right)$
    and compute the function $V$ with formula \eqref{eq:Vcomp}:
    \begin{equation}
        V\left( x_{n+1},x_{n} \right) =
        \frac{\mu}{2}\left[ \arctanh\left( x_{n} \right)+\arctanh\left( x_{n+1} \right) \right].
        \label{eq:Vex1}
    \end{equation}
    The Lagrangian for equation \eqref{eq:ex1} is then given by:
    \begin{equation}
        L=
        x_{n+1}^{3} x_{n} x_{n+2}+
        \frac{\mu}{2}\left[ \arctanh\left( x_{n} \right)+\arctanh\left( x_{n+1} \right) \right].
        \label{eq:Lex1}
    \end{equation}
\end{example}

\begin{example}
    \label{ex:filter}
    Consider the family of fourth-order difference equations:
    \begin{equation}
       x_{n-1}^2 x_{n-2}+x_{n+1}^2 x_{n+2}
       +\frac{1}{1-x_{n}}+x_{n} (a_{02} x_{n-1}^2+a_{11} x_{n-1} x_{n+1}+a_{20} x_{n+1}^2)=0, 
        \label{eq:ex2}
    \end{equation}
    depending parametrically on the three parameters $a_{ij}$, $i+j=2$.
    We will find for which values of these parameters equation \eqref{eq:ex2}
    is Lagrangian.

    First of all we notice that equation \eqref{eq:ex2} has already the numerators
    cleared and that $A=x_{n+1}^{2}$, $B=x_{n-1}^{2}$.
    It follows that $K=1$, $g=\xi^{2}$ and $\lambda^{2}=1$. 
    We can then write down equation \eqref{eq:ex2} as:
    \begin{equation}
        x_{n-1}^2 x_{n-2}+x_{n+1}^2 x_{n+2}+2\lambda x_{n}x_{n+1}x_{n-1}+
        R\left( x_{n+1},x_{n},x_{n-1} \right)=0,
        \label{eq:ex2b}
    \end{equation}
    where the function $R$ is given by:
    \begin{equation}
       R = \frac{1}{1-x_{n}}+x_{n} 
       \left[a_{02} x_{n-1}^2+\left(a_{11}-2\lambda\right) x_{n-1} x_{n+1}+a_{20} x_{n+1}^2\right], 
        \label{eq:Rex2}
    \end{equation}
    Imposing the compatibility condition \eqref{eq:Rcond}
    we obtain $a_{11} = 2\lambda$.
    Using this definition we have the following expressions for
    the functions $M$ and $N$:
    \begin{equation}
        M\left( \xi,\eta \right) = \frac{1}{2}\frac{1}{1-\eta}+a_{20}\eta  \xi^2,
        \quad
        N\left( \xi,\eta \right) = \frac{1}{2}\frac{1}{1-\xi}+a_{02}\xi  \eta^2.
        \label{eq:MNex2}
    \end{equation}
    The closure relation \eqref{eq:closureMN} is then:
    \begin{equation}
            \lambda\pdv{M}{\xi}\left( \xi,\eta \right) -
            \pdv{N}{\eta}\left( \xi,\eta \right)
            =2\xi\eta \left(\lambda a_{20}-a_{02} \right)\equiv0.
        \label{eq:closureMNex2}
    \end{equation}
    This implies that equation \eqref{eq:ex2} with $a_{11}=2\lambda$ is not Lagrangian
    unless $a_{02}=\lambda a_{20}=\lambda\mu$.
    As $M$ and $N$ are defined on the star-shaped domain $D=(1,\infty)\times(1,\infty)$
    we obtain:
    \begin{equation}
        V (x_{n+1},x_{n})
        =
        \frac{1}{2}
        \left[  
            \mu x_{n+1}^2 x_{n}^2-\log(x_{n}-1)-\frac{1}{\lambda}\log(x_{n+1}-1)
        \right].
        \label{eq:Vex2}
    \end{equation}

    Finally we obtained that the one-parameter family of additive fourth-oder
    difference equations:
    \begin{equation}
        \begin{aligned}
       x_{n-1}^2 x_{n-2}&+x_{n+1}^2 x_{n+2}
       +\frac{1}{1-x_{n}}
        \\
       &+x_{n} 
       \left[\mu\left( \lambda x_{n-1}^2+x_{n+1}^2\right) +2 \lambda x_{n-1} x_{n+1}\right]=0,
        \end{aligned}
        \quad
        \lambda^{2}=1,
        \label{eq:ex2c}
    \end{equation}
    can be derived by the following Lagrangian:
    \begin{equation}
        \begin{aligned}
        L_{n} &= \lambda^{-n}\biggl\{ x_{n+1}^{2}x_{n}x_{n+2}
        \biggr.
        \\
        &+\left.
        \frac{1}{2}
        \left[  
            \mu x_{n+1}^2 x_{n}^2-\log(x_{n}-1)-\frac{1}{\lambda}\log(x_{n+1}-1)
    \right]\right\},
        \end{aligned}
    \quad \lambda^{2}=1.
        \label{eq:Lex2}
    \end{equation}
\end{example}

\begin{example}
    \label{ex:lineq}
    In this example we classify the most general variational fourth-order linear
    difference equation:
    \begin{equation}
        x_{n+2} + c_{1} x_{n+1} + c_{0} x_{n} + c_{-1} x_{n-1} + c_{-2} x_{n-2} +b =0.
        \label{eq:lingen4th}
    \end{equation}
    We normalised the equation with respect to the coefficient
    of $x_{n+2}$, which must be different from zero.
    Then we notice that also $c_{-2}\neq0$ in order to
    have a proper fourth-order equation.

    First of all we notice that equation \eqref{eq:lingen4th} is obviously
    denominator free.
    Then $A=1$, $B=c_{-2}$.
    Therefore it follows that $K=1$, $g=1$ and $\lambda^{2}=c_{-2}$. 
    We can then write down equation \eqref{eq:lingen4th} as:
    \begin{equation}
        x_{n+2}+c_{-2} x_{n-2}+
        R\left( x_{n+1},x_{n},x_{n-1} \right)=0,
        \label{eq:lingen4thb}
    \end{equation}
    where the function $R$ is given by:
    \begin{equation}
        R = c_{-1} x_{n-1}+c_{0} x_{n}+c_{1} x_{n+1}+b.
       \label{eq:Rlin}
    \end{equation}
    The compatibility condition \eqref{eq:Rcond} is identically satisfied. 
    Using this definition we have the following expressions for
    the functions $M$ and $N$:
    \begin{equation}
        M\left( \xi,\eta \right) = c_{1} \xi +\frac{c_{0}}{2} \eta+\frac{b}{2} ,
        \quad
        N\left( \xi,\eta \right) = c_{-1} \eta + \frac{c_{0}}{2} \xi+\frac{b}{2}.
        \label{eq:MNlin}
    \end{equation}
    The closure relation \eqref{eq:closureMN} is then:
    \begin{equation}
        c_{-2}^{1/2}\pdv{M}{\xi}\left( \xi,\eta \right) -
            \pdv{N}{\eta}\left( \xi,\eta \right)
            =c_{-2}^{1/2}c_{1}-c_{-1}\equiv0.
        \label{eq:closureMNlin}
    \end{equation}
    This implies that equation \eqref{eq:lingen4th} is Lagrangian
    if and only if $c_{-2}=\left(c_{-1}/c_{1}\right)^{2}$.
    As $M$ and $N$ are defined on the whole $\R^{2}$
    we obtain:
    \begin{equation}
        V (x_{n+1},x_{n})
        =\frac{c_{0}}{4}\left( {\frac {c_{{1}}x_{{n+1}}^{2}}{c_{{-1}}}}+x_{{n}}^{2} \right)
        +\frac{b}{2} \left( x_{{n}}+ {\frac {c_{{1}}x_{{n+1}}}{c_{{-1}}}} \right)
        +c_{{1}}x_{{n}}x_{{n+1}}
        \label{eq:Vlin}
    \end{equation}

    We obtained that the most general Lagrangian fourth-order difference
    equation has the following from:
    \begin{equation}
        x_{n+2} + c_{1} x_{n+1} + c_{0} x_{n} + c_{-1} x_{n-1} 
        + \left(\frac{c_{-1}}{c_{1}}\right)^{2}x_{n-2} +b =0.
        \label{eq:linfin}
    \end{equation}
    and the following Lagrangian:
    \begin{equation}
        L_{n} = \left(\frac{c_{1}}{c_{-1}}\right)^{n}
        \left[
            x_{n}x_{n+2}+
        \frac{c_{0}}{4}\left( {\frac {c_{{1}}x_{{n+1}}^{2}}{c_{{-1}}}}+x_{{n}}^{2} \right)
        +\frac{b}{2} \left( x_{{n}}+ {\frac {c_{{1}}x_{{n+1}}}{c_{{-1}}}} \right)
        +c_{{1}}x_{{n}}x_{{n+1}}
        \right].
        \label{eq:Llin}
    \end{equation}
    Notice that the above Lagrangian becomes independent of $n$ if and
    only if $c_{1}=c_{-1}$.
\end{example}

In the next section we present some results on the integrability
of the additive Lagrangian fourth-order difference equations \eqref{eq:rec4add}.

\section{Integrability results}
\label{sec:int}

In this section we address to the problem of finding some Liouville
integrable examples out of the general family of additive fourth-order 
equations possessing an autonomous Lagrangian, as characterised by 
corollary \ref{cor:structureaut}.
We search for Liouville integrable cases
in the case of fourth-order additive difference equations admitting
an autonomous Lagrangian since Lioville integrability for difference 
equations is is defined for autonomous symplectic structures with autonomous
invariants.
We make an ansatz on the form of the invariant which will allow
us to compare our results with the recent paper \cite{GJTV_class}.
In particular we will show that within the Lagrangian framework we
are able to produce integrable equations imposing only one invariant,
as the second one will be admitted naturally by equation.
Finally, we divide the integrable cases in five canonical forms,
classified up to linear transformations.

\subsection{Additive equations with an invariant multi-affine in $x_{n+1}$ and $x_{n-2}$}
\label{sss:int}

In \cite{GJTV_class} were classified fourth-order difference
equations using the following assumptions:
\begin{enumerate}[label={\Alph*.},font={\bfseries}]
    \item The equation possesses two \emph{symmetric} polynomial invariant
        that is, two invariants 
        $I= I\left( x_{n+1},x_{n},x_{n-1},x_{n-2} \right)$, which are
        polynomial functions and such that:
        \begin{equation}
            I\left( x_{n-2},x_{n-1},x_{n},x_{n+1} \right)
            =
            I\left( x_{n+1},x_{n},x_{n-1},x_{n-2} \right).
            \label{eq:syminv}
        \end{equation}
    \item One invariant, called $I_\text{low}$, is
        such that:
        \begin{equation}
            \deg_{x_{n+1}} I_\text{low} =\deg_{x_{n-2}} I_\text{low} = 1,
            \quad
            \deg_{x_{n}} I_\text{low} =\deg_{x_{n-1}} I_\text{low} = 3,
            \label{eq:dIlow}
        \end{equation}
        and its coefficients interpolates the form of the lowest
        order invariant of the autonomous $\dP_\text{I}^{(2)}$ and $\dP_\text{II}^{(2)}$
        equations (see \cite{GJTV_class,JoshiViallet2017} for details).
    \item One invariant, called $I_\text{high}$, is
        such that:
        \begin{equation}
            \deg_{x_{n+1}} I_\text{high} =\deg_{x_{n-2}} I_\text{high} = 2,
            \quad
            \deg_{x_{n}} I_\text{high} =\deg_{x_{n-1}} I_\text{high} = 4.
            \label{eq:dIhigh}
        \end{equation}
\end{enumerate}

\begin{remark}
    The invariant $I_\text{low}$ is not a affine function, yet it is
    affine in the variables $x_{n+1}$ and $x_{n-2}$.
    A function with this property is said to be \emph{multi-affine}
    with respect to the variables $x_{n+1}$ and $x_{n-2}$.
    \label{rem:multin}
\end{remark}

Within this framework six different equations were derived. 
Some were integrable, some were non-integrable according the algebraic
entropy criterion \cite{BellonViallet1999,FalquiViallet1993,Veselov1992}.
It was proved, following \cite{Gubbiotti_dcov}, that not all integrable 
cases were variational.
The non-variational ones admitted one additional invariant, explaining
integrability in the \emph{na\"ive} sense.
However, variational structure were a key feature in understanding the
integrability of the variational cases.

Now we will discuss the Lioville integrability of
variational fourth-order equations \eqref{eq:rec4addlagr}.
Our final result is stated at the end of this section in Theorem \ref{thm:intmultinlin}.
This result unify the result obtained in \cite{GJTV_class}
and to underline the power of the variational approach.
Our starting point is the existence of a
\emph{single invariant multi-affine with respect to the variables $x_{n+1}$ and $x_{n-2}$},
which is characterised by the following theorem:

\begin{theorem} 
    An additive Lagrangian difference equation of the form \eqref{eq:rec4addlagraut}
    admits a multi-affine invariant with respect to the variables $x_{n+1}$
    and $x_{n-2}$ of the following form:
    \begin{equation}
        \begin{aligned}
            I\left( x_{n+1},x_{n},x_{n-1} ,x_{n-2} \right) &= x_{n+1} P_{1}\left( x_{n},x_{n-1} \right)+x_{n-2} P_{2}\left( x_{n},x_{n-1} \right)
            \\
            &+ x_{n+1}x_{n-2} P_{3}\left( x_{n},x_{n-1} \right) + P_{4}\left( x_{n},x_{n-1} \right),
        \end{aligned}
        \label{eq:int1xx1}
    \end{equation}
    where $P_{i}=P_{i}\left( x_{n},x_{n-1} \right)$ are \emph{a priori}
    arbitrary functions if and only if the following conditions hold
    true:
    \begin{itemize}
        \item The function $g=g\left( \xi \right)$ is a second order
            polynomial in its variables:
            \begin{equation}
                g\left( \xi \right) = A_{1} \xi^{2}+A_{2}\xi+A_{3}.
                \label{eq:gxi}
            \end{equation}
        \item The function $V=V\left( \xi,\eta \right)$ has the following
            form:
            \begin{equation}
                V  = W(\eta)+\frac{A_{1}}{2} \xi^2 \eta^2
                    +A_2 \xi^2 \eta+A_2 \xi \eta^2+A_7 \xi \eta,
                \label{eq:Vxieta}
            \end{equation}
            where the function $W=W\left(\eta  \right)$ is given by integrating:
            \begin{equation}
                W'\left( \eta \right)= 
                \frac{A_2^2 \eta^3+A_2 A_3 \eta^2+A_2 A_7 \eta^2+A_2 A_8 \eta+A_3^2 \eta+A_3 A_8+A_6 \eta+A_5}{A_1 \eta^2+A_2 \eta+A_3},
                \label{eq:Weta}
            \end{equation}
            with initial condition $W\left( 0 \right)=0$.
        \item The functions $P_{i}=P_{i}\left( x_{n},x_{n-1} \right)$ are polynomials 
            in their arguments and have the following form:
            \begin{subequations}
                \begin{align}
                    P_{1}\left( x_{n},x_{n-1} \right) &= -x_{n} 
                    g\left( x_{n} \right)g\left( x_{n-1} \right),
                    \label{eq:P1f}
                    \\
                    P_{2}\left( x_{n},x_{n-1} \right) &= -x_{n-1} 
                    g\left( x_{n} \right)g\left( x_{n-1} \right),
                    \label{eq:P2f}
                    \\
                    P_{3}\left( x_{n},x_{n-1} \right) &= 
                    g\left( x_{n} \right)g\left( x_{n-1} \right),
                    \label{eq:P3f}
                    \\
                    P_{4}\left( x_{n},x_{n-1} \right) &
                    \begin{aligned}[t]
                        &=
                        -x_{n-1}^2 g\left( x_{n} \right) 
                        \left[(A_1 x_{n}+A_2) x_{n-1}+ (2 A_2 x_{n}+A_7) \right]
                        \\
                        &-
                        \left[(A_1 A_3+A_2^2) x_{n}^3+A_2 (2 A_3+A_7) x_{n}^2\right.
                        \\
                        &\phantom{+}\left.+(A_2 A_8+2 A_3^2+A_6) x_{n}+A_3 A_8+A_5\right]
                        x_{n-1}
                        \\
                        &-x_{n} (A_2 A_3 x_{n}^2+A_3 A_7 x_{n}+A_3 A_8+A_5)
                    \end{aligned}
                    \label{eq:P4f}
                \end{align}
                \label{eq:Pf}
            \end{subequations}
    \end{itemize}
    \label{thm:integrability}
\end{theorem}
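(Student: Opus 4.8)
The plan is to impose the invariance condition \eqref{eq:invdef} directly on the multi-affine ansatz \eqref{eq:int1xx1} and reduce it to functional equations for $g$, $V$ and the $P_{i}$. Writing $I(x_{n+2},x_{n+1},x_{n},x_{n-1})-I(x_{n+1},x_{n},x_{n-1},x_{n-2})=0$ and using \eqref{eq:rec4addlagraut} to express $x_{n+2}$ as an affine function of $x_{n-2}$, I clear the denominator $g(x_{n+1})$ and obtain an identity in the four independent quantities $x_{n+1},x_{n},x_{n-1},x_{n-2}$. The structural observation that drives the whole proof is that, after this substitution, the identity is \emph{affine} in $x_{n-2}$: this variable enters only linearly, through the coefficient $-g(x_{n-1})/g(x_{n+1})$ of $x_{n-2}$ inside $x_{n+2}$ and through the explicit $x_{n-2}$ terms of the shifted invariant. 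Hence it splits into two independent equations, obtained by matching the coefficients of $x_{n-2}^{1}$ and of $x_{n-2}^{0}$.

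First I treat the coefficient of $x_{n-2}^{1}$, which involves only $g,P_{1},P_{2},P_{3}$ and reads, after clearing $g(x_{n+1})$,
\begin{equation*}
-g(x_{n-1})\bigl[P_{1}(x_{n+1},x_{n})+x_{n-1}P_{3}(x_{n+1},x_{n})\bigr]
= g(x_{n+1})\bigl[P_{2}(x_{n},x_{n-1})+x_{n+1}P_{3}(x_{n},x_{n-1})\bigr].
\end{equation*}
Dividing by $g(x_{n+1})g(x_{n-1})$, the right-hand side is manifestly affine in $x_{n+1}$, so the left-hand side must be too; this forces $P_{1}/g$ and $P_{3}/g$ to be affine in their first argument. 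Substituting these affine forms back and matching the remaining coefficients pins everything down and yields the product structure \eqref{eq:P1f}--\eqref{eq:P3f}, namely $P_{3}=g(x_{n})g(x_{n-1})$, $P_{1}=-x_{n}g(x_{n})g(x_{n-1})$ and $P_{2}=-x_{n-1}g(x_{n})g(x_{n-1})$, uniquely up to the overall scale of $I$.

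The substantive part is the coefficient of $x_{n-2}^{0}$, which, after inserting the forms of $P_{1},P_{2},P_{3}$, couples $P_{4}$, the potential $V$ and the function $g$. Because $P_{4}$ appears with the two shifted arguments $(x_{n+1},x_{n})$ and $(x_{n},x_{n-1})$, this is a genuine functional equation; the route through it is to apply $\partial^{2}/\partial x_{n+1}\partial x_{n-1}$, which annihilates the two shifted copies of $P_{4}$ and decouples the dependence on $V$ and $g$. The resulting relation no longer contains $P_{4}$, and its consistency---the requirement that two expressions built from disjoint sets of variables agree---is exactly what forces $g$ to be the quadratic polynomial \eqref{eq:gxi} and fixes the bilinear-plus-$W$ shape of $V$ in \eqref{eq:Vxieta}, the remaining single-variable freedom $W(\eta)$ being constrained by the first-order ODE \eqref{eq:Weta}. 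Re-integrating the equation with $P_{4}$ restored then produces the explicit polynomial \eqref{eq:P4f}. I expect this decoupling-and-consistency step to be the main obstacle, both because the bookkeeping of the $V$-terms is heavy and because one must verify that the separation constants are mutually compatible; the constant of integration in $W$ only reflects the freedom to add a constant to $V$ (which changes neither \eqref{eq:rec4addlagraut} nor $I$) and is normalised by $W(0)=0$.

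Finally, the converse is a direct verification: assuming $g$ quadratic, $V$ of the form \eqref{eq:Vxieta}--\eqref{eq:Weta} and the $P_{i}$ given by \eqref{eq:Pf}, I substitute back into $I(x_{n+2},x_{n+1},x_{n},x_{n-1})-I(x_{n+1},x_{n},x_{n-1},x_{n-2})$, eliminate $x_{n+2}$ through \eqref{eq:rec4addlagraut}, and check that the coefficients of both $x_{n-2}^{1}$ and $x_{n-2}^{0}$ vanish identically, so that \eqref{eq:int1xx1} is a genuine invariant. This closes the ``if and only if''.
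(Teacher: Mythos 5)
Your skeleton matches the paper's proof in outline: substitute \eqref{eq:rec4addlagraut} into the invariance condition, note that the result is affine in $x_{n-2}$, and split into the two coefficient equations (the paper's \eqref{eq:fintcondsysa}--\eqref{eq:fintcondsysb}); your $\partial^{2}/\partial x_{n+1}\partial x_{n-1}$ device for eliminating the two shifted copies of $P_{4}$ is essentially the paper's iterated solve-and-differentiate strategy. The genuine gap is in your resolution of the coefficient of $x_{n-2}^{1}$. You claim that affineness of $P_{1}/g$ and $P_{3}/g$ in the first argument, plus coefficient matching, ``pins everything down'' and yields \eqref{eq:P1f}--\eqref{eq:P3f} uniquely up to the overall scale of $I$. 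That is false. Carrying out your own matching carefully (write $\hat{p}_{i}=P_{i}/[g(x_{n})g(x_{n-1})]$; bi-affineness of $\hat{p}_{3}$ and the mixed $x_{n+1}x_{n-1}$ terms only force $\hat{p}_{3}=c_{0}+c_{1}(\xi-\eta)$, after which the equation separates through an arbitrary function of the middle variable) one finds that the general solution of that single functional equation is the paper's intermediate form \eqref{eq:Pintermed}:
\begin{equation*}
P_{3}=g(x_{n})\left[C_{1}+C_{2}\left(x_{n-1}-x_{n}\right)\right]g(x_{n-1}),
\qquad
P_{1}=-g(x_{n})\left[x_{n}\left(C_{1}-C_{2}x_{n-1}\right)g(x_{n-1})-P(x_{n-1})\right],
\end{equation*}
with $P_{2}$ analogously, where $C_{2}$ is a free constant and $P(\cdot)$ a free function of one variable; direct substitution confirms these satisfy the $x_{n-2}$-coefficient equation identically for \emph{every} $C_{2}$ and $P$. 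This freedom is not removable by rescaling $I$: in the paper it is eliminated only downstream, by confronting it with the second functional equation — condition \eqref{eq:C2s}, $C_{2}\,g\,g'=0$, forces $C_{2}=0$, and further compatibility conditions dispose of $P$. Since you discard these branches at the outset, your ``only if'' direction is incomplete: you never rule out multi-affine invariants built on the $C_{2}\neq 0$ or $P\not\equiv 0$ branches, which is precisely the nontrivial part of the elimination.

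The remainder of your plan is sound and parallels the paper: the mixed derivative annihilates both shifted copies of $P_{4}$; iterated elimination of the $x_{n+1}$- and $x_{n-1}$-dependence then forces $g'''=0$ (the paper's \eqref{eq:gcondfin}), hence \eqref{eq:gxi}; the shape \eqref{eq:Vxieta} of $V$ with $W$ constrained by the ODE \eqref{eq:Weta} follows, your interpretation of $W(0)=0$ as fixing the additive-constant freedom in $V$ is correct, and the converse is indeed a direct verification. But to repair the proof you must carry $C_{2}$ and $P$ (and the attendant extra terms in $V$, cf.\ the term $x_{n}P(x_{n-1})/C_{1}$ in \eqref{eq:Vintsol2}) through the $x_{n-2}^{0}$ equation and show there that they are forced to vanish, rather than asserting uniqueness where none yet holds.
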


\begin{proof}
    The proof of this theorem is mainly computational using the explicit form 
    of the invariant \eqref{eq:int1xx1} and that of
    the general additive variational fourth-order difference equation \eqref{eq:rec4addlagr}.
    The starting point is the definition of invariant applied to 
    \eqref{eq:int1xx1}, that is:
    \begin{equation}
        \begin{aligned}
            &x_{n+2} P_{1}\left( x_{n+1},x_{n} \right)+x_{n-1} P_{2}\left( x_{n+1},x_{n} \right)
            + x_{n+2}x_{n-1} P_{3}\left( x_{n+1},x_{n} \right)
            \\
             + &P_{4}\left( x_{n+1},x_{n} \right)
            -x_{n+1} P_{1}\left( x_{n},x_{n-1} \right)-x_{n-2} P_{2}\left( x_{n},x_{n-1} \right)
            \\
            -&x_{n+1}x_{n-2} P_{3}\left( x_{n},x_{n-1} \right) - P_{4}\left( x_{n},x_{n-1} \right)=0.
        \end{aligned}
        \label{eq:fintcond}
    \end{equation}
    After substituting the form of equation \eqref{eq:rec4addlagr} no function
    depends on $x_{n-2}$, so we can take te coefficients with respect to it.
    This yields the following system of functional equations which must be 
    identically satisfied:
    \begin{subequations}
        \begin{gather}
            \begin{aligned}
                &\frac{g(x_{n-1}) P_{1}(x_{n+1}, x_{n})}{g(x_{n+1})}
                +\frac{g(x_{n-1}) x_{n-1} P_{3}(x_{n+1}, x_{n})}{g(x_{n+1})}
                \\
                +&P_{2}(x_{n}, x_{n-1})+x_{n+1} P_{3}(x_{n}, x_{n-1})
                =0,
            \end{aligned}
            \label{eq:fintcondsysa}
            \\
            \begin{aligned}
                &x_{n-1} P_{2}(x_{n+1}, x_{n})
                -x_{n+1} P_{1}(x_{n}, x_{n-1})
                +P_{4}(x_{n+1}, x_{n})
                -P_{4}(x_{n}, x_{n-1})
                \\
                -&\left(g'(x_{n}), x_{n}) x_{n-1} x_{n+1}+\pdv{V(x_{n+1}, x_{n})}{x_{n}}+\pdv{V(x_{n}, x_{n-1})}{x_{n}}\right) \times
                \\
                &\phantom{\times}\left(\frac{P_{1}(x_{n+1}, x_{n})}{g(x_{n+1})}
                + \frac{x_{n-1} P_{3}(x_{n+1}, x_{n})}{g(x_{n+1})}\right)=0.
                \\
            \end{aligned}
            \label{eq:fintcondsysb}
        \end{gather}
        \label{eq:fintcondsys}
    \end{subequations}
    To solve the above equation it is possible to use the following
    strategy:
    \begin{itemize}
        \item Solve either equation \eqref{eq:fintcondsysa} or \eqref{eq:fintcondsysb}
            with respect to one of the unknown functions, 
            e.g. $P_{2}\left( x_{n},x_{n-1} \right)$.
        \item Differentiate with respect to a variable upon which
            such unknown function does not depend, 
            e.g. $x_{n+1}$ in the case of $P_{2}\left( x_{n},x_{n-1} \right)$.
        \item Iterate this procedure until a differential equation containing
            only functions depending on the same set of variables is obtained.
        \item Solve the resulting differential equation and use the previous
            equations as compatibility conditions.
    \end{itemize}
    The outlined procedure is long since several different functions of different variables
    are involved, but only consists of trivial steps.
    For instance, applying this strategy to equation \eqref{eq:fintcondsysa}
    we are able to fix the form of the functions $P_{i}$ in terms of
    the function $g$:
    \begin{subequations}
        \begin{align}
            P_{1}\left( x_{n},x_{n-1} \right) &=
            -g(x_{n}) \left[x_{n} (C_{1}-C_{2} x_{n-1}) g(x_{n-1})-P(x_{n-1})\right]
            \label{eq:P1intermed}
            \\
            P_{2}\left( x_{n},x_{n-1} \right) &=
        -\left[x_{n-1} (C_{2} x_{n}+C_{1}) g(x_{n})+P(x_{n})\right] g(x_{n-1})
            \label{eq:P2intermed}
            \\
            P_{3}\left( x_{n},x_{n-1} \right) &=
            g(x_{n}) \left[(x_{n-1}-x_{n}) C_{2}+C_{1}\right] g(x_{n-1})
            \label{eq:P3intermed}
        \end{align}
        \label{eq:Pintermed}
    \end{subequations}
    where $P=P\left( \xi \right)$ is a still undetermined function and
    $C_{i}$ are constants.
    This values for the functions $P_{i}$ completely solves the first functional
    equation \eqref{eq:fintcondsysa}.
    Inserting this values in the second functional equation \eqref{eq:fintcondsysb}
    we apply the same strategy with respect to the function $V=V\left( x_{n+1},x_{n} \right)$
    and then with respect to $P_{4}\left( x_{n+1},x_{n} \right)$.
    After some steps we find the following equation:
    \begin{equation}
        C_{2} g\left( x_{n} \right) g'\left( x_{n} \right) =0.
        \label{eq:C2s}
    \end{equation}
    This equation implies $C_{2}=0$, as otherwise the function $g=g\left( \xi \right)$
    will be a trivial constant.
    Substituting such values for $C_{2}$ we obtain the following equation for
    $V=V\left( x_{n},x_{n-1} \right)$:
    \begin{equation}
        -C_{1} g''(x_{n-1}) x_{n} + 2 C_{1} g'(x_{n})
        -C_{1} \pdv{V(x_{n}, x_{n-1})}{*{2}{x_{n-1}}, x_{n}}
        +P''(x_{n-1})
        =0.
        \label{eq:Vint}
    \end{equation}
    This last differential equation is readily solved to give the form of
    $V$ in terms of $g$ and $P$:
    \begin{equation}
        \begin{aligned}
            V(x_{n}, x_{n-1})&=
            V_{3}(x_{n})+V_{2}(x_{n}) x_{n-1}+V_{1}(x_{n-1})
            \\
            &-\frac{x_{n}^2}{2} g(x_{n-1})
            +g(x_{n}) x_{n-1}^2+\frac{x_{n}}{C_{1}} P(x_{n-1}).
        \end{aligned}
        \label{eq:Vintsol}
    \end{equation}
    By arbitrariness of $V_{1}\left( x_{n-1} \right)$ we can write 
    $V_{1}\left( x_{n-1} \right)=W\left( x_{n-1} \right) - V_{3}\left( x_{n-1} \right)$
    and remove the total difference $V_{3}\left( x_{n} \right)-V_{3}\left( x_{n-1} \right)$.
    That is, we can write $V\left( x_{n},x_{n-1} \right)$ as:
    \begin{equation}
        \begin{aligned}
            V(x_{n}, x_{n-1})&=
            V_{2}(x_{n}) x_{n-1}+W(x_{n-1})
            \\
            &-\frac{x_{n}^2}{2} g(x_{n-1})
            +g(x_{n}) x_{n-1}^2+\frac{x_{n}}{C_{1}} P(x_{n-1}).
        \end{aligned}
        \label{eq:Vintsol2}
    \end{equation}
    Going back to equation \eqref{eq:fintcondsysb} and removing
    iteratively all the functions depending on $x_{n+1}$ and $x_{n-1}$
    we finally find the following condition on $g$:
    \begin{equation}
        3C_1 g'''(x_{n}) g(x_{n})^3 = 0.
        \label{eq:gcondfin}
    \end{equation}
    As $g$ needs to be non-trivial and $C_{1}\neq0$ from \eqref{eq:Vintsol2}
    we finally obtain that $g$ has to be second order polynomial
    of the form \eqref{eq:gxi}.

    Using the conditions in \eqref{eq:fintcondsysb}
    we find the following expression for the function
    $V_{2}$:
    \begin{equation}
        V_{2}\left( x_{n} \right)=
        \frac{A_{4}}{\sqrt{A_{1} x_{n}^2+A_{2} x_{n}+A_{3}}}
        +\frac{3 A_{2} C_{1} x_{n}^2+2 C_{5} x_{n}+2 C_{6}}{2C_{1}}.
        \label{eq:V2sqrt}
    \end{equation}
    The function $V_{2}\left( x_{n} \right)$ appears to be \emph{algebraic} in $x_{n}$.
    However, substituting back in order to check the compatibility
    conditions we obtain $A_{4}=0$.
    Therefore, no algebraic term is left.

    The above computations produce rahter cumbersome expression for 
    $P_{4}\left( x_{n},x_{n-1} \right)$, which we will not reproduce here.
    However, we notice that this final form of $P_{4}\left( x_{n},x_{n-1} \right)$ 
    yield the following condition for the function $W=W\left( \eta \right)$:
    \begin{equation}
        W'\left( \eta \right) =
        \frac{1}{C_{1}}
        \frac{C_{1} (A_{2}^2 -A_{1} A_{3}) \eta^3-(A_{1} C_{6}-A_{2} C_{5}) \eta^2+A_{6} C_{1} \eta+A_{5} C_{1}}{A_{1} \eta^2+A_{2} \eta+A_{3}}
        \label{eq:Wdiff0}
    \end{equation}
    Since $C_{1}\neq 0$ we perform the scaling $C_{5}=A_{7}C_{1}$ and
    $C_{6}=A_{8}C_{1}$.
    This finally yield the expression \eqref{eq:Weta} for $W\left( \eta \right)$ 
    and \eqref{eq:P4f} for $P_{4}\left( x_{n},x_{n-1} \right)$ and concludes the proof.
\end{proof}

\begin{remark}
    The explicit expression of the additive Lagrangian difference equations 
    with one integral of the form \eqref{eq:int1xx1} is given by:
    \begin{equation}
        \begin{aligned}
            &(A_{1} x_{n-1}^2+A_{2} x_{n-1}+A_{3}) x_{n-2}
            +(A_{1} x_{n+1}^2+A_{2} x_{n+1}+A_{3}) x_{n+2}
            \\
            +&(A_{1} x_{n}+A_{2}) \left(x_{n+1}^2+x_{n-1}^2\right)+(2 A_{1} x_{n}+A_{2}) x_{n-1} x_{n+1}
            \\
            +&(2 A_{2} x_{n}+A_{7}) \left(x_{n+1} + x_{n-1}\right)
            \\
            +&\frac{A_{2}^2 x_{n}^3+\left(A_{2} A_{3} +A_{2} A_{7}\right) x_{n}^2
                +\left(A_{2} A_{8} +A_{3}^2 +A_{3} A_{8}+A_{6}\right) x_{n}+A_{5}}{A_{1} x_{n}^2+A_{2} x_{n}+A_{3}}=0.
        \end{aligned}
        \label{eq:addint}
    \end{equation}
    We choose to not present the explicit form of the Lagrangian
    for equation \eqref{eq:addint} yet, since it depends on the 
    functional form of the solution of the differential equation \eqref{eq:Weta}.
    Such solution is different depending on the values of the parameters $A_{i}$, 
    and it is impossible to write down in full generality.
    We will present the explicit Lagrangians later when we will discuss
    the canonical forms of equation \eqref{eq:addint}.
    \label{rem:completeform}
\end{remark}

By direct inspection it is possible to prove that equation
\eqref{eq:addint} possess a second invariant of higher order,
which has the following expression:
\begin{equation}
    \begin{aligned}   
        J &= -A_{1} g\left( x_{n} \right)^{2}g\left( x_{n-1} \right)^{2} 
        \left( x_{n+1}^{2}+x_{n-2}^{2} \right)
        \\
        &-g\left( x_{n} \right)g\left( x_{n-1} \right) 
        \left( 
            \begin{gathered}
                2 A_1^2 x_{n} x_{n-1}+A_1 A_2 x_{n}+A_1 A_2 x_{n-1}
                \\
                +5 A_1 A_3+2 A_1 A_7+2 A_2^2
            \end{gathered}
         \right) 
        x_{n+1}x_{n-2}
        \\
        &-g\left( x_{n} \right) Q\left( x_{n-1},x_{n} \right)x_{n+1}
        -g\left( x_{n-1} \right) Q\left( x_{n},x_{n-1} \right) x_{n}
        +R\left( x_{n},x_{n-1} \right),
    \end{aligned}
    \label{eq:Jint}
\end{equation}
where $Q$ and $R$ are two polynomial with the following
expression:
\begin{subequations}
    \begin{align}
            Q\left( \xi,\eta \right) & 
            \begin{aligned}[t]
            &=
        2 A_1^3 \eta^2 \xi^3+4 A_1^2 A_2 \eta^2 \xi^2+3 A_1^2 A_2 \eta \xi^3+2 A_1^2 A_3 \eta^2 \xi
        \\
        &-5 A_1^2 A_3 \eta \xi^2+2 A_1 A_2^2 \eta^2 \xi+A_1 A_2^2 \eta \xi^2
        +A_1 A_2^2 \xi^3+2 A_1 A_2 A_3 \eta^2
        \\
        &-2 A_1 A_2 A_3 \eta \xi+A_1 A_2 A_3 \xi^2+A_1 A_2 A_7 \xi^2-2 A_2^3 \eta \xi+A_1 A_2 A_8 \xi
        \\
        &-5 A_1 A_3^2 \eta+A_1 A_3^2 \xi-2 A_2^2 A_3 \eta+A_1 A_3 A_8+A_1 A_6 \xi+A_1 A_5
            \end{aligned}
            \label{eq:Qpol}
            \\
            R\left( \xi,\eta \right) & 
            \begin{aligned}[t]
                &=
                -A_1 A_2^2 A_3 \xi^4-A_1^4 \xi^4 \eta^4-3 A_1^3 A_2 \xi^4 \eta^3
                \\
                &-A_1^2 (A_1 A_3+3 A_2^2) \eta^2 \xi^4
                \\
                &-A_2 A_1 (2 A_1 A_3+A_2^2) \eta \xi^4+A_2 A_3 (5 A_1 A_3+2 A_2^2) \xi^3
                \\
                &-3 A_1^3 A_2 \xi^3 \eta^4+A_1^2 (5 A_1 A_3-4 A_2^2) \eta^3 \xi^3
                \\
                &+A_2 A_1 (7 A_1 A_3+A_2^2) \eta^2 \xi^3+(5 A_1^2 A_3^2+4 A_1 A_2^2 A_3+2 A_2^4) \eta \xi^3
                \\
                &+\left(
                    \begin{gathered}
                        5 A_1 A_3^2 A_7+A_1 A_3 A_7^2+2 A_2^2 A_3 A_7
                        \\
                        -A_1 A_2 A_5-A_1 A_2 A_3 A_8-A_1 A_3^3
                    \end{gathered}
                \right) \xi^2
                \\
                &-A_1^2 (A_1 A_3+3 A_2^2) \eta^4 \xi^2+A_2 A_1 (7 A_1 A_3+A_2^2) \eta^3 \xi^2
                \\
                &+\left(
                    \begin{gathered}
                    5 A_1^2 A_3 A_7+A_1^2 A_7^2+6 A_1 A_2^2 A_3+2 A_1 A_2^2 A_7
                    \\
                    +4 A_2^4-A_1^2 A_6-A_1^2 A_2 A_8-3 A_1^2 A_3^2
                    \end{gathered}
                \right) \eta^2 \xi^2
                \\
                &+\left(
                    \begin{gathered}
                        7 A_1 A_2 A_3^2+6 A_1 A_2 A_3 A_7+A_1 A_2 A_7^2+4 A_2^3 A_3+2 A_2^3 A_7
                        \\
                        -A_1^2 A_5-A_1 A_2 A_6-A_1^2 A_3 A_8-A_1 A_2^2 A_8
                    \end{gathered}
            \right) \eta \xi^2
                \\
                &+(A_3 A_8+A_5) (4 A_1 A_3+A_1 A_7+2 A_2^2) \xi-A_2 A_1 (2 A_1 A_3+A_2^2) \eta^4 \xi
                \\
                &+(5 A_1^2 A_3^2+4 A_1 A_2^2 A_3+2 A_2^4) \eta^3 \xi
                \\
                &+\left(
                    \begin{gathered}
                        7 A_1 A_2 A_3^2+6 A_1 A_2 A_3 A_7+A_1 A_2 A_7^2+4 A_2^3 A_3
                        \\
                        +2 A_2^3 A_7-A_1^2 A_5-A_1 A_2 A_6-A_1^2 A_3 A_8-A_1 A_2^2 A_8
                    \end{gathered}
                \right) \eta^2 \xi
                \\
                &+\left(
                    \begin{gathered}
                        3 A_1 A_2 A_3 A_8+A_1 A_2 A_7 A_8+10 A_1 A_3^3
                        \\
                        +A_1 A_3^2 A_7
                        +2 A_2^3 A_8+4 A_2^2 A_3^2-2 A_1 A_2 A_5
                        \\
                        +5 A_1 A_3 A_6 +A_1 A_6 A_7+2 A_2^2 A_6
                    \end{gathered}
                    \right) \eta \xi
                \\
                &-A_1 A_2^2 A_3 \eta^4+A_2 A_3 (5 A_1 A_3+2 A_2^2) \eta^3
                \\
                &+\left(
                    \begin{gathered}
                        5 A_1 A_3^2 A_7+A_1 A_3 A_7^2+2 A_2^2 A_3 A_7
                        \\
                        -A_1 A_2 A_5-A_1 A_2 A_3 A_8-A_1 A_3^3
                    \end{gathered}
                \right) \eta^2
                \\
                &+(A_3 A_8+A_5) (4 A_1 A_3+A_1 A_7+2 A_2^2) \eta.
            \end{aligned}
    \label{eq:Rpol}
    \end{align}
    \label{eq:RQpol}
\end{subequations}
For general values of the parameters $A_{i}$ computing the
rank of the Jacobian of two invariants \eqref{eq:int1xx1} and
\eqref{eq:Jint} it is possible to prove that the two invariants
are functionally independent.
This means that equation \eqref{eq:addint} is candidate to be
an integrable equation, as it possesses two independent invariants
and it is variational by construction.
To complete the proof of integrability of equation \eqref{eq:addint}
we need to show that the two invariants \eqref{eq:int1xx1} and
\eqref{eq:Jint} are Poisson commuting with respect to the Poisson 
structure defined by the Lagrangian.
We defer this part of the proof to subsection \ref{sss:canonical}.

\subsection{Admissible transformations}
In the previous subsection we presented a general equation \eqref{eq:addint},
possessing two invariants with a given form.
We also provided formula \eqref{eq:Vxieta} which, up to integration,
provides the Lagrangian and hence the symplectic structure for
such equation.
The functional form of the Lagrangian derived from \eqref{eq:Vxieta} 
depends on the parameters $A_{i}$.
To simplify this expression we introduce a simple class of admissible transformations,
allows us to enumerate a \emph{finite number} of subcases 
of equation \eqref{eq:addint}.
These subcases will depend on fewer parameters and will have a
definite functional form of the Lagrangian.
Consider then the following:
\begin{lemma}
    An additive difference equation \eqref{eq:rec4add} 
    is \emph{form invariant} under linear point transformation
    \begin{equation}
        x_{n} = a X_{n} + b.
        \label{eq:lingen}
    \end{equation}
    That is an additive difference equation \eqref{eq:rec4add}
    under the transformation \eqref{eq:lingen} is transformed
    into another additive difference equation.
    Moreover, also the Lagrangian \eqref{eq:rec4addL} for an 
    additive difference equation \eqref{eq:rec4add} is form
    invariant under the linear point transformation \eqref{eq:lingen}.
    \label{lem:admissibletransf}
\end{lemma}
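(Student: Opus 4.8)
The plan is to establish the two assertions by direct substitution, treating the difference equation and its Lagrangian in turn. For the first assertion I would insert $x_{n+i}=aX_{n+i}+b$, $i=-2,\dots,2$, into \eqref{eq:rec4add}, obtaining $aX_{n+2}+b=f(aX_{n+1}+b,aX_{n}+b,aX_{n-1}+b)(aX_{n-2}+b)+h(aX_{n+1}+b,aX_{n}+b,aX_{n-1}+b)$. Since $a\neq0$ this relation is solvable for $X_{n+2}$, yielding $X_{n+2}=\tilde f(X_{n+1},X_{n},X_{n-1})X_{n-2}+\tilde h(X_{n+1},X_{n},X_{n-1})$ with $\tilde f(X_{n+1},X_{n},X_{n-1})=f(aX_{n+1}+b,aX_{n}+b,aX_{n-1}+b)$ and $\tilde h$ collecting the remaining $X_{n-2}$-independent part. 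The only point to stress is structural: the substitution sends a function of $(x_{n+1},x_{n},x_{n-1})$ to a function of $(X_{n+1},X_{n},X_{n-1})$, and being affine (hence of degree one) in $x_{n-2}$ it preserves the linearity of \eqref{eq:rec4add} in $x_{n-2}$, leaving the coefficient of $X_{n-2}$ free of $X_{n-2}$. Hence the transformed equation is again additive in the sense of \eqref{eq:rec4add}.

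For the second assertion I would substitute the same change of variables into the Lagrangian \eqref{eq:rec4addL}. The prefactor $\lambda^{-n}$ is left untouched, so everything reduces to expanding $g(aX_{n+1}+b)(aX_{n}+b)(aX_{n+2}+b)+V(aX_{n+1}+b,aX_{n}+b)$. The leading contribution $a^{2}g(aX_{n+1}+b)X_{n}X_{n+2}$ is precisely of the template form $\tilde g(X_{n+1})X_{n}X_{n+2}$ with $\tilde g(\xi)=a^{2}g(a\xi+b)$, and the overall constant $a^{2}$ is harmless since the Lagrangian is defined up to multiplication by a constant. Every term proportional to $X_{n}$ alone, the constant term $b^{2}g(aX_{n+1}+b)$, and the entire $V$-contribution depend only on $(X_{n+1},X_{n})$ and may be collected into a new potential $\tilde V(X_{n+1},X_{n})$.

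The one genuine obstacle is the spurious term $\lambda^{-n}ab\,g(aX_{n+1}+b)X_{n+2}$, which carries an $X_{n+2}$ dependence outside the $X_{n}X_{n+2}$ slot and so does not fit \eqref{eq:rec4addL} directly. I would remove it by exhibiting it as a total difference modulo the equivalence $\lequiv$ of \eqref{eq:lequiv}. Taking $\chi_{n}=\lambda^{-(n-1)}ab\,g(aX_{n}+b)X_{n+1}$, a one-line computation of $(T_{n}-\Id)\chi_{n}$ gives
\begin{equation*}
    \lambda^{-n}ab\,g(aX_{n+1}+b)X_{n+2}
    =(T_{n}-\Id)\chi_{n}+\lambda^{-n}\lambda\,ab\,g(aX_{n}+b)X_{n+1},
\end{equation*}
and the leftover term depends only on $(X_{n+1},X_{n})$, so it too is absorbed into $\tilde V$. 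Thus, up to a total difference and the constant $a^{2}$, the transformed Lagrangian is again of the form \eqref{eq:rec4addL} with the same $\lambda$ and with $\tilde g(\xi)=a^{2}g(a\xi+b)$, which proves its form invariance.

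I would conclude by remarking that, by Theorem \ref{thm:structure}, this transformed Lagrangian generates exactly the transformed additive equation found in the first step, so the two halves of the statement are consistent. The whole argument is elementary; the only subtlety, and the step I expect to require the most care, is the identification of the $X_{n+2}$-linear term as a total difference rather than an obstruction to the normal form.
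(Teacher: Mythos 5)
Your proposal is correct and follows essentially the same route as the paper: direct substitution and solving for $X_{n+2}$ in the first part, and in the second part expanding the transformed Lagrangian and eliminating the spurious $\lambda^{-n}ab\,g(aX_{n+1}+b)X_{n+2}$ term via exactly the total-difference identity the paper records in \eqref{eq:gnptogn}, leaving the residual $\lambda^{-n}\lambda\,ab\,g(aX_{n}+b)X_{n+1}$ to be absorbed (together with the $b^{2}$ and linear-in-$X_{n}$ terms) into the new potential, just as in \eqref{eq:Llintransf_fin} and \eqref{eq:gVlin}. The only cosmetic difference is that you invoke the freedom of multiplying the Lagrangian by a constant to dispose of $a^{2}$, whereas the paper simply absorbs it into the redefined $g$; both are valid.
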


\begin{proof}
    The linear transformation \eqref{eq:lingen}
    maps an additive difference equation \eqref{eq:rec4add} 
    with functions $f$ and $h$ into:
    \begin{equation}
        X_{n+2} = \tilde{f}\left( X_{n+1},X_{n},X_{n-1} \right) X_{n-2}+
        \tilde{h}\left( X_{n+1},X_{n},X_{n-1} \right),
        \label{eq:lingenappl}
    \end{equation}
    where:
    \begin{subequations}
        \begin{align}
            \tilde{f}\left( X_{n+1},X_{n},X_{n-1} \right) &=
            f \left( a X_{n+1} +b,aX_{n}+b,aX_{n-1}+b\right),
            \\
            \tilde{h}\left( X_{n+1},X_{n},X_{n-1} \right) &
            \begin{aligned}[t]
                &=
                \frac{b}{a}\left[\tilde{f}(X_{n+1}, X_{n}, X_{n-1})-1\right]
                \\
                &+\frac{1}{a}h(a X_{n+1}+b, a X_{n}+b, a X_{n-1}+b).
            \end{aligned}
        \end{align}
        \label{eq:ftht}
    \end{subequations}
    As equation \eqref{eq:lingenappl} is still an additive
    difference equation we proved the first part of the statement.
    
    In the same way applying the linear point 
    transformation \eqref{eq:lingen} to the Lagrangian
    \eqref{eq:rec4addL} we obtain:
    \begin{equation}
        \begin{aligned}
            L_{n} &= \lambda^{-n}
            \left[
            g\left( aX_{n+1}+b \right)\left( a X_{n}+b \right)\left( a X_{n+2}+b \right)\right.
            \\
            &\left.\quad\quad\quad\quad
            +V\left( aX_{n+1}+b,aX_{n}+b \right)\right]
            \\
            &=
            \lambda^{-n}\left[
            a^{2} g\left( aX_{n+1}+b \right) X_{n}X_{n+2}
            +a b g\left( aX_{n+1}+b \right) \left(  X_{n}+X_{n+2}\right)
            \right.
            \\
            &\left.\quad+b^{2} g\left( aX_{n+1}+b \right)
            +V\left( aX_{n+1}+b,aX_{n}+b \right)\right]
            \\
        \end{aligned}
        \label{eq:Llintransf}
    \end{equation}
    Noting that:
    \begin{equation}
        \lambda^{-n} g\left( X_{n+1} \right)X_{n+2} =
        \lambda^{-n+1} g\left( X_{n} \right)X_{n+1}
        + \left( T_{n} -\Id \right)\left[ \lambda^{-n+1} g\left( X_{n} \right)X_{n+1} \right]
        \label{eq:gnptogn}
    \end{equation}
    we obtain:
    \begin{equation}
        L_{n} 
        \begin{aligned}[t]
        &\lequiv \lambda^{-n}
            \left\{
             a^{2} g\left( aX_{n+1}+b \right) X_{n}X_{n+2}
         +b^{2} g\left( aX_{n+1}+b \right)\right.
            \\
            &\quad+a b  \left[  g\left( aX_{n+1}+b \right)X_{n}
            + \lambda g\left( aX_{n}+b \right) X_{n+1}\right]
            \\
            &\left.\quad+V\left( aX_{n+1}+b,aX_{n}+b \right)\right\}.
        \end{aligned}
        \label{eq:Llintransf_fin}
    \end{equation}
    That is defining:
    \begin{subequations}
        \begin{align}
            g\left( X_{n} \right) & = a^{2} g\left( aX_{n+1}+b \right), 
            \\
            V\left( X_{n+1},X_{n} \right) &
            \begin{aligned}[t]
                    &=
                    a b  \left[  g\left( aX_{n+1}+b \right)X_{n}
                    +\lambda g\left( aX_{n}+b \right) X_{n+1}\right]
                    \\
                    &+b^{2} g\left( aX_{n+1}+b \right)+V\left( aX_{n+1}+b,aX_{n}+b \right)
                \end{aligned}
        \end{align}
        \label{eq:gVlin}
    \end{subequations}
    we obtain the second part of the statement.
\end{proof}

From lemma  \ref{lem:admissibletransf} follows that we can classify additive
fourth-order difference equation up to equivalence with respect to
linear transformations.
So, we define an \emph{admissible transformation} to be a linear
point transformation \eqref{eq:lingen}.

\subsection{Canonical forms}
\label{sss:canonical}

Consider now the equation \eqref{eq:addint}.
This equation depends on a polynomial $g\left( \xi \right)$ \eqref{eq:gxi},
which in the general case has degree two.
Depending on the values of the coefficients $A_{1}$, $A_{2}$ and
$A_{3}$, which we assume to be real, the polynomial $g\left( \xi \right)$ \eqref{eq:gxi}
can be of the following five forms:
\begin{description}
    \item[Case 1] $\deg g = 2$ and it has two real independent solutions $x_{1}$ and 
        $x_{2}$.
    \item[Case 2] $\deg g = 2$ and it has one solutions $x_{0}$ of multiplicity two.
    \item[Case 3] $\deg g = 2$ and it has two complex conjugate solutions $x_{0}$ and 
        $x_{0}^{*}$.
    \item[Case 4] $\deg g = 1$.
    \item[Case 5] $\deg g = 0$.
\end{description}

We will now consider explicitly these five possibility and
show, using the form invariance with respect to linear transformations
\eqref{eq:lingen} that they give raise to five different
\emph{canonical forms} of equation \eqref{eq:addint}.
That is,
an equation of the form \eqref{eq:addint}
for a specific choice of the parameters reduces to one
of these five using the appropriate linear transformation
and reparametrisation we will show.
Finally, we note that these canonical forms will show the true
number of independent parameters and will be helpful to 
compute the continuum limits in section \ref{sec:contlim}.

\vspace{12pt}


\paragraph{Case 1}
If $\deg g = 2$ and it has two real independent solutions $x_{1}$ and $x_{2}$, 
we can write $g$ in the following way:
\begin{equation}
    g\left( \xi \right) = \kappa \left( \xi-x_{1} \right)\left( \xi-x_{2} \right).
    \label{eq:g1}
\end{equation}
Without loss of generality we assume $x_{2}>x_{1}$
Applying the linear transformation
\begin{equation}
    \xi  = \frac{x_{2}-x_{1}}{2} \Xi + \frac{x_{1}+x_{2}}{2}, 
    \label{eq:l1}
\end{equation}
we bring the two roots of the polynomial \eqref{eq:g1} into
the canonical values $\Xi=-1,1$.

Defining the linear point transformation:
\begin{equation}
    x_{n}  = \frac{x_{2}-x_{1}}{2} X_{n} + \frac{x_{1}+x_{2}}{2}, 
    \label{eq:L1}
\end{equation}
we bring equation \eqref{eq:addint} when $g$ is given by \eqref{eq:g1}
into the following equation:
\begin{equation}
    \begin{aligned}
        (X_{n+1}^{2}-1) X_{n+2} &+ (X_{n-1}^{2}-1) X_{n-2}
        +X_{n} \left( X_{n+1}+X_{n-1} \right)^{2}
        \\
        &+\gamma\left( X_{n+1}+X_{n-1}\right) 
        + \frac{\alpha X_{n}+\beta}{X_{n}^{2}-1}=0.
    \end{aligned}
    \label{eq:can1}
\end{equation}
This is the first canonical form.

The first canonical form \eqref{eq:can1} depends on three 
parameters $\alpha$, $\beta$ and $\gamma$ which 
are related to the old ones through the following definitions:
\begin{subequations}
    \begin{align} 
        A_5 & 
        \begin{aligned}[t]    
            &=- \left[  \left(  \left( \beta+\gamma \right) \frac{\delta^{5}}{32}
                + \left( {  \alpha}+{  \beta}/4+2 \right) \frac{  x_2 \delta^{4}}{4}
            + \left(\alpha+6 \right)\frac{3x_2^{2}{\delta}^{3}}{4}\right.\right.
                \\
                &\left.\left.\quad\quad+ \left( {  \alpha}+26 \right) \frac{x_2^{3}\delta^{2}}{2}
                +15  x_2^{4}\delta+6 x_2^{5} \right) \kappa
            +A_8  x_2  \left( \delta+x_2 \right)  \right] \kappa,
        \end{aligned}
        \label{eq:par2a}
        \\
        A_6 &
        \begin{aligned}[t]
        &=
        2 \kappa  \left\{  
            \left[  \left( \frac{\alpha}{8}+\frac{\beta}{32}+\frac{1}{4} \right) {\delta}^{4}
                +\frac{x_2}{2}  \left( \alpha+6 \right) {\delta}^{3}
                +\frac{x_2^{2}}{2} \left( \alpha+21 \right) {\delta}^{2}
        \right.\right.
                \\
                &\left.\left.\quad\quad\quad
            +15 \delta x_2^{3}+\frac{15x_2^{4}}{2} \right] \kappa
            + A_8 \left( x_2+\delta/2 \right)  \right\},
        \end{aligned}
        \label{eq:par2b}
        \\
        A_7 &=  
        \frac{\kappa}{4}  \left( \alpha+6 \right) \delta^{2}
        +6 \kappa \delta x_2+6 \kappa x_2^{2},
        \label{eq:par2c}
    \end{align}
    \label{eq:par2}
\end{subequations}
where $\delta=x_{2}-x_{1}$.

The first  canonical form \eqref{eq:can1} is, up to change of the parameters, 
the autonomous the second member of the discrete $P_\text{II}$ hierarchy, the 
$\dP_{II}^{(2)}$ equation.
The $\dP_{II}^{(2)}$ equation was presented in \cite{CresswellJoshi1999} and
the integrability properties with respect to invariants and growth
of the degrees 
\cite{FalquiViallet1993,BellonViallet1999,HietarintaViallet1997,
GrammaticosHalburdRamaniViallet2009,GubbiottiASIDE16}
of this equation were investigated in \cite{JoshiViallet2017}.
This equation reappeared later in the classification 
given in \cite{GJTV_class}, see subsection \ref{sss:int}.
In \cite{GJTV_class} the growth properties of equation \eqref{eq:can1}
were explained proving that such equation is Liouville integrable.
Its Lagrangian, found with the method of \cite{Gubbiotti_dcov},
and the associated symplectic structure were presented.
For sake of completeness here we are going to present again such
properties.

The Lagrangian of the first canonical form \eqref{eq:can1} is the following:
\begin{equation}
    \begin{aligned}
        L_{1} &=
        (X_{n+1}^{2}-1) X_{n+2} X_{n}
        +\frac{1}{2}X_{n}^{2}X_{n+1}^2
        +\gamma X_{n}X_{n+1}
        \\
        &+\frac{\alpha}{2} \log\left( X_{n}^{2}-1 \right)
        +\frac{\beta}{2}\log\left( \frac{X_{n}-1}{X_{n}+1} \right).
    \end{aligned}
    \label{eq:Lagr1}
\end{equation}
The invariants of the first canonical form \eqref{eq:can1} are obtained
from formul\ae\ \eqref{eq:int1xx1} and \eqref{eq:Jint} after performing
all the appropriate substitutions.
As they are quite cumbersome, we omit to write down their explicit expression
here.
However, to complete the explicit proof of integrability of the first canonical
form we present the form of the sympectic structure obtained from
the Lagrangian \eqref{eq:Lagr1}.
Such Poisson structure has the following
non-zero brackets:
\begin{subequations}
    \begin{align}
        \left\{ X_{n+1},X_{n-1} \right\} &= 
        -\frac{1}{X_n^{2}-1},
        \\
        \left\{ X_{n+1},X_{n-2} \right\} &= 
        \frac{2 X_{n} X_{n-1}+2 X_{n} X_{n+1}+2 X_{n-2} X_{n-1}+\gamma}{%
        X_{n}^2 X_{n-1}^2-X_{n}^2-X_{n-1}^2+1}
        \\
        \left\{ X_{n},X_{n-2} \right\} &= 
        -\frac{1}{X_{n-1}^{2}-1},
    \end{align}
    \label{eq:J1}
\end{subequations}
Using such Poisson structure it is possible to prove that 
\emph{mutatis mudandis} the
two invariants \eqref{eq:int1xx1} and \eqref{eq:Jint} 
are commuting.
This ends the proof of the integrability of the first canonical canonical form
\eqref{eq:can1}.

\paragraph{Case 2}
If $\deg g = 2$ and it has one solution $x_{0}$ of multiplicity two, 
we can write $g$ in the following way:
\begin{equation}
    g\left( \xi \right) = \kappa \left( \xi-x_{0} \right)^{2}.
    \label{eq:g2}
\end{equation}
Applying the linear transformation
\begin{equation}
    \xi  =  \Xi + x_{0}, 
    \label{eq:l3}
\end{equation}
we bring the two roots of the polynomial \eqref{eq:g2} into
the canonical value $\Xi=0$.

Defining the linear point transformation:
\begin{equation}
    x_{n}  = X_{n} + x_{0}, 
    \label{eq:L3}
\end{equation}
we bring equation \eqref{eq:addint} when $g$ is given by \eqref{eq:g2}
into the following equation:
\begin{equation}
    \begin{aligned}
        X_{n+1}^{2} X_{n+2} &+ X_{n-1}^{2} X_{n-2}
        +X_{n} \left( X_{n+1}+X_{n-1} \right)^{2}
        \\
        &+\gamma\left( X_{n+1}+X_{n-1}\right) 
        + \frac{\alpha}{X_{n}^{2}}+\frac{\beta}{X_{n}}=0.
    \end{aligned}
    \label{eq:can2}
\end{equation}
This is the second canonical form.

The second canonical form \eqref{eq:can2} depends on three 
parameters $\alpha$, $\beta$ and $\gamma$ which 
are related to the old ones through the following definitions:
\begin{subequations}
    \begin{align} 
        A_5 & = \kappa (\alpha \kappa-6 \kappa x_{0}^5
        -2 \gamma \kappa x_{0}^3-\beta \kappa x_{0}-A_8 x_{0}^2),
        \label{eq:par3a}
        \\
        A_6 &= \kappa (15 \kappa x_{0}^4+4 \gamma \kappa x_{0}^2+\beta \kappa+2 A_8 x_{0}),
        \label{eq:par3b}
        \\
        A_7 &= \kappa (6 x_{0}^2+\gamma) . 
        \label{eq:par3c}
    \end{align}
    \label{eq:par3}
\end{subequations}

The second canonical form \eqref{eq:can2} is, up to change of the parameters,
equation (P.v) appearing in the classification of fourth-order difference
equations with two invariants of a given form presented in \cite{GJTV_class}.
In \cite{GJTV_class} the growth properties of equation \eqref{eq:can2}
were explained proving that such equation is Liouville integrable.
Its Lagrangian, found with the method of \cite{Gubbiotti_dcov},
and the associated symplectic structure were presented.
For sake of completeness here we are going to present again such
properties.

The Lagrangian of the second canonical form \eqref{eq:can2} is the following:
\begin{equation}
    \begin{aligned}
        L_{2} &=
        X_{n+1}^{2}X_{n} X_{n+2}
        +\frac{1}{2}X_{n}^{2}X_{n+1}^2
        +\gamma X_{n}X_{n+1}
        -\frac{\alpha}{X_{n}}+\beta\log \left(X_{n}\right).
    \end{aligned}
    \label{eq:Lagr2}
\end{equation}
The invariants of the second canonical form \eqref{eq:can2} are obtained
from formul\ae\ \eqref{eq:int1xx1} and \eqref{eq:Jint} after performing
all the appropriate substitutions.
As they are quite cumbersome, we omit to write down their explicit expression
here.
However, to complete the explicit proof of integrability of the first canonical
form we present the form of the sympectic structure obtained from
the Lagrangian \eqref{eq:Lagr2}.
Such Poisson structure has the following
non-zero brackets:
\begin{subequations}
    \begin{align}
        \left\{ X_{n+1},X_{n-1} \right\} &= 
        -\frac{1}{X_n^{2}},
        \\
        \left\{ X_{n+1},X_{n-2} \right\} &= 
        \frac{2 X_{n} X_{n-1}+2 X_{n} X_{n+1}+2 X_{n-2} X_{n-1}+\gamma}{%
        X_{n}^2 X_{n-1}^2}
        \\
        \left\{ X_{n},X_{n-2} \right\} &= 
        -\frac{1}{X_{n-1}^{2}},
    \end{align}
    \label{eq:J2}
\end{subequations}
Using such Poisson structure it is possible to prove that 
\emph{mutatis mudandis} the
two invariants \eqref{eq:int1xx1} and \eqref{eq:Jint} 
are commuting.
This ends the proof of the integrability of the second canonical canonical form
\eqref{eq:can2}.

\paragraph{Case 3}
If $\deg g = 2$ and it has two complex conjugate solutions $x_{0}$ and $x_{0}^{*}$, 
we can write $g$ in the following way:
\begin{equation}
    g\left( \xi \right) = \kappa \left( \xi-2\mu x +\mu^{2}+\nu^{2} \right),
    \label{eq:g3}
\end{equation}
where $x_{0} = \mu + \imath \nu$.
Applying the linear transformation
\begin{equation}
    \xi  =  \nu\Xi +\mu, 
    \label{eq:l4}
\end{equation}
we bring the two roots of the polynomial \eqref{eq:g3} into
the canonical values $\Xi=\pm \imath$.

Defining the linear point transformation:
\begin{equation}
    x_{n}  = \nu X_{n} + \mu, 
    \label{eq:L4}
\end{equation}
we bring equation \eqref{eq:addint} when $g$ is given by \eqref{eq:g3}
into the following equation:
\begin{equation}
    \begin{aligned}
        \left(X_{n+1}^{2}+1\right) X_{n+2} &+\left( X_{n-1}^{2}+1\right) X_{n-2}
        +X_{n} \left( X_{n+1}+X_{n-1} \right)^{2}
        \\
        &+\gamma\left( X_{n+1}+X_{n-1}\right) 
        + \frac{\alpha + \beta X_{n}}{X_{n}^{2}+1}=0.
    \end{aligned}
    \label{eq:can3}
\end{equation}
This is the third canonical form.

The third canonical form \eqref{eq:can3} depends on three 
parameters $\alpha$, $\beta$ and $\gamma$ which 
are related to the old ones through the following definitions:
\begin{subequations}
    \begin{align} 
        A_5 & 
        \begin{aligned}[t]
            &= \kappa^{2}
            \left[\alpha \nu^5-\nu^4 \left(\beta \mu +2 \mu +2 \gamma \mu\right)  
            -2\mu^3 \nu^{2}\left( \gamma +4 \right)-6 \mu^5\right]
        \\
        &-\kappa A_8 \left(\mu^2+ \nu^2\right)
        \end{aligned}
        \label{eq:par4a}
        \\
        A_6 &= \kappa (\beta \kappa \nu^4+4 \gamma \kappa \mu^2 \nu^2+15 \kappa \mu^4+6 \kappa \mu^2 \nu^2-\kappa \nu^4+2 A_8 \mu)
        \label{eq:par4b}
        \\
        A_7 &= \kappa (\gamma \nu^2+6 \mu^2) 
        \label{eq:par4c}
    \end{align}
    \label{eq:par4}
\end{subequations}

The third  canonical form \eqref{eq:can3} is, connected to the first
one \eqref{eq:can1}, if we allow complex changes of variables and
complexify the parameters:
\begin{equation}
    X_{n} \leftrightarrow \imath X_{n},
    \quad
    (\alpha ,\beta,\gamma) \leftrightarrow (\imath \beta,\alpha,-\gamma).
    \label{eq:can3tocan2}
\end{equation}
Therefore the third canonical form \eqref{eq:can3} is a different avatar
of the autonomous $\dP_\text{II}^{(2)}$ equation.
We choose to consider it as different equation, because 
the explicit expression of the Lagrangian for equation
\eqref{eq:can3} is different with respect to the one of equation \eqref{eq:can1}.
Moreover, in section \ref{sec:contlim}, we will show that
the continuum limit of the third canonical form \eqref{eq:can3}
is different from the continuum limit of the second canonical form
\eqref{eq:can2}.

The Lagrangian of the third canonical form \eqref{eq:can3} is the following:
\begin{equation}
    \begin{aligned}
        L_{3} &=
        \left(X_{n+1}^{2}+1\right)X_{n} X_{n+2}
        +\frac{1}{2}X_{n}^{2}X_{n+1}^2
        +\gamma X_{n}X_{n+1}
        \\
        &+\frac{\alpha}{2}\arctan\left(X_{n}\right)+
        \beta\log \left(X_{n}^{2}+1\right).
    \end{aligned}
    \label{eq:Lagr3}
\end{equation}
The invariants of the third canonical form \eqref{eq:can3} are obtained
from formul\ae\ \eqref{eq:int1xx1} and \eqref{eq:Jint} after performing
all the appropriate substitutions.
As they are quite cumbersome, we omit to write down their explicit expression
here.
However, to complete the explicit proof of integrability of the first canonical
form we present the form of the sympectic structure obtained from
the Lagrangian \eqref{eq:Lagr3}.
Such Poisson structure has the following
non-zero brackets:
\begin{subequations}
    \begin{align}
        \left\{ X_{n+1},X_{n-1} \right\} &= 
        -\frac{1}{X_n^{2}+1},
        \\
        \left\{ X_{n+1},X_{n-2} \right\} &= 
        \frac{2 X_{n} X_{n-1}+2 X_{n} X_{n+1}+2 X_{n-2} X_{n-1}+\gamma}{%
        X_{n}^2 X_{n-1}^2+X_{n}^{2}+X_{n+1}^{2}+1}
        \\
        \left\{ X_{n},X_{n-2} \right\} &= 
        -\frac{1}{X_{n-1}^{2}+1},
    \end{align}
    \label{eq:J3}
\end{subequations}
Using such Poisson structure it is possible to prove that 
\emph{mutatis mudandis} the
two invariants \eqref{eq:int1xx1} and \eqref{eq:Jint} 
are commuting.
This ends the proof of the integrability of the third canonical canonical form
\eqref{eq:can3}.

\paragraph{Case 4}
If $\deg g = 1$ we can write $g$ in the following way:
\begin{equation}
    g\left( \xi \right) = \mu \xi +\nu,
    \label{eq:g4}
\end{equation}
where we assume $\mu\neq0$.
Applying the linear transformation
\begin{equation}
    \xi  =  \frac{\Xi -\nu}{\mu}, 
    \label{eq:l5}
\end{equation}
we bring the root of the polynomial \eqref{eq:g4} into
the canonical value $\Xi=0$.

Defining the linear point transformation:
\begin{equation}
    x_{n}  = \frac{X_{n} -\nu}{\mu}, 
    \label{eq:L5}
\end{equation}
we bring equation \eqref{eq:addint} when $g$ is given by \eqref{eq:g4}
into the following equation:
\begin{equation}
    \begin{aligned}
        X_{n+1} X_{n+2} &+ X_{n-1} X_{n-2}
        +X_{n} \left(X_{n}+ 2X_{n+1}+2X_{n-1} \right)
        +\left( X_{n+1}+X_{n-1} \right)^{2}
        \\
        &-X_{n+1}X_{n-1}
        +\gamma\left(X_{n}+ X_{n+1}+X_{n-1}\right) 
        +\frac{\alpha}{X_{n}}+\beta=0.
    \end{aligned}
    \label{eq:can4}
\end{equation}
This is the fourth canonical form.

The fourth canonical form \eqref{eq:can4} depends on three 
parameters $\alpha$, $\beta$ and $\gamma$ which 
are related to the old ones through the following definitions:
\begin{subequations}
    \begin{align} 
        A_5 & =
        \frac{\nu^3-(6 \nu+\gamma) \nu^2+(4 \gamma \nu-A_8 \mu+15 \nu^2+\beta) \nu+\alpha}{\mu},
        \label{eq:par5a}
        \\
        A_6 &= 4 \gamma \nu-A_8 \mu+15 \nu^2+\beta
        \label{eq:par5b}
        \\
        A_7 &= 6 \nu+\gamma 
        \label{eq:par5c}
    \end{align}
    \label{eq:par5}
\end{subequations}

The fourth  canonical form \eqref{eq:can4} is, up to change of the parameters, 
the autonomous second member of the discrete $P_\text{I}$ hierarchy, the $\dP_\text{I}^{(2)}$ equation.
The $\dP_\text{I}^{(2)}$ equation was presented in \cite{CresswellJoshi1999} and
the integrability properties with respect to invariants and growth
of the degrees of this equation were investigated in \cite{JoshiViallet2017}.
Alongside with equations \eqref{eq:can1} and \eqref{eq:can2}
this equation reappeared later in the classification 
given in \cite{GJTV_class}, see subsection \ref{sss:int}.
In \cite{GJTV_class} the growth properties of equation \eqref{eq:can1}
were explained proving that such equation is Liouville integrable.
Its Lagrangian, found with the method of \cite{Gubbiotti_dcov},
and the associated symplectic structure were presented.
For sake of completeness here we are going to present again such
properties.

The Lagrangian of the fourth canonical form \eqref{eq:can4} is the following:
\begin{equation}
    \begin{aligned}
        L_{4} &=
        X_{n}X_{n+1} X_{n+2}
        +X_{n}^{2}X_{n+1}+X_{n}X_{n+1}^{2}
        +\frac{X_{n}^{3}}{3}
        \\
        &+\alpha\log\left( X_{n} \right)+\beta X_{n}
        +\gamma X_{n}\left(X_{n+1}+\frac{X_{n}}{2}\right)
    \end{aligned}
    \label{eq:Lagr4}
\end{equation}

Interestingly enough, the after performing all the appropriate substitutions
in equations \eqref{eq:int1xx1} and \eqref{eq:Jint} we obtain the same
invariant.
So, from the general picture, we can produce only one invariant.
However, by direct computation we can find a second invariant:
\begin{equation}
    \begin{aligned}
        J_{4}
        &=
        - \alpha\gamma(X_{n}+X_{n-1}) 
        -\beta \gamma X_{n} X_{n-1}
        -\gamma^2X_{n} X_{n-1} (X_{n}+X_{n-1}) 
        \\
        &+\alpha(X_{n}^2+2 X_{n} X_{n-1}+X_{n} X_{n+1}+X_{n-2} X_{n-1}+X_{n-1}^2) 
        \\
        &+\beta X_{n} X_{n-1} (X_{n}+X_{n-2}+X_{n-1}+X_{n+1}) 
        \\
        &+\gamma X_{n} X_{n-1} (X_{n} X_{n-2}+2 X_{n-2} X_{n+1}+X_{n-1} X_{n+1}) 
        \\
        &+X_{n} X_{n-1} (X_{n}+X_{n-2}+X_{n-1}+X_{n+1}) \cdot
        \\
        &\phantom{+\cdot}(X_{n}^2+2 X_{n} X_{n-1}+X_{n} X_{n+1}+X_{n-2} X_{n-1}+X_{n-1}^2)
    \end{aligned}
    \label{eq:Jint4}
\end{equation}
It is easy to prove that this invariant is functionally independent from the
one obtained by performing the appropriate substitutions in \eqref{eq:int1xx1}.

Now, to complete the explicit proof of integrability of the fourth canonical
form we need to prove that the two invariants are in involution.
To this end we present the form of the sympectic structure obtained from
the Lagrangian \eqref{eq:Lagr4}.
Such Poisson structure has the following
non-zero brackets:
\begin{subequations}
    \begin{align}
        \left\{ X_{n+1},X_{n-1} \right\} &= 
        -\frac{1}{X_n},
        \\
        \left\{ X_{n+1},X_{n-2} \right\} &= 
        \frac{2 X_{n} + 2 X_{n-1}+ X_{n-2} + X_{n+1}+\gamma}{%
        X_{n}X_{n-1}}
        \\
        \left\{ X_{n},X_{n-2} \right\} &= 
        -\frac{1}{X_{n-1}},
    \end{align}
    \label{eq:J4}
\end{subequations}
Using such Poisson structure it is possible to prove that 
\emph{mutatis mudandis} the
two invariants \eqref{eq:int1xx1} and \eqref{eq:Jint4} 
are commuting.
This ends the proof of the integrability of the fourth canonical canonical form
\eqref{eq:can4}.

\paragraph{Case 5}
If $\deg g = 0$ we can write $g$ in the following way:
\begin{equation}
    g\left( \xi \right) = \kappa \neq 0.
    \label{eq:g5}
\end{equation}
In this case equation \eqref{eq:addint} reduces to the
\emph{linear} equation:
\begin{equation}
    \kappa \left(X_{n+2}+ X_{n-2}\right)
    +A_7\left( X_{n+1}+X_{n-1}\right)+\frac{(\kappa^2 +A_6 )X_{n}+A_8 \kappa+A_5}{\kappa}=0,
    \label{eq:urcan5}
\end{equation}
where for consistency we defined $X_{n}=x_{n}$.

Using the fact that $\kappa\neq0$ we make the following reparametrisation:
\begin{equation}
        A_5  =\kappa \left( \alpha \kappa - A_{8} \right),
        \quad
        A_6 = \kappa^{2}\left( \beta-1 \right)
        \quad
        A_7 = \kappa \gamma,
    \label{eq:par6}
\end{equation}
which brings equation \eqref{eq:urcan5} into:
\begin{equation}
    X_{n+2}+ X_{n-2}
    +\gamma \left( X_{n+1}+X_{n-1}\right)+\beta X_{n} + \alpha=0.
    \label{eq:can5}
\end{equation}
This is the fifth canonical form.
The fifth canonical form \eqref{eq:can5} is a degenerate case as it linear.
However, we deem it to be still interesting, as we are not just discussing
integrability, but also its relationship with Lagrangian structure.

In particular we have that the Lagrangian of the fifth canonical form \eqref{eq:can5} 
is the following:
\begin{equation}
        L_{5} =
        X_{n}X_{n+2}
        +\alpha X_{n}
        +\frac{\beta}{2}X_{n}^{2}
        +\gamma X_{n}X_{n+1}.
    \label{eq:Lagr5}
\end{equation}

Again, after performing all the appropriate substitutions
in equations \eqref{eq:int1xx1} and \eqref{eq:Jint} we obtain the same
invariant.
By direct computation we can find a second invariant:
\begin{equation}
    \begin{aligned}
        J_{5}
        &=
        \alpha(X_{n}+X_{n-2}+X_{n-1}+X_{n+1})- \alpha\gamma(X_{n}+X_{n-1})  
        \\
        &-\beta \gamma X_{n} X_{n-1}+\beta(X_{n} X_{n-2}+X_{n-1} X_{n+1}) +2\gamma X_{n+1}  X_{n-2}
        \\
        &-\gamma^2(X_{n}^2+X_{n-1}^2) +X_{n}^2+X_{n-2}^2+X_{n-1}^2+X_{n+1}^2
    \end{aligned}
    \label{eq:Jint5}
\end{equation}
It is easy to prove that this invariant is functionally independent from the
one obtained by performing the appropriate substitutions in \eqref{eq:int1xx1}.

Now, to complete the explicit proof of integrability of the fifth canonical
form we need to prove that the two invariants are in involution.
To this end we present the form of the sympectic structure obtained from
the Lagrangian \eqref{eq:Lagr5}.
Such Poisson structure has the following
non-zero, constant brackets:
\begin{equation}
        \left\{ X_{n+1},X_{n-1} \right\} = \left\{ X_{n},X_{n-2} \right\}=
        -1,
        \quad
        \left\{ X_{n+1},X_{n-2} \right\} = \gamma.
    \label{eq:J5}
\end{equation}
Using such Poisson structure it is possible to prove that 
\emph{mutatis mudandis} the
two invariants \eqref{eq:int1xx1} and \eqref{eq:Jint5} 
are commuting.
This ends the proof of the integrability of the fifth canonical canonical form
\eqref{eq:can5}.

\begin{remark}
    Following the results of Example \ref{ex:lineq} we have
    that the fifth canonical form is (up to reparametrisation) 
    the most general linear
    fourth-order difference equation admitting an \emph{autonomous}
    discrete Lagrangian.
    \label{rem:lineq}
\end{remark}

To end this section, we summarise our results in the following
theorem:

\begin{theorem} 
    If an additive Lagrangian difference equation of the form \eqref{eq:rec4addlagraut}
    admits a multi-affine invariant with respect to the variables $x_{n+1}$
    and $x_{n-2}$ of the form \eqref{eq:int1xx1}, then it is Liouville integrable .
    Moreover, using a linear transformation it can be brought into one of
    the five canonical forms given by equations \eqref{eq:can1}, \eqref{eq:can2},
    \eqref{eq:can3},\eqref{eq:can4} and \eqref{eq:can5}.
    \label{thm:intmultinlin}
\end{theorem}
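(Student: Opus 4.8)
The plan is to assemble the results established throughout Section~\ref{sec:int}, since the statement is really a synthesis of Theorem~\ref{thm:integrability}, Theorem~\ref{thm:ostro}, and the case-by-case analysis of subsection~\ref{sss:canonical}. First I would observe that the hypothesis --- that the equation \eqref{eq:rec4addlagraut} admits a multi-affine invariant of the form \eqref{eq:int1xx1} --- is exactly the content of Theorem~\ref{thm:integrability}. Hence the equation must coincide with \eqref{eq:addint}, in which $g$ is forced to be the quadratic \eqref{eq:gxi} and $V$ is fixed (up to the quadrature \eqref{eq:Weta}). In particular the equation is additive and carries the autonomous Lagrangian of Corollary~\ref{cor:structureaut}.

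For the integrability claim I would invoke the variational structure. Since the Lagrangian is autonomous and normal, Theorem~\ref{thm:ostro} produces, through the discrete Ostrogradsky transformation, a Poisson bracket of full rank $2k=4$. By the discussion following Theorem~\ref{thm:ostro}, to conclude Liouville integrability in the fourth-order case it suffices to exhibit $k=2$ functionally independent invariants in involution with respect to this bracket. The first invariant is $I$ of \eqref{eq:int1xx1} itself; the second is $J$ of \eqref{eq:Jint}, whose functional independence from $I$ follows from the generic rank of the Jacobian of $(I,J)$. What remains is the involution condition $\{I,J\}=0$, and this is precisely where the canonical forms enter.

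The reduction to five canonical forms rests on an exhaustive classification of the real quadratic $g(\xi)=A_1\xi^2+A_2\xi+A_3$ together with the form invariance of Lemma~\ref{lem:admissibletransf}. Writing out the discriminant analysis, a nontrivial $g$ falls into exactly the five mutually exclusive cases enumerated in subsection~\ref{sss:canonical}: two distinct real roots, a real double root, a complex-conjugate pair, degree one, and degree zero. For each case the explicit linear point transformation \eqref{eq:L1}, \eqref{eq:L3}, \eqref{eq:L4}, \eqref{eq:L5} (and the trivial rescaling for $\deg g=0$) maps \eqref{eq:addint} onto the corresponding canonical form \eqref{eq:can1}--\eqref{eq:can5}; form invariance guarantees the image is again an additive variational equation with an explicit Lagrangian, and hence an explicit symplectic structure computed via \eqref{eq:J1}--\eqref{eq:J5}. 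In each canonical form one then checks directly that the two invariants Poisson-commute, which supplies the missing involution and, by Theorem~\ref{thm:liouvilleintegrability}, completes the proof of Liouville integrability.

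The main obstacle is exactly this involution computation, and the reason it is organized through the canonical forms rather than performed once and for all on \eqref{eq:addint} is that the potential $V$, and therefore the Lagrangian and the resulting Poisson bracket, has a genuinely case-dependent closed form determined by the quadrature \eqref{eq:Weta}: the logarithmic and inverse-trigonometric terms appearing in \eqref{eq:Lagr1}, \eqref{eq:Lagr2} and \eqref{eq:Lagr3} cannot be written uniformly. A further subtlety arises in Cases~4 and~5, where substituting into \eqref{eq:Jint} degenerates the second invariant so that it collapses onto $I$; there one must instead verify involution using the genuinely independent invariants $J_4$ of \eqref{eq:Jint4} and $J_5$ of \eqref{eq:Jint5} obtained by direct computation. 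Once the five involution checks are carried out, each canonical form satisfies the hypotheses of Theorem~\ref{thm:liouvilleintegrability} and the statement follows.
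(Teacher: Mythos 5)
Your proposal is correct and follows essentially the same route as the paper: Theorem~\ref{thm:integrability} pins the equation down to \eqref{eq:addint}, the autonomous Lagrangian plus Theorem~\ref{thm:ostro} supplies the full-rank Poisson bracket, Lemma~\ref{lem:admissibletransf} justifies the reduction to the five canonical forms via the root structure of $g$, and involution of the two invariants is verified case by case with the brackets \eqref{eq:J1}--\eqref{eq:J5}. You also correctly caught the paper's own subtlety that in Cases~4 and~5 the substitution into \eqref{eq:Jint} collapses onto $I$, so the independent second invariants $J_4$ \eqref{eq:Jint4} and $J_5$ \eqref{eq:Jint5} must be produced by direct computation.
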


\section{Continuum limits of the integrable cases}
\label{sec:contlim}

In this section we discuss the continuum limits of the
six canonical forms.
We will prove that, under appropriated scaling of the
dependent variable and of the parameters, the continuum limit are
given by either by the autonomous second member of the $P_\text{I}$ hierarchy,
the $P_\text{I}^{(2)}$ equation \cite{CresswellJoshi1999book,Kudryashov1997}:
\begin{equation}
    x\lagrangeprime{4}
    +10 x x''+\frac{r_1}{2} x''+5 \left(x'\right)^2+10 x^3
    +\frac{3}{2} r_{1} x^2+2 r_{2} x+r_{3}=0,
    \label{eq:PI2}
\end{equation}
or by the autonomous second member of the $P_\text{II}$ hierarchy,
the $P_\text{II}^{(2)}$ equation \cite{FlaschkaNewell1980}:
\begin{equation}
    x\lagrangeprime{4}-(10 x^2+r_{1}) x''+6 x^5+2r_{1} x^3 -10 x (x')^2+r_{2}
    =0.
    \label{eq:PII2}
\end{equation}
The fifth canonical form, i.e. the linear equation
\eqref{eq:can5}, is a special case.
The natural continuum limit of the fifth canonical form \eqref{eq:can5}
is the linear equation:
\begin{equation}
    x\lagrangeprime{4} + r_{1} x'' + r_{2} x + r_{3}=0.
    \label{eq:linearlim}
\end{equation}

As discussed in the general non-autonomous case in 
\cite{CresswellJoshi1999book,Kudryashov1997,FlaschkaNewell1980} the
autonomous $P_\text{I}^{(2)}$ equation \eqref{eq:PI2} and the
autonomous $P_\text{II}^{(2)}$ equation \eqref{eq:PII2} are integrable
fourth-order equations.
The linear equation \eqref{eq:linearlim} is clearly integrable.
For sake of completeness here we show their integrals and their
Lagrangian.
We note that the Lagrangian for \eqref{eq:PII2} was already presented
in \cite{Gubbiotti_dcov} using the continuum limit approach.

The autonomous $P_\text{I}^{(2)}$ equation \eqref{eq:PI2} possesses the
following first integrals
\begin{subequations}
    \begin{align}
        K_{1,\text{I}} & 
        \begin{aligned}[t]
        &= 
        x' x'''+\frac{5 x^4}{2}+\frac{ r_{1}}{2}x^3 
        +r_{2}x^2 
    +\frac{x}{16} \left[80 (x')^2+16 r_{3})\right]
    \\
        &+\frac{r_{1}}{4} (x')^2-\frac{(x'')^2}{12}
        \end{aligned}
        \label{eq:Int1PI2}
        \\
        K_{2,\text{I}} & 
        \begin{aligned}[t]
            &=
            (x''')^2
            +(20 x+r_{1}) \frac{(x'')^2}{2}
            -(60 x^2+6 x r_{1}+4 r_{2}) \frac{(x')^2}{2}
            \\
            &+(40 x^3+6 x^2 r_{1}+8 x r_{2}-4 x'^2+4 r_{3})\frac{ x''}{2}
            \\
            &-\frac{3x^{2}}{2} \left(r_{1} x^2 +4 x^3+\frac{8r_{2}}{3} x +4 r_{3}\right),
        \end{aligned}
        \label{eq:Int2PI2}
    \end{align}
    \label{eq:IntPI2}
\end{subequations}
while the autonomous $P_\text{II}^{(2)}$ equation \eqref{eq:PII2} possesses
the following first integrals:
\begin{subequations}
    \begin{align}
        K_{1,\text{II}} & = 
        x' x''' -\frac{(x'')^2}{2}-(10 x^2+r_{1}) \frac{(x')^2}{2}
        +\frac{ x}{2} (2 x^5+r_{1} x^3-r_{2}x -2 r_{3}),
        \label{eq:Int1PII2}
        \\
        K_{2,\text{II}} & 
        \begin{aligned}[t]
            &= (x''')^2-(10 x^2+r_{1}) (x'')^2
            +(x')^4+(30 x^4+6 r_{1} x^2-r_{2}) (x')^2
            \\
            &
            +\left[12 x^5+4 r_{1} x^3+4 x (x')^2-2r_{2} x -2 r_{3}\right] x''
            \\
            &+x^{3}\left[3 x^4\left(x-r_{2}\right) + 2 r_{1}x^3 -8 r_{3}\right].
        \end{aligned}
        \label{eq:Int2PII2}
    \end{align}
    \label{eq:IntPII2}
\end{subequations}
Moreover, the autonomous $P_\text{I}^{(2)}$ equation \eqref{eq:PI2} 
can be derived by the following Lagrangian:
\begin{equation}
    L_\text{I} =
    \frac{(x'')^{2}}{2} 
    + x\left( 21 x+r_{{1}}   \right) \frac{x''}{4}   
    +\frac{11x}{2}    \left(x'    \right) ^{2}
    +1/2 x\left( 5  x^{3}+r_{{1}} x^{2}+2 r_{{2}}x+2 r_{{3}} \right),
    \label{eq:LPI2}
\end{equation}
while the autonomous $P_\text{II}^{(2)}$ equation \eqref{eq:PII2} 
can be derived by the following Lagrangian:
\begin{equation}
    L_\text{II} =
    \frac{(x'')^{2}}{2}
    -x\left(\frac{5x^2}{3}+\frac{r_1}{2}\right)x''
    +x\left(x^5+\frac{r_{1}}{2}x^3+r_{2}\right).
    \label{eq:LPII2}
\end{equation}
We recall that following \cite{Fels1996} these Lagrangians
are unique up to the addition of a total derivative and multiplication
by a scalar.
Finally the linear equation \eqref{eq:linearlim} can be derived
by the following Lagrangian:
\begin{equation}
    L_\text{lin} = \frac{(x'')^{2}}{2}-\frac{r_1}{2} \left( x' \right)^2 + \frac{r_1}{2} x^2+ r_{3} x.
    \label{eq:L0lin}
\end{equation}

\begin{remark}
    We note that according to the result of \cite{Fels1996}
    the most general fourth-order linear differential equation
    admitting a Lagrangian is the following one:
    \begin{equation}
        x\lagrangeprime{4} + r_{0} x''' +r_{1} x'' +
        \frac{r_{0}}{2}
        \left(r_{1}-\frac{r_{0}^2}{4}\right)x'+r_{2}x+r_{3}=0.
        \label{eq:felslin}
    \end{equation}
    \label{eq:fels}
    The Lagrangian of the above equation is:
    \begin{equation}
        L_{F} = 
        e^{r_{0}t/2}
        \left[
            \frac{(x'')^{2}}{2}+\left(\frac{r_{0}^{2}}{8}-\frac{r_1}{2}\right) \left( x' \right)^2 
            + \frac{r_1}{2} x^2+ r_{3} x
        \right].
        \label{eq:Lfels}
    \end{equation}
    It follows from this consideration that equation \eqref{eq:linearlim}
    is the most general fourth-order linear differential equation
    admitting an \emph{autonomous} Lagrangian.
\end{remark}

\subsection{Equations reducing to the $P_\text{I}^{(2)}$ equation \eqref{eq:PI2}}
The second canonical form \eqref{eq:can2} under the following scaling:
\begin{equation}
    \begin{gathered}
        x_{n} =1 +\frac{h^{2}}{2}  x(t), \, t = nh, \,
        \alpha = -16+2 r_{1} h^2-2 r_{2} h^4,
        \\
    \beta = 30-3 r_{1} h^2+2 r_{2} h^4, \,
    \gamma = -10+\frac{r_{1}}{2} h^2+\frac{r_{3}}{4} h^6,
    \end{gathered}
    \label{eq:scal3}
\end{equation}
in the limit $h\to0$ reduces to equation \eqref{eq:PI2}.
Using the same scaling we have that the discrete Lagrangian
\eqref{eq:L3} has the following limit as $h\to 0$:
\begin{equation}
    \frac{4 L_{2}}{h^{8}} \lequiv L_\text{I} + O\left( h \right).
    \label{eq:L3clim}
\end{equation}
In the same way the invariants of the second canonical form
\eqref{eq:can2} have the following behaviour as $h\to0$:
\begin{equation}
    I_{2} = -\frac{K_{1,\text{I}}}{2} h^{8} + O(h^{9}),\,
    J_{2} = -\left(\frac{K_{1,\text{I}}}{32}+\frac{r_{1}r_{3}}{32}\right) h^{8} + O(h^{9}).
    \label{eq:I3J3lim}
\end{equation}
The two invariants collapse in a single first integral
in the continuum limit and the second one is not recovered.

\begin{remark}
    The above result on the second canonical form \eqref{eq:can2}
    clarifies that the new equation found in \cite{GJTV_class} can
    be interpreted as new autonomous discrete fourth-order Painlev\`e I equation.
    This lead us to conjecture that this equation is the fourth-order
    member of a ``non-standard'' discrete Painlev\'e I hierarchy, different from
    the one considered in \cite{CresswellJoshi1999book}.
    At the moment, no information on the existence of this hierarchy
    is available.
    \label{rem:can3}
\end{remark}

The third canonical form \eqref{eq:can3} under the following scaling:
\begin{equation}
    \begin{gathered}
        x_{n} =1 + h^{2}  x(t), \, t = nh, \,
        \alpha = -16+4 r_{1} h^2-4 r_{2} h^4,
        \\
    \beta = 56-8 r_{1} h^2, \,
    \gamma = -14+r_{1} h^2+r_{2} h^4+r_{3} h^6,
    \end{gathered}
    \label{eq:scal4}
\end{equation}
in the limit $h\to0$ reduces to equation \eqref{eq:PI2}.
Using the same scaling we have that the discrete Lagrangian
\eqref{eq:L4} has the following limit as $h\to 0$:
\begin{equation}
    \frac{L_{3}}{h^{8}} \lequiv L_\text{I} + O\left( h \right).
    \label{eq:L4clim}
\end{equation}
In the same way the invariants of the third canonical form
\eqref{eq:can3} have the following behaviour as $h\to0$:
\begin{equation}
    I_{3} = -8 K_{1,\text{I}} h^{8} + O(h^{9}),\,
    J_{3} = -\left(136 K_{1,\text{I}}+\frac{6r_{1}r_{3}+5r_{2}^{2}}{17}\right) h^{8} + O(h^{9}).
    \label{eq:I4J4lim}
\end{equation}
The two invariants collapse in a single first integral
in the continuum limit and the second one is not recovered.

\begin{remark}
    It was noted in section \ref{sec:int} that the third
    canonical form \eqref{eq:can3} is related to the first
    one \eqref{eq:can1}. 
    Since the first canonical form \eqref{eq:can1} is an autonomous 
    $\dP_\text{II}^{(2)}$
    it would be natural to identify also the third canonical form
    \eqref{eq:can3} with the autonomous fourth-order member of the Painlev\'e II
    hierarchy.
    However, the simplest continuum limit of the third
    canonical form \eqref{eq:can3} is the autonomous fourth-order 
    member of the Painlev\'e I hierarchy.
    No continuum limit of this equation to the autonomous 
    fourth-order member of the Painlev\'e II hierarchy is at present known.
    Finally, as in remark \ref{rem:can3}
    it is not known if the third canonical form \eqref{eq:can3} is the fourth-order
    member of a ``non-standard'' discrete Painlev\'e I hierarchy, different from
    the one considered in \cite{CresswellJoshi1999book}.
    \label{rem:dPII2toPI2}
\end{remark}

The fourth canonical form \eqref{eq:can4} under the following scaling:
\begin{equation}
    \begin{gathered}
        x_{n} =1 + h^{2}  x(t), \, t = nh, \,
        \alpha = -10+\frac{3 r_{1}}{2} h^2-r_{2}h^4 ,
        \\
    \beta = 30-3r_{1} h^2 , \,
    \gamma =-10+\frac{r_{1}}{2} h^2+\frac{r_{2}}{3} h^4+ \frac{r_{3}}{3} h^6,
    \end{gathered}
    \label{eq:scal5}
\end{equation}
in the limit $h\to0$ reduces to equation \eqref{eq:PI2}.
Using the same scaling we have that the discrete Lagrangian
\eqref{eq:L5} has the following limit as $h\to 0$:
\begin{equation}
    \frac{L_{4}}{h^{8}} \lequiv L_\text{I} + O\left( h \right).
    \label{eq:L5clim}
\end{equation}
In the same way the invariants of the fourth canonical form
\eqref{eq:can4} have the following behaviour as $h\to0$:
\begin{equation}
    I_{4} = 2 K_{1,\text{I}} h^{8} + O(h^{9}),\,
    J_{4} = \left(32 K_{1,\text{I}}+\frac{6r_{1}r_{3}}{24}+\frac{r_{2}^{2}}{36}\right) h^{8} + O(h^{9}).
    \label{eq:I5J5lim}
\end{equation}
The two invariants collapse in a single first integral
in the continuum limit and the second one is not recovered.
This continuum limit was first discussed in \cite{CresswellJoshi1999book}.

\subsection{Equations reducing to the $P_\text{II}^{(2)}$ equation \eqref{eq:PII2}}

The first canonical form \eqref{eq:can1} under the following scaling:
\begin{equation}
        x_{n} = h  x(t), \, t = nh, \,
        \alpha = 6+2 r_{1} h^2 ,\,
    \beta = r_{2} h^5 , \,
    \gamma =  4+r_{1} h^2, 
    \label{eq:scal2}
\end{equation}
in the limit $h\to0$ reduces to equation \eqref{eq:PII2}.
Using the same scaling we have that the discrete Lagrangian
\eqref{eq:L4} has the following limit as $h\to 0$:
\begin{equation}
    \frac{L_{1}}{h^{6}} \lequiv -L_\text{II} + O\left( h \right).
    \label{eq:L2clim}
\end{equation}
In the same way the invariants of the first canonical form
\eqref{eq:can1} have the following behaviour as $h\to0$:
\begin{equation}
    I_{1} = -2 \alpha x_{1}^{6} K_{1,\text{II}} h^{6} + O(h^{7}),\,
    J_{1} = \alpha^{4}x_{1}^{8} K_{1,\text{II}}h^{6} + O(h^{7}).
    \label{eq:I2J2lim}
\end{equation}
The two invariants collapse in a single first integral
in the continuum limit and the second one is not recovered.
This continuum limit was first discussed in \cite{CresswellJoshi1999}.

\subsection{Equation reducing to equation \eqref{eq:linearlim}}
The fifth canonical form \eqref{eq:can5} under the following scaling:
\begin{equation}
    x_{n} = x(t), \, t = nh, \,
    \alpha = r_{3} h^{4}, \,
    \beta = 6 - 2r_{1}h^2+r_{2} h^4, \,
    \gamma = -4 + r_{1}h^{2},
    \label{eq:scal6}
\end{equation}
in the limit $h\to0$ reduces to equation \eqref{eq:linearlim}.
Using the same scaling we have that the discrete Lagrangian
\eqref{eq:Llintransf} has the following limit as $h\to 0$:
\begin{equation}
    \frac{L_{5}}{h^{4}} \lequiv L_\text{lin} + O\left( h \right).
    \label{eq:L6clim}
\end{equation}

Since equations \eqref{eq:can5} and \eqref{eq:linearlim} are linear
instead of discussing the relationship between the invariants
we discuss the relationship between the explicit solutions.
The explicit solutions of equation \eqref{eq:can5} is obtained
as linear combination of the base solutions $X_{n,i} = q_{i}^{n}$, where
$q_{i}$ are the four roots of the characteristic polynomial:
\begin{equation}
    q^{4} + \gamma q^{3} + \beta q^{2} +\gamma q + 1 =0,
    \label{eq:charcan6}
\end{equation}
plus a particular solution of the inhomogeneous equation.
In the same way the general solution of \eqref{eq:linearlim} 
is obtained through as linear combination of 
the base solutions $x_{n,i} = e^{\mu_{i} t}$, where
$\mu_{i}$ are the four roots of the characteristic polynomial:
\begin{equation}
    \mu^{4} + r_{1}\mu^{2} + r_{2} =0,
    \label{eq:charlin}
\end{equation}
plus a particular solution of the inhomogeneous equation.
The solutions of equation \eqref{eq:charlin} are obtained from
the solutions of equation \eqref{eq:charcan6} using the scaling given
in formula \eqref{eq:scal6} and
\begin{equation}
    q = 1 + \mu h,
    \label{eq:qtomu}
\end{equation}
in the limit where $h\to0$.
Indeed, using formula \eqref{eq:qtomu} into \eqref{eq:charcan6}
we obtain:
\begin{equation}
    \left( \mu^{4} + r_{1}\mu^{2} + r_{2} \right)h^{4}
    + O\left( h^{5} \right) = 0.
    \label{eq:char5tolin}
\end{equation}
Finally, using $t = n h$ the base solutions are such that:
\begin{equation}
    X_{n,i} = \left( 1+\mu_{i} h \right)^{t/h}
    = e^{\mu_{i} t} + O(h)=x_{i}(t)+O(h).
    \label{eq:basesol}
\end{equation}
An analogous result holds for the particular solution if
we write down its expression using the method of variation
of constants \cite{Elaydi2005}.

\section{Conclusions}
\label{sec:concl}

In this paper we discussed the conditions for an additive
fourth-order difference equation \eqref{eq:rec4add} to admit
a Lagrangian.
Our main result, stated in Theorem \ref{thm:structure}, tells
us that there exists a family of such equations depending
on two arbitrary functions, one of a single variable $g=g\left( \xi \right)$,
and one of two variables $V=V\left( \xi,\eta \right)$, 
and on an arbitrary constant $\lambda$.
As evidenced in Corollary \ref{cor:structureaut} the
Lagrangian is autonomous if and only if $\lambda=1$.

Additive difference equations can be considered also in higher
dimension.
Indeed, an additive $2k$th-order difference equation is a
difference equation of the following form:
\begin{equation}
    x_{n+k} = f\bigl( \vec{x}_{n}^{(-k+1,k-1)} \bigr) x_{n-k}
    +h\bigl( \vec{x}_{n}^{(-k+1,k-1)} \bigr),
    \label{eq:addgen}
\end{equation}
where we defined
\begin{equation}
    \vec{x}_{n}^{(m,l)} = \left( x_{n+m},\dots,x_{n+k} \right),
    \quad
    l\leq m.
    \label{eq:xnml}
\end{equation}
The result of this paper stimulate to consider the following
conjecture regarding difference equations like \eqref{eq:addgen}:
\begin{conjecture*}
    An additive $2k$th-order difference equation is variational
    \emph{if and only if} it can be derived from the following Lagrangian:
    \begin{equation}
        L_{n}^{(k)} =
        \lambda^{-n}
        \left[  
            f\bigl( \vec{x}_{n}^{(1,k-1)} \bigr) x_{n}x_{n+k} 
            +V\bigl( \vec{x}_{n}^{(0,k-1)} \bigr)
        \right].
        \label{eq:lagrconj}
    \end{equation}
\end{conjecture*}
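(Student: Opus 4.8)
The plan is to prove both implications of the conjecture by generalising, from the case $k=2$ to arbitrary $k>1$, the machinery that established Theorem \ref{thm:structure}. The \emph{if} direction is a direct computation. Assuming the Lagrangian \eqref{eq:lagrconj}, I would write the discrete Euler--Lagrange equation \eqref{eq:elgen}, namely $\sum_{l=0}^{k}\partial L_{n-l}/\partial x_{n}=0$, and examine each summand. Because $f$ in \eqref{eq:lagrconj} depends only on $\vec{x}_{n}^{(1,k-1)}$ and the product term is $x_{n}x_{n+k}$, the term $l=0$ contributes $\lambda^{-n}f\bigl(\vec{x}_{n}^{(1,k-1)}\bigr)x_{n+k}$ plus a potential gradient, the term $l=k$ contributes $\lambda^{-n}\lambda^{k}f(x_{n-k+1},\dots,x_{n-1})x_{n-k}$, and every intermediate term $1\le l\le k-1$ produces, after differentiation, only functions of $\vec{x}_{n}^{(-k+1,k-1)}$. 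Multiplying through by $\lambda^{n}$ shows the resulting equation is linear in both extremal values $x_{n+k}$ and $x_{n-k}$, hence additive of the form \eqref{eq:addgen}; this proves that \eqref{eq:lagrconj} is variational with additive Euler--Lagrange equation and simultaneously records the shape of the coefficients, the coefficient of $x_{n-k}$ being $\lambda^{k}f(x_{n-k+1},\dots,x_{n-1})$.

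For the \emph{only if} direction I would invoke the general-order version of the existence criterion proved in \cite{Gubbiotti_dcov}, of which Theorem \ref{thm:elexist} is the $k=2$ specialisation: variationality of \eqref{eq:addgen} together with its inverse is equivalent to a pair of partial difference equations built from the forward and backward annihilation operators (the $2k$-order analogues of \eqref{eq:annihilop}, i.e.\ the operators killing respectively every function of $F$ and of $\widetilde{F}$ as in \eqref{eq:Gdeff}--\eqref{eq:Gdefb}). Substituting the additive ansatz into these conditions yields partial differential equations coupling $f$, $h$ and the unknown summand $L_{n}$. The decisive structural feature, exactly as in the passage from \eqref{eq:cc0bis} to \eqref{eq:cc0fin}, is that $L_{n}$ depends on the $k+1$ consecutive variables $x_{n+k},\dots,x_{n}$ whereas $f$ and $h$ are spread over the $2k-1$ variables $x_{n+k-1},\dots,x_{n-k+1}$; differentiating repeatedly with respect to the variables absent from $L_{n}$ eliminates the Lagrangian and leaves a closed equation for $f$ alone, independent of $h$.

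The heart of the argument is to solve this closed equation and show that its solutions are precisely the reflection-symmetric ratios forced by the \emph{if} direction. Concretely, I expect to prove that $f$ in \eqref{eq:addgen} must be independent of the central variable $x_{n}$ and must factor as $f=-\lambda^{k}\,\phi\bigl(x_{n-k+1},\dots,x_{n-1}\bigr)\big/\phi\bigl(x_{n+1},\dots,x_{n+k-1}\bigr)$ for a single function $\phi$ of $k-1$ variables (for $k=2$ this collapses to the ratio $g(x_{n-1})/g(x_{n+1})$ of \eqref{eq:gsol1} and the surrounding analysis). Identifying $\phi$ with the $f$ of the conjectured Lagrangian, the remaining annihilation condition then constrains $h$; following the $k=2$ template one reaches a closure relation generalising \eqref{eq:closrelpot}, whose integrability introduces the potential $V$ exactly as in \eqref{eq:F1F2pot}, and the summand is reconstructed, up to total difference and scaling under $\lequiv$, as \eqref{eq:lagrconj}. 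One must also dispatch the alternative branch in which $h$ rather than $L_{n}$ is fixed first (the analogue of the ``Fix $h$'' step), showing it forces $\phi\equiv 0$ and is therefore inadmissible.

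The main obstacle is this middle step: in the $k=2$ proof the closed equation \eqref{eq:cc0fin} was solved by computer algebra, and for general $k$ both the number of variables and the order of the annihilation operators grow, so a brute-force solution is infeasible. I would attempt instead a structural argument for the factorisation, either by induction on $k$ --- reducing the $2k$-order compatibility conditions to the $2(k-1)$-order ones by freezing the outermost variables --- or by recasting variationality as the discrete Helmholtz self-adjointness of the Fréchet derivative of \eqref{eq:addgen}, where the kernel of the annihilation operators should make the reflection symmetry $x_{n+j}\leftrightarrow x_{n-j}$ manifest. Establishing that the only $x_{n}$-independent, factorisable solutions of the $f$-equation are the symmetric ratios above, uniformly in $k$, is the crux on which the whole conjecture rests.
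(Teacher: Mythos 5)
The statement you are trying to prove is not proved in the paper at all: it is stated as an open conjecture, and the paper explicitly defers its study to future work (``The study of this conjecture will be subject of further studies''). So there is no paper proof to compare against, and your submission must stand on its own as a complete argument --- which it does not. Your \emph{if} direction is fine: the Euler--Lagrange computation showing that \eqref{eq:lagrconj} yields an equation linear in $x_{n+k}$ and $x_{n-k}$, with the coefficient of $x_{n-k}$ equal to $\lambda^{k}f(x_{n-k+1},\dots,x_{n-1})$, is correct and matches the $k=2$ pattern of Theorem \ref{thm:structure} (with $\phi=g$ one recovers $f=-\lambda^{2}g(x_{n-1})/g(x_{n+1})$, i.e.\ \eqref{eq:fform}). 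But that direction is essentially trivial; the content of the conjecture is the \emph{only if} direction, and there you have a genuine gap which you yourself name: you never solve the closed equation for $f$ that survives the elimination of $L_{n}$, nor do you prove the claimed factorisation $f=-\lambda^{k}\phi(x_{n-k+1},\dots,x_{n-1})/\phi(x_{n+1},\dots,x_{n+k-1})$ with $f$ independent of the central variables. Announcing that you ``expect to prove'' the crux is a research plan, not a proof.

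The gap is deeper than a missing computation. Even in the base case $k=2$, the analogous equation \eqref{eq:cc0fin} was solved in the paper only by computer algebra (\texttt{Maple}), so there is no structural argument available to seed your proposed induction on $k$; and the induction step itself is unsubstantiated --- ``freezing the outermost variables'' of a $2k$-order additive equation does not obviously produce a $2(k-1)$-order \emph{variational additive} equation, because the Euler--Lagrange sum \eqref{eq:elgen} couples all intermediate shifts and the annihilation-operator conditions for order $2k$ involve derivatives in variables that have no counterpart at order $2(k-1)$. Your alternative route via discrete Helmholtz self-adjointness is essentially a repackaging of the criterion of \cite{Gubbiotti_dcov} that you already invoked, so it does not supply the missing factorisation either. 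Two smaller omissions: you must verify normality, $\partial^{2}L_{n}/\partial x_{n}\partial x_{n+k}=\lambda^{-n}f\neq 0$, so that the Euler--Lagrange equation genuinely has order $2k$ and is invertible; and the dismissal of the alternative branch (the analogue of the ``Fix $h$'' case, which at $k=2$ was itself settled only by ``a long calculation'') is asserted, not argued, for general $k$. As it stands, your text is a plausible strategy for attacking the conjecture, with the decisive step --- exactly the step the paper's authors left open --- still unresolved.
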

The study of this conjecture will be subject of further studies.
A starting point for these studies are the known hierarchies of
discrete equations, e.g. those presented in 
\cite{CresswellJoshi1999book,CresswellJoshi1999}.

Moreover, in this paper 
to better underline the power of the Lagrangian approach we
produced a list of integrable equations
with autonomous Lagrangian using an ansatz on the 
shape of one invariant.
Interestingly enough, equations of the said list naturally 
possess a second invariant without imposing any additional
conditions.
We showed that it is possible to reduce these equations
to five canonical forms, which we related to known examples from
\cite{CresswellJoshi1999book,CresswellJoshi1999,GJTV_class,JoshiViallet2017}.
We remark that in the cited papers, the same equations were derived 
or studied with different approaches.

Finally, we computed the continuum limits of the canonical forms.
This allowed us to identify equation \eqref{eq:can2},
an equation recently introduced in \cite{GJTV_class},
with a new discrete $P_\text{I}^{\left( 2 \right)}$ equation.
Moreover, the continuum limits showed that 
equation \eqref{eq:can3}, which is related to the discrete 
$P_\text{II}^{\left( 2 \right)}$ equation \eqref{eq:can1},
is actually a discretisation of the $P_\text{I}^{\left( 2 \right)}$
equation.
In the same way we proved, following the example given in \cite{Gubbiotti_dcov},
that variational structures are preserved upon continuum limit,
while invariants are not.
A resuming table of the integrable case, and their continuum limits
can be found in Table \ref{tab:integrable}.

\begin{table}
    \centering
    \begin{tabular}{ccccc}
        \toprule
        Canonical form & Equation & Roots of $g\left( \xi \right)$  & Continuum limit & Introduced
        \\
        \midrule
        1st & \eqref{eq:can1} & -1,1 & autonomous $P_\text{II}^{(2)}$ & 
        \cite{CresswellJoshi1999,JoshiViallet2017}
        \\
        2nd & \eqref{eq:can2} & 0,0 & autonomous $P_\text{I}^{(2)}$ & \cite{GJTV_class}
        \\
        3rd & \eqref{eq:can3} & $-\imath,\imath$ & autonomous $P_\text{I}^{(2)}$ &
        --
        \\
        4th & \eqref{eq:can4} & 0 & autonomous $P_\text{I}^{(2)}$ & 
        \cite{CresswellJoshi1999book,JoshiViallet2017}
        \\
        5th & \eqref{eq:can5} & -- & equation \eqref{eq:linearlim} & --
        \\
        \bottomrule
    \end{tabular}
    \caption{Resuming table of the integrable canonical forms.}
    \label{tab:integrable}
\end{table}

Except that in Section \ref{sec:main} we did not dealt with
autonomous Lagrangians, but now we would like to give a
interpretation of their appearance, based on the analogy
with the continuum systems.
From the results of Example \ref{ex:lineq} and from 
the continuum limit \eqref{eq:linearlim} of the fifth canonical 
from \eqref{eq:can5} we infer that non-autonomous Lagrangians
are linked to some form of \emph{dissipation}.
We propose this analogy for two main reasons.
First because in the continuum limit \eqref{eq:linearlim} 
odd-order derivates are absent.
Odd-order derivates are naturally related to
dissipation in continuous systems.
Second, we can prove that the additive variational fourth-order
equations with non-autonomous Lagrangians are not 
\emph{measure preserving}, but they either shrink or expand the
volume of the phase space.
Indeed if we compute the Jacobian determinant of the most 
general variational additive fourth-order difference equation 
\eqref{eq:rec4addlagr} we obtain:
\begin{equation}
    J_{n} = \lambda^{2} \frac{g\left( x_{n-1} \right)}{g\left( x_{n+1} \right)}.    
    \label{eq:Jacgen}
\end{equation}
This implies that the volume element is given by:
\begin{equation}
    V_{n} = g\left( x_{n} \right)g\left( x_{n-1} \right)
    \ud x_{n+1}\wedge\ud x_{n}\wedge
    \ud x_{n-1}\wedge\ud x_{n-2}
    \label{eq:voln}
\end{equation}
and evolves according to:
\begin{equation}
    V_{n+1} = \lambda^{2} V_{n},
    \label{eq:evolvol}
\end{equation}
that is $V_{n} = \lambda^{2n} V_{0}$.
We obtain that if $\abs{\lambda}>1$ the volume of the phase space
is increasing, while if $0<\abs{\lambda}<1$ the volume of the phase
space is decreasing.
This is another usual feature of continuous dissipative
equations.
If and only if $\lambda=1$ we have the conservation of the volume 
as required by the Hamiltonian approach. 
For the above reasons we say the autonomous Lagrangian
case is \emph{conservative}, while the non-autonomous one
is \emph{dissipative}.
This behaviour is displayed graphically in Figure \ref{fig:dissipation}
in the case of the first canonical form \eqref{eq:can1} and its asymmetric
version obtained from the discrete Lagragian $L_{n,2}=\lambda^{-n}L_{2}$ 
with a given $\lambda\in\left( -1,1 \right)$.
See remark \ref{rem:nonautL}.

\begin{figure}[t]
    \centering
    \includegraphics{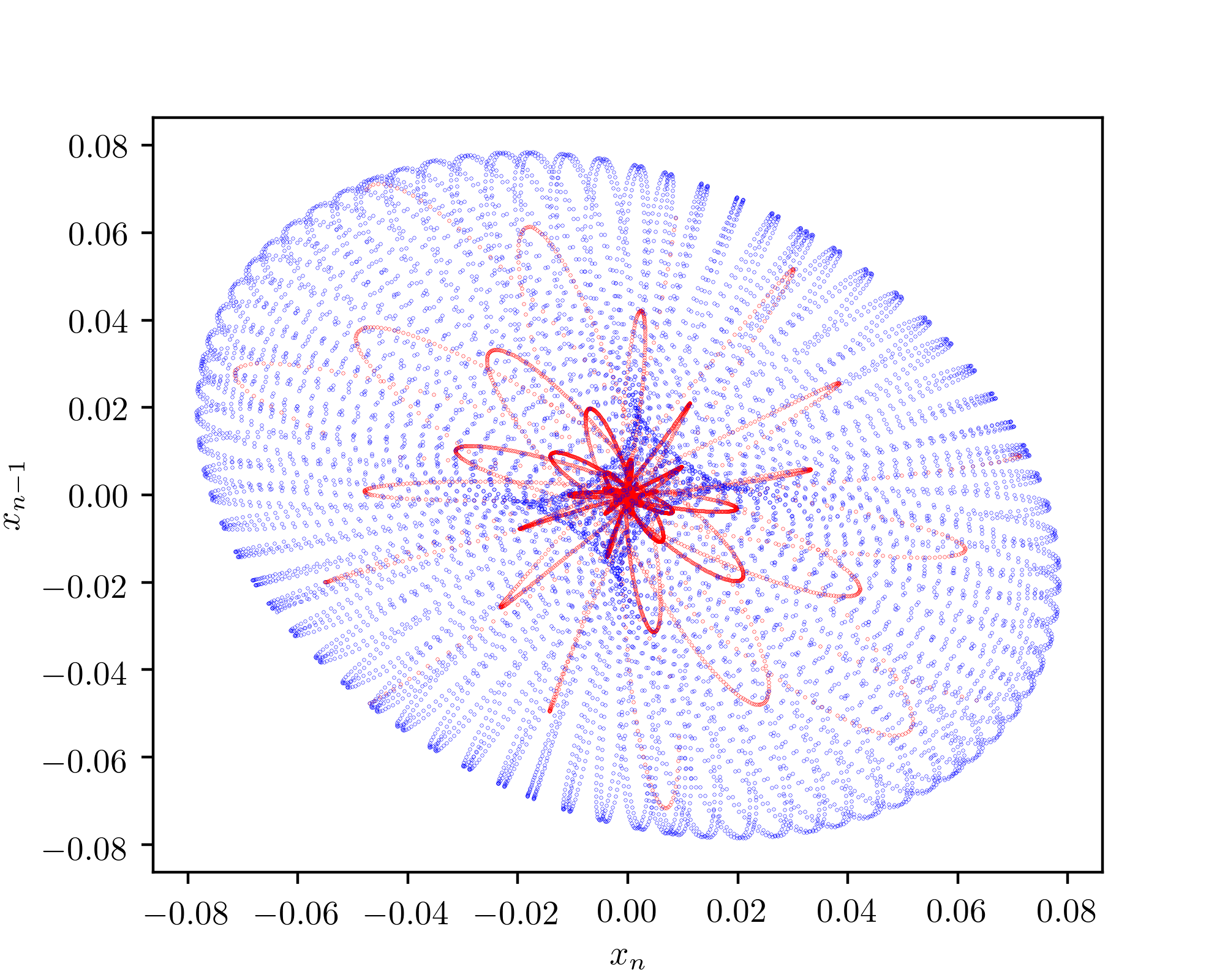}
    \caption{In blue a trajectory of equation \eqref{eq:can1} with $A_{5}=2$, $A_{6}=0$,
    $A_{7}=-1$ and initial conditions $x_{i}\sim 10^{-2}$.
    In red a trajectory of the equation obtained from $L_{n,2}=\lambda^{-n}L_{2}$,
with $\lambda=0.999$, same parameters and same initial conditions.
While the trajectory of \eqref{eq:can1} oscillated around the fixed point in the origin,
the asymmetric trajectory collapse into it as $n\to\infty$. Trajectories are computed
using $10^{4}$ iterations.}
    \label{fig:dissipation}
\end{figure}

It is well known that dissipative systems are not
integrable in the usual Liouville sense, as they fail to
preserve the measure of the phase space.
On the other side, in the continuous setting it is also known that 
some dissipative systems admit \emph{time-dependent} 
first integrals \cite{Caldirola1941,Kanai1948}.
Up to our knowledge such possibility has never been explored
in the discrete setting, so this raises the following
question:

\begin{problem*}
    Do \emph{non-trivial} variational discrete systems 
    \emph{admitting $n$-dependent invariants} exist?
\end{problem*}

Here, by non-trivial we mean a system for which it is not
possible to write down the general solution and invert it
with respect to the initial conditions in order to get the
$n$-dependent invariants.
This restriction is important to rule out linear system, for
which this procedure is always possible.
This problem might be interesting from the point of view of
applications as in several real cases one might need to take into
account dissipative effects caused e.g. by friction.
We are planning to address to this problem in a future study.

Other application of the result of this paper can arise
in the field of \emph{geometric integration theory} 
\cite{BuddIserles1999,BuddPiggot2003,KrantzParks2008}.
Geometric integration theory is a branch of numerical analysis
which deals in preserving properties when discretising a
continuous system.
The variational structure might such a property.
To give a very simple example consider the following
Lagrangian:
\begin{equation}
    L = \frac{\left( x'' \right)^{2}}{2}
    -\alpha\frac{\left( x' \right)^{4}}{12} + \frac{\omega^{2}}{2}x^{2}-\beta x
    \label{eq:Lbeam}
\end{equation}
and its Euler--Lagrange equation:
\begin{equation}
    x\lagrangeprime{4} + \alpha \left( x' \right)^{2} x''+ \omega^{2} x = \beta.
    \label{eq:beam}
\end{equation}
A trivial discretisation of equation \eqref{eq:beam} is obtained
by replacing the derivatives with the discrete derivatives:
\begin{equation}
    x' \to \delta_{n} x_{n} = \frac{x_{n}-x_{n-1}}{h}.
    \label{eq:triv}
\end{equation}
The resulting discrete equation is (up to translation in $n$):
\begin{equation}
    \begin{aligned}
        &\frac{x_{n+2}-4 x_{n+1}+6 x_{n}-4 x_{n-1}+x_{n-2}}{h^4}
        \\
        +\alpha&\frac{(x_{n-1}-x_{n-2})^2 (x_{n}-2 x_{n-1}+x_{n-2})}{h^4}
        +\omega^2 x_{n-2}=\beta.
    \end{aligned}
    \label{eq:trivbeam}
\end{equation}
This equation is nor invertible nor variational.
On the other hand,  there exist infinitely many variational
discretisation of equation \eqref{eq:triv} with the following
hypotheses:
\begin{itemize}
    \item $\lambda=1$,
    \item the function $g$ is a constant,
    \item the functions $M$ and $N$ in \eqref{eq:dencleared3}
        are third order polynomials in their variables,
    \item the coefficients of $M$ and $N$ are second order polynomials
        in $h$.
\end{itemize}
An example is the following one:
\begin{equation}
    \begin{aligned}
    x_{n+2} 
    &-4 x_{n+1} 
    +\left(6-\omega^{2}h^{4}\right)x_{n} -4x_{n-1}+ x_{n-2}
    \\
    &+\frac{\alpha}{3}\left( x_{n+1}+x_{n-1}-2x_{n} \right)\times
    \\
    &\phantom{\times}
    \left( x_{n+1}^{2}+x_{n}^{2}+x_{n-1}^{2}-x_{n+1}x_{n}-x_{n+1}x_{n-1}-x_{n}x_{n-1} \right)
    =h^{4}\beta.
    \end{aligned}
    \label{eq:discrtriv}
\end{equation}
This discretisation is variational by construction.
We argue that this kind of discretisation, even in
the non-integrable case might be convenient from
a numerical point of view.
This topic will be subject of future studies.

Finally, additive fourth-order difference equations are not the only
possible generalisation of second-order equations.
For instance in \cite{CapelSahadevan2001} several integrable
equations of \emph{multipliticative} form were derived:
\begin{equation}
    x_{n+2}x_{n-2} = F\left(x_{n+1},x_{n},x_{n-1}  \right).
    \label{eq:mult}
\end{equation}
In an upcoming paper we are addressing the problem of
giving necessary and sufficient conditions on the existence
of a variational structure of such equations and study their integrability
properties.

\section*{Acknowledgements}

GG would like to thank Prof. N. Joshi for the helpful discussions
during the preparation of this paper.

GG is supported by the Australian Research Council through 
Nalini Joshi's Australian Laureate Fellowship grant FL120100094.

\printbibliography


\end{document}